\newcommand{\dbtilde}[1]{\accentset{\approx}{#1}}
\newcommand{\F}{\mathcal{B}}
\newcommand{\LL}{\mathcal{L}}
\newcommand{\T}{\mathcal{T}}
\newcommand{\B}{\mathcal{B}}
\newcommand{\U}{\mathcal{U}}
\newcommand{\V}{\mathcal{V}}
\newcommand{\C}{\mathcal{C}}
\newcommand{\E}{\mathcal{E}}
\newcommand{\ones}{\mathbf{1}}
\theoremstyle{plain}
\newtheorem{theorem}{Theorem}[section]
\newtheorem{lemma}[theorem]{Lemma}
\theoremstyle{definition}
\newtheorem{definition}[theorem]{Definition}
\theoremstyle{remark}
\DeclareMathOperator*{\argmin}{arg\,min}
\title{Efficient Projection onto the Perfect Phylogeny Model}
\author{
  Bei Jia\thanks{Bei Jia is currently with Element AI.} \\
  jiabe@bc.edu\\
   \And
    Surjyendu Ray\\
  raysc@bc.edu\\
\And
\hspace{1cm}\\
\hspace{1cm}\\
  Boston College\\
\And
    Sam Safavi \\
  safavisa@bc.edu\\
\And
    Jos\'e Bento \\
 jose.bento@bc.edu\\
}
\begin{document}

\maketitle

\begin{abstract}
Several algorithms build on the \emph{perfect phylogeny model} to infer evolutionary trees.
This problem is particularly hard when evolutionary trees are inferred from the fraction of genomes that have mutations in different positions, across different samples. Existing
algorithms might do extensive searches over the space of possible trees.
At the center of these algorithms is a projection problem that assigns a fitness cost to phylogenetic trees. 
In order to perform a wide search over the space of the trees, it is critical to solve this projection problem fast.
In this paper, we use Moreau's decomposition for proximal operators, and a tree reduction scheme, to develop a new algorithm to compute this projection. Our algorithm terminates with an exact solution in a finite number of steps, and is extremely fast. In particular, it can search over all evolutionary trees with fewer than $11$ nodes, a size relevant for several biological problems (more than $2$ billion trees) in about $2$ hours.
\end{abstract}

%
%
%
%
%
\vspace{-0.29cm}
\section{Introduction}
\vspace{-0.20cm}
The perfect phylogeny model (PPM) \cite{hudson1983properties,kimura1969number} is used in biology to study evolving populations. It assumes that the same position in the genome never mutates twice, hence mutations only accumulate.

Consider a population of organisms evolving under the PPM. The evolution process can be described by a labeled rooted tree, $T = (r,\V,\E)$, where $r$ is the root, i.e., the common oldest ancestor, the nodes $\V$ are the mutants, and the edges $\E$ are mutations acquired between older and younger mutants.
Since each position in the genome only mutates once, we can associate with each node $v \neq r$, a unique mutated position, the mutation associated to the ancestral edge of $v$. By convention, let us associate with the root $r$, a null mutation that is shared by all mutants in $T$.
This allows us to refer to each node $v\in \V$ as both a mutation in a position in the genome (the mutation associated to the ancestral edge of $v$), and a mutant (the mutant with the fewest mutations that has a mutation $v$). Hence, without loss of generality, $\V = \{1,\dots,q\}$, $\E = \{2,\dots,q\}$, where 
$q$ is the length of the genome, and $r = 1$ refers to both the oldest common ancestor and the null mutation shared by all.

One very important use of the PPM is to infer how mutants of a common ancestor evolve  \cite{el2015reconstruction,el2016multi,jiao2014inferring,malikic2015clonality, popic2015fast,satas-raphael17}. A common type of data used for this purpose is the frequency, with which different~positions in the genome mutate across multiple samples, obtained, e.g., from whole-genome or targeted deep sequencing~\cite{schuh2012monitoring}.
Consider a sample $s$, one of $p$ samples, obtained at a given stage~of the evolution process. This sample has many mutants, some with the same genome, some with different genomes. Let $F \in \mathbb{R}^{q\times p}$ be such that $F_{v,s}$ is the fraction of genomes in $s$ with a mutation in position $v$ in the genome. Let $M \in \mathbb{R}^{q\times p}$ be such that $M_{v,s}$ is the fraction of mutant $v$ in $s$. By definition, the columns of $M$ must sum to $1$. 
Let $U \in  \{0,1\}^{q\times q}$ be such that $U_{v,v'} = 1$, if and only if mutant $v$ is an ancestor of mutant $v'$, or if $v=v'$. 
We denote the set of all possible $U$ matrices, $M$ matrices and labeled rooted trees $T$, by $\mathcal{U}$, $\mathcal{M}$ and $\T$, respectively. See Figure \ref{fig:ppm_illustration} for an illustration. The PPM implies 
\vspace{-0.1cm}
\begin{equation}\label{eq:PPM_model}
F = U M.
\end{equation}
Our work contributes to the problem of inferring clonal evolution from mutation-frequencies:
\emph{ How do we infer $M$ and $U$ from  $F$?}
Note that finding $U$ is the same as finding $T$ (see Lemma \ref{th:bijection_U_T}).
\vspace{-0.1cm}
\begin{figure}[h!] \label{fig:ppm_illustration}
\begin{center}
\includegraphics[trim={0.3cm 0 0 0},clip,width=1\columnwidth]{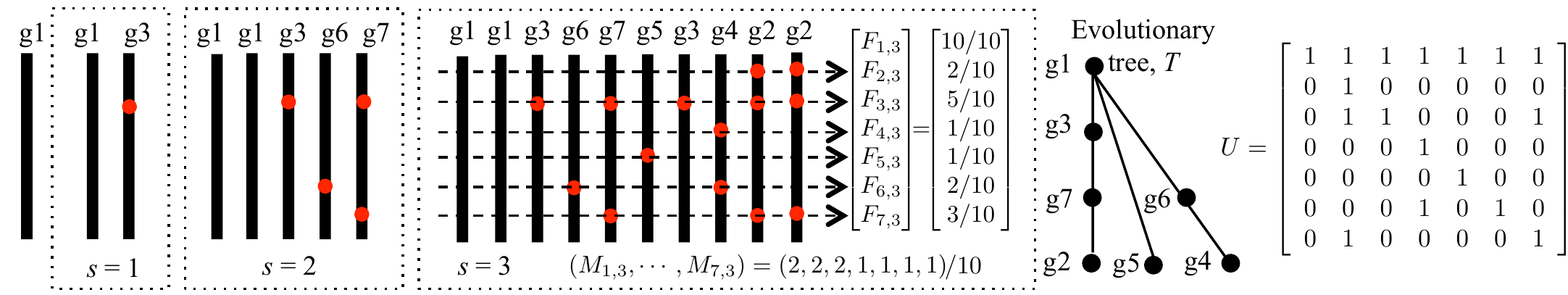}
\vspace{-0.5cm}
\caption{ \small Black lines are genomes. Red circles indicate mutations. g$i$ is the mutant with fewest mutations with position $i$ mutated. Mutation $1$, the mutation in the null position, $i = 1$, is shared by all mutants.  $g1$ is the organism before mutant evolution starts. In sample $s=3$, $2/10$ of the mutants are type $g2$, hence $M_{2,3} = 2/10$, and $3/10$ of the mutations occur in position $7$, hence $F_{7,3} = 3/10$. The tree shows the mutants' evolution.}
\vspace{-0.8cm}
\end{center}
\end{figure}

Although model \eqref{eq:PPM_model} is simple, simultaneously inferring $M$ and $U$ from 
$F$ can be hard \cite{el2015reconstruction}.
One popular inference approach is the following optimization problem over $U$, $M$ and $F$,
\vspace{-0.1cm}
\begin{align} \label{eq:projection_onto_PPM_with_tree}
& \hspace{0.3cm}\min_{U \in \mathcal{U}}  \C(U), \\
& \hspace{1.0cm}\C(U) = \min_{M,F \in \mathbb{R}^{q\times p}} \|\hat{F} - F\|   \text{ subject to } F = UM, M \geq 0,  M^\top \ones = \ones,\label{eq:projection_onto_PPM}
\vspace{-0.4cm}
\end{align}
where $\| \cdot \|$ is the Frobenius norm, and $\hat{F} \in \mathbb{R}^{q \times p}$ contains the measured fractions of mutations per position in each sample, which are known and fixed. In a nutshell, we want to project our measurement $\hat{F}$ onto the space of valid PPM models.

Problem \eqref{eq:projection_onto_PPM_with_tree} is a hard
mixed integer-continuous optimization problem.
To approximately solve  it, we might  
find a finite subset $\{U_i\} \subset \mathcal{U}$, that corresponds to a ``heuristically good'' subset of trees, $\{T_i\}\subset \T$, and, for each fixed  matrix $U_i$, solve  \eqref{eq:projection_onto_PPM}, which is a convex optimization problem. We can then return $T_x$, where $x  \in {\argmin_i \C(U_i)}$.
Fortunately, in many biological applications, e.g., \cite{el2015reconstruction,el2016multi,jiao2014inferring,malikic2015clonality,popic2015fast,satas-raphael17}, the reconstructed evolutionary tree involves a very small number of mutated positions, e.g., $q \leq 11$. In practice, a position $v$ might be an \emph{effective} position that is a cluster of multiple real positions in the genome. 
%
For a small $q$, we can compute $\C(U)$ for 
many trees, and hence approximate $M$, $U$, and get uncertainty measures for these estimates. This is important, since data is generally scarce and noisy.  

{\bf{Contributions:}}
(i) we propose a new algorithm to compute $\C(U)$ exactly in $\mathcal{O}(q^2 p)$ steps, the first non-iterative algorithm to compute $\C(U)$;
(ii) we compare its performance against state-of-the-art iterative algorithms, and observe a much faster convergence. In particular, our algorithm scales much faster than  $\mathcal{O}(q^2 p)$ in practice; (iii) we implement our algorithm on a GPU, and show that it computes the cost of all (more than $2$ billion) trees with $\leq 11$ nodes, in $\leq 2.5$ hours.

%
%
%
%
%
\vspace{-0.3cm}
\section{Related work}
\vspace{-0.3cm}

A problem related to ours, but somewhat different, is that of inferring a phylogenetic
tree from single-cell whole-genome sequencing data.
Given all the mutations in a set of mutants, the problem is to arrange the mutants in a phylogenetic tree, \cite{fernandez2001perfect,gusfield1991efficient}.
Mathematically, this corresponds to
inferring $T$ from partial or corrupted observation of $U$. If the PPM is assumed, and all the mutations of all the mutants are correctly observed, this problem can be solved in linear time, e.g., \cite{ding2006linear}. In general, this problem is equivalent to finding a minimum cost Steiner tree on a hypercube, whose nodes and edges represent mutants and mutations respectively, a problem known to be hard~\cite{garey2002computers}. 

We mention a few works on clonality inference, based on the PPM,  that try to infer both $U$ and $M$ from $\hat{F}$. No previous work solves problem \eqref{eq:projection_onto_PPM_with_tree} exactly in general, even for trees of size $q \leq 11$. Using our fast projection algorithm, we can solve \eqref{eq:projection_onto_PPM_with_tree} exactly by searching over all trees, if $q \leq 11$. Ref.~\cite{el2015reconstruction} (\emph{AncesTree}) 
 reduces the space of possible trees $\T$ to subtrees of a heuristically constructed DAG. 
The authors use the element-wise $1$-norm in \eqref{eq:projection_onto_PPM} and, after introducing more variables to linearize the product $UM$, reduce this search to solving a MILP, which they try to solve via branch and bound. Ref. \cite{malikic2015clonality} (\emph{CITUP}) searches the space of all \emph{unlabeled} trees, and, for each unlabeled tree, tries to solve an MIQP, again using branch and bound techniques, which finds a labeling for the unlabeled tree, and simultaneously minimizes the distance $\|\hat{F}-F\|$. Refs. \cite{jiao2014inferring} and \cite{deshwar2015phylowgs} (\emph{PhyloSub/PhyloWGS}), use a stochastic model to sample trees that are likely to explain the data. Their  model is based on \cite{ghahramani2010tree}, which  generates hierarchical clusterings of objects, and from which lineage trees can be formed. A score is then computed for these trees, and the highest scoring trees are returned.

Procedure \eqref{eq:projection_onto_PPM_with_tree} can be justified as MLE if we assume the stochastic model $\hat{F} = F + \mathcal{N}(0,I\sigma^2)$, where $F$, $U$ and $M$ satisfy the PPM model, and 
 $\mathcal{N}(0,I\sigma^2)$ represents additive, component-wise, Gaussian measurement noise, with zero mean and covariance $I\sigma^2$.
Alternative stochastic models can be assumed, e.g., as $M - U^{-1} \hat{F} = \mathcal{N}(0,I\sigma^2)$, where $M$ is non-negative and its columns must sum to one, and  
$\mathcal{N}(0,I\sigma^2)$ is as described before.
For this model, and for each matrix $U$, the cost $\C(U)$ is a projection of 
$U^{-1} \hat{F}$ onto the probability simplex $M \geq 0, M^\top \ones = \ones$. Several fast algorithms are known for this problem, e.g., 
\cite{condat2016fast,duchi2008efficient,gong2011efficient,liu2009efficient,michelot1986finite} and references therein.
In a $pq$-dimensional space, the exact projection onto the simplex can be done in $\mathcal{O}(qp)$ steps.
%

Our algorithm is the first to solve \eqref{eq:projection_onto_PPM} exactly in a finite number of steps. We can also use iterative methods to solve \eqref{eq:projection_onto_PPM}. One advantage
of our algorithm is that it has no tuning parameters, and requires no effort to check for convergence for a given accuracy.
Since iterative algorithms can converge very fast,  we numerically compare the speed of our algorithm with different implementations of the Alternating Direction Method of Multipliers (ADMM) \cite{boyd2011distributed}, which, if properly tuned, has a convergence rate that equals the fastest convergence rate among all first order methods~\cite{francca2016explicit}
under some convexity assumptions, and is known to produce good solutions for several other kinds of problems, even for non-convex ones \cite{hao2016testing,francca2017distributed,laurenceinvFBA2018,mathysparta,zoran2014shape,bento2015proximal,bento2013message}.

%
%
%
%
%
\vspace{-0.2cm}
\section{Main results}\label{sec:main_results}
\vspace{-0.2cm}

We now state our main results, and explain the ideas  behind their proofs. Detailed proofs can be found in the Appendix.

Our algorithm computes $\C(U)$ and minimizers of \eqref{eq:projection_onto_PPM}, resp. $M^*$ and $F^*$, by solving an equivalent problem.
Without loss of generality, we assume that $p = 1$, since, by squaring the objective in \eqref{eq:projection_onto_PPM}, it decomposes into $p$ independent problems.
Sometimes we denote $\C(U)$ by 
$\C(T)$, since given $U$, we can specify $T$, and vice-versa. Let $\bar{i}$ be the closest ancestor of $i$ in $T= (r,\V,\E)$. 
Let $\Delta i$ be the set of all the ancestors of $i$ in $T$, plus $i$. Let $\partial i$ be the set of children of $i$ in $T$.
\begin{theorem}[Equivalent formulation] \label{th:dual_prob}
Problem \eqref{eq:projection_onto_PPM} can be solved by solving
%
\vspace{-0.1cm}
\begin{align} 
& \min_{t\in \mathbb{R}} \;\; t + \mathcal{L}(t),\label{eq:dual_higher}\\[-0.4cm]
& \hspace{1.46cm}\mathcal{L}(t) = \min_{Z\in\mathbb{R}^q} \frac{1}{2}\sum_{i \in \V} (Z_i - Z_{\bar{i}})^2 \text{ subject to } Z_i \leq t - N_i \;,\forall i\in \V \label{eq:dual},
\end{align}
where $N_i= \sum_{j \in \Delta i} \hat{F}_j$, and, by convention, $Z_{\bar{i}} = 0$ for $i = r$.
In particular, if $t^*$ minimizes 
\eqref{eq:dual_higher}, 
$Z^*$ minimizes \eqref{eq:dual} for $t=t^*$, and $M^*,F^*$ minimize \eqref{eq:projection_onto_PPM}, then %
\begin{equation}\label{eq:Z_star_M_star_relation}
M^*_i = -Z^*_i + Z^*_{\bar{i}} + \sum_{r \in \partial i} (Z^*_r - Z^*_{\bar{r}}) \text{ and } F^*_i = -Z^*_i + Z^*_{\bar{i}}, \forall i\in \V.
\end{equation}
Furthermore, $t^*$, $M^*$, $F^*$ and $Z^*$ are unique.
\end{theorem}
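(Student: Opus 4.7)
My plan is to view problem~\eqref{eq:projection_onto_PPM} (for $p=1$, using $\tfrac12\|\hat F-F\|^{2}$, which has the same minimizers as $\|\hat F-F\|$) as the Euclidean projection of $\hat F$ onto the compact convex set $C := U\Delta^{q} = \{UM : M \ge 0,\, M^{\top}\ones = 1\}$, and then rewrite that projection via Moreau's decomposition followed by a tree-motivated change of variables, as signalled by the paper's abstract.

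First, I apply Moreau's identity to the indicator $I_{C}$, whose convex conjugate is the support function $\sigma_{C}$. This yields $\hat F - F^{*} = \mathrm{prox}_{\sigma_{C}}(\hat F)$, i.e., after substituting $F = \hat F - v$, $F^{*} \in \arg\min_{F}\ \tfrac12\|F\|^{2} + \sigma_{C}(\hat F - F)$. Because $C$ is the image under $U$ of the standard simplex, $\sigma_{C}(v) = \sup_{M \in \Delta^{q}}(U^{\top}v)^{\top}M = \max_{i\in\V}(U^{\top}v)_{i}$. Introducing an epigraph variable $t$ rewrites this as $\min_{F,t}\ t + \tfrac12\|F\|^{2}$ subject to $(U^{\top}(\hat F - F))_{i} \le t$ for all $i\in\V$.

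Second, I apply the bijective change of variables $Z = -U^{\top}F$. Since $U$ is invertible and $U^{\top}$ is lower-triangular after topological ordering, inverting row by row gives $F_{i} = Z_{\bar i} - Z_{i}$ with the convention $Z_{\bar r} = 0$. Three identities fall out: $\tfrac12\|F\|^{2} = \tfrac12\sum_{i\in\V}(Z_{i}-Z_{\bar i})^{2}$, because $F_{i} = -(Z_{i}-Z_{\bar i})$; $(U^{\top}\hat F)_{i} = \sum_{j\in\Delta i}\hat F_{j} = N_{i}$, so the constraint reads $Z_{i}\le t - N_{i}$; and the joint objective becomes $t + \tfrac12\sum_{i}(Z_{i}-Z_{\bar i})^{2}$, which is exactly~\eqref{eq:dual_higher}--\eqref{eq:dual} once the inner minimization over $Z$ is packaged as $\mathcal{L}(t)$. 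The formula $F^{*}_{i} = -Z^{*}_{i} + Z^{*}_{\bar i}$ follows by inverting the substitution, and the formula for $M^{*}_{i}$ in~\eqref{eq:Z_star_M_star_relation} follows from $M = U^{-1}F$, which reads $M_{i} = F_{i} - \sum_{c\in\partial i}F_{c}$, after observing $Z^{*}_{\bar c} = Z^{*}_{i}$ for every $c\in\partial i$.

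Uniqueness is essentially free: $F^{*}$ is the projection of $\hat F$ onto a non-empty closed convex set and is therefore unique; $M^{*} = U^{-1}F^{*}$ is unique because $U$ is invertible; $Z^{*} = -U^{\top}F^{*}$ is unique by the bijection; and $t^{*} = \max_{i\in\V}(N_{i} + Z^{*}_{i})$, because at the joint minimizer of~\eqref{eq:dual_higher}--\eqref{eq:dual} the constraint $Z_{i}\le t - N_{i}$ must be tight for at least one $i$, otherwise $t$ could be strictly decreased without violating feasibility. The main obstacle I expect is Step~1, namely identifying $\sigma_{C}$ in closed form and unwinding Moreau's identity into the epigraph formulation; once this is set up, everything else is bookkeeping on the tree.
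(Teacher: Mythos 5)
Your proposal is correct and follows essentially the same route as the paper: Moreau's decomposition applied to the indicator of $U\Delta^{q}$, identification of the conjugate as the support function $\max_{i}(U^{\top}\cdot)_{i}$, an epigraph variable $t$, and the change of variables $Z=-U^{\top}F$ together with $U^{-1}=I-T$ to obtain the tree-structured form and the recovery formulas. The paper phrases the same computation in terms of $g(\tilde M)$ and the auxiliary variable $Y=\hat F-F$, but the argument, including the uniqueness reasoning, is the same.
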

\vspace{-0.2cm}
Theorem \ref{th:dual_prob} comes from a dual form of \eqref{eq:projection_onto_PPM}, which we build using Moreau's decomposition \cite{moreau1962decomposition}.

%
%
\vspace{-0.3cm}
\subsection{Useful observations}\label{sec:useful_obs}

Let $Z^*(t)$ be the unique minimizer of \eqref{eq:dual} for some $t$. 
The main ideas behind our algorithm depend on a few simple properties of the paths $\{Z^*(t)\}$ and $\{\LL'(t)\}$, the derivative of $\LL(t)$ with respect to $t$. Note that $\mathcal{L}$ is also a function of $N$, as defined in Theorem \ref{th:dual_prob}, which depends on the input data $\hat{F}$.
\begin{lemma}\label{th:convexity_of_L_dual}
$\LL(t)$ is a convex function of $t$ and $N$. Furthermore, $\LL(t)$ is continuous in $t$ and $N$, and $\LL'(t)$ is non-decreasing with $t$. 
\end{lemma}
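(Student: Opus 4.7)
The plan is to derive all three claims from a single underlying observation: $\mathcal{L}(t,N)$ arises as a partial minimization of a jointly convex function over an affinely constrained convex set.

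First, I would write the problem in epigraph-style form. Set $g(Z) = \tfrac{1}{2}\sum_{i\in\V}(Z_i - Z_{\bar{i}})^2$, which is a convex quadratic in $Z$ since it is a sum of squares of affine functions of $Z$. Define the polyhedron $K = \{(Z,t,N)\in\mathbb{R}^q\times\mathbb{R}\times\mathbb{R}^q : Z_i - t + N_i \le 0,\; \forall i \in \V\}$; this is convex because it is the intersection of halfspaces defined by affine functions of $(Z,t,N)$ jointly. Let $\iota_K$ denote its indicator function. Then $h(Z,t,N) := g(Z) + \iota_K(Z,t,N)$ is jointly convex in $(Z,t,N)$, and by construction $\mathcal{L}(t,N) = \inf_{Z\in\mathbb{R}^q} h(Z,t,N)$.

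Next, I would invoke the standard fact that partial minimization of a jointly convex function over one block of variables yields a convex function in the remaining variables. To apply it, I first need $\mathcal{L}(t,N) > -\infty$ everywhere, which is immediate since $g \ge 0$, and I also need the feasible set to be non-empty for every $(t,N)$, which holds because $Z_i = t - N_i$ is always admissible. Thus $\mathcal{L}: \mathbb{R}\times\mathbb{R}^q \to \mathbb{R}$ is finite-valued and jointly convex in $(t,N)$. Continuity then comes for free: a finite-valued convex function on an open convex set in Euclidean space is continuous on that set, so $\mathcal{L}$ is continuous on all of $\mathbb{R}\times \mathbb{R}^q$.

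For the monotonicity of $\mathcal{L}'(t)$, I would fix $N$ and regard $t \mapsto \mathcal{L}(t,N)$ as a real-valued convex function on $\mathbb{R}$. Such a function has left and right derivatives at every point, both of which are non-decreasing in $t$, with equality $\mathcal{L}'_-(t) = \mathcal{L}'_+(t)$ except on an at most countable set. Hence $\mathcal{L}'(t)$ is non-decreasing wherever it exists; equivalently, every measurable selection from the subdifferential $\partial_t \mathcal{L}(t,N)$ is non-decreasing. If one wants genuine differentiability (to make ``$\mathcal{L}'(t)$'' unambiguous at every point), I would invoke Danskin's theorem: because Theorem \ref{th:dual_prob} guarantees that the minimizer $Z^*(t)$ in \eqref{eq:dual} is unique, the envelope theorem gives differentiability in $t$ with an explicit formula for $\mathcal{L}'(t)$ in terms of the optimal Lagrange multipliers of the inequality constraints.

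The main technical point to get right is the bookkeeping in step one: verifying that $K$ is convex as a subset of the joint space $\mathbb{R}^q\times\mathbb{R}\times\mathbb{R}^q$ (not just as a parametric family of polytopes in $Z$), because this joint convexity is what propagates convexity to $(t,N)$ after minimizing out $Z$. The rest is a direct application of convex-analysis boilerplate.
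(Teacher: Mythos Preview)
Your proposal is correct and takes essentially the same route as the paper: both cast $\mathcal{L}$ as the partial minimization over $Z$ of a jointly convex function in $(Z,t,N)$ (the quadratic objective plus the indicator of the polyhedral constraint set), and then read off convexity, continuity, and monotonicity of $\mathcal{L}'$ from standard convex-analysis facts. Your write-up is more careful about finiteness and differentiability than the paper's, but one small slip: Theorem~\ref{th:dual_prob} only asserts uniqueness of $Z^*$ at $t=t^*$, whereas the uniqueness you need for Danskin at general $t$ comes instead from strict convexity of the objective in $Z$ (the map $Z\mapsto (Z_i-Z_{\bar i})_{i\in\V}$ is injective under the convention $Z_{\bar r}=0$).
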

\begin{lemma}\label{th:continuity_of_Z_t}
$Z^*(t)$ is continuous as a function of $t$ and $N$. $Z^*(t^*)$ is continuous as a function of $N$.
\end{lemma}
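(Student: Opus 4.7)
My approach is a standard parametric quadratic programming argument. The objective in \eqref{eq:dual} is strongly convex in $Z$: ordering nodes so that ancestors precede descendants, the map $Z \mapsto Y$ defined by $Y_i = Z_i - Z_{\bar{i}}$ (with the convention $Z_{\bar{r}} := 0$) can be written as $Y = AZ$ for a lower triangular matrix $A$ with unit diagonal, hence invertible. Thus the objective equals $\tfrac{1}{2}\|AZ\|^2$, which yields the bound $\|Z^*(t, N)\|^2 \leq 2\mathcal{L}(t, N)/\sigma_{\min}(A)^2$. Combined with the continuity of $\mathcal{L}$ in $(t, N)$ from Lemma~\ref{th:convexity_of_L_dual}, this makes $Z^*$ locally bounded in $(t, N)$.

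Next, I would prove continuity of $Z^*$ in $(t, N)$ via a subsequence argument. Given any sequence $(t_n, N_n) \to (t, N)$, local boundedness supplies a convergent subsequence $Z^*(t_{n_k}, N_{n_k}) \to \bar{Z}$. Passing to the limit in the linear inequality constraints $Z_i \leq t_{n_k} - (N_{n_k})_i$ shows that $\bar{Z}$ is feasible for \eqref{eq:dual} at $(t, N)$; continuity of the objective together with Lemma~\ref{th:convexity_of_L_dual} yields $\tfrac{1}{2}\|A\bar{Z}\|^2 = \lim_k \mathcal{L}(t_{n_k}, N_{n_k}) = \mathcal{L}(t, N)$. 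Hence $\bar{Z}$ is optimal, and the uniqueness assertion of Theorem~\ref{th:dual_prob} forces $\bar{Z} = Z^*(t, N)$. Since every subsequence of $\{Z^*(t_n, N_n)\}$ has a further subsequence converging to the same limit, the whole sequence converges, establishing the first claim.

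For the second claim, I would first show that $t^*(N)$ is continuous. The function $h(t, N) := t + \mathcal{L}(t, N)$ is jointly continuous and convex in $t$ with unique minimizer (Theorem~\ref{th:dual_prob}). Coercivity in $t$ holds on both sides: as $t \to +\infty$, $h(t, N) \geq t \to \infty$; as $t \to -\infty$, feasibility of \eqref{eq:dual} forces $Z^*_r(t, N) \leq t - N_r \to -\infty$, and the $(Z^*_r)^2$ contribution to the objective makes $\mathcal{L}(t, N) \to +\infty$ quadratically, dominating the linear $t$. Restricting to a compact interval $[-R, R]$ that contains $t^*(N)$ uniformly for $N$ in any prescribed compact set, Berge's Maximum Theorem then gives continuity of $t^*(N)$. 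Composing with the continuity of $Z^*(t, N)$ established above yields the desired continuity of $Z^*(t^*(N), N)$ in $N$. The main technical points are the coercivity check and the local boundedness of $Z^*$; the convergence arguments themselves are routine.
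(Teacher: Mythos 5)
Your proof is correct, and it is worth comparing to the paper's. For the first claim your route is essentially the paper's idea made rigorous: the paper also argues from strict convexity of the quadratic objective plus continuity of $\mathcal{L}$ in $(t,N)$, asserting that nearby optimal values force nearby minimizers. Your version is more careful on two points the paper glosses over: you identify the objective as $\tfrac12\|AZ\|^2$ with $A$ unit lower triangular, hence \emph{strongly} convex (which is what actually licenses ``value close $\Rightarrow$ minimizer close''), and your subsequence argument correctly handles the fact that the feasible set itself moves with $(t,N)$, so one cannot naively compare $Z^*(t)$ and $Z^*(t')$ inside a single feasible region. For the second claim your route is genuinely different. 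The paper leaves the dual entirely: it recalls that $Z^*(t^*)$ corresponds, via Theorem~\ref{th:dual_prob}, to the solution $F^*$ of the primal problem \eqref{eq:projection_onto_PPM}, which is a Euclidean projection onto a convex polytope and hence nonexpansive in $\hat F$; continuity in $N=U^\top\hat F$ follows in two lines. You instead stay in the dual, prove continuity of $t^*(N)$ by a coercivity-plus-Berge (or equivalently another subsequence) argument, and compose with the joint continuity already established. Both are valid; the paper's is shorter because it exploits the primal--dual correspondence and the $1$-Lipschitz property of projections, while yours is more self-contained within the dual formulation and additionally yields continuity of $t^*(N)$ itself, a fact the paper never states but which is mildly informative. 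Your coercivity check (the root term contributes $Z_r^2 \geq (t-N_r)^2$ as $t\to-\infty$, dominating the linear term) is sound.
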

Let
$
\B(t) = \{i: Z^*(t)_i = t - N_i\},
$ 
i.e., the set of components of the solution at the boundary of \eqref{eq:dual}. Variables in $\B$ are called \emph{fixed}, and we call other variables \emph{free}. Free (resp. fixed) nodes are nodes corresponding to free (resp. fixed) variables. 
\begin{lemma}\label{th:B_piece_wise_constant}
$\B(t)$ is piecewise constant in $t$.
\end{lemma}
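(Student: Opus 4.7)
The plan is to treat \eqref{eq:dual} as a parametric QP in $t$ and characterize the active set by finitely many affine conditions in $t$, showing that it is constant on each of at most $2^q$ intervals.

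First I would verify that the quadratic objective $\frac{1}{2}\sum_{i\in\V}(Z_i - Z_{\bar i})^2$ is strictly convex in $Z$: with the convention $Z_{\bar r}=0$, the map $Z \mapsto (Z_i - Z_{\bar i})_{i\in\V}$ is a bijection, obtained by inverting recursively down the tree from $r$, so the quadratic form has full rank. In particular $Z^*(t)$ is unique, so $\B(t)$ is well-defined pointwise.

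Next, for each candidate subset $\B\subseteq\V$ I would introduce the auxiliary equality-constrained QP that imposes $Z_i = t - N_i$ for $i\in \B$ and drops the remaining inequalities. Strict convexity makes its KKT matrix nonsingular, so the primal solution $Z^{(\B)}(t)$ and the multipliers $\lambda^{(\B)}_i(t)$ for $i\in\B$ depend affinely on $t$. Writing the KKT conditions of \eqref{eq:dual} and invoking complementary slackness, $Z^*(t) = Z^{(\B)}(t)$ with $\B(t) = \B$ if and only if (i) $Z^{(\B)}_i(t) < t - N_i$ for every $i\notin\B$, and (ii) $\lambda^{(\B)}_i(t) \geq 0$ for every $i\in\B$. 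Both families of conditions are affine in $t$, so $I_\B := \{t : \B(t) = \B\}$ is an interval (possibly empty, a single point, or a half-line).

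Since $\V$ has only $2^q$ subsets and the $I_\B$'s partition $\mathbb{R}$ (because $\B(t)$ is single-valued by uniqueness of $Z^*(t)$), the real line is covered by at most $2^q$ intervals on which $\B(t)$ is constant, which is the desired piecewise-constant structure. The main subtlety I anticipate is handling the boundary behaviour between adjacent intervals: at a transition some $\lambda^{(\B)}_i(t)$ hits zero or an inactive constraint becomes tight, and two candidate sets $\B,\B'$ can both satisfy the weak KKT conditions. Here I would invoke the strict inequality in (i) together with the continuity of $Z^*(t)$ given by Lemma~\ref{th:continuity_of_Z_t} to show that each transition point lies in exactly one $I_\B$, so the decomposition is a genuine finite partition and $\B(t)$ is piecewise constant in $t$.
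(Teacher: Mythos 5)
Your proof is correct, but it takes a genuinely different route from the paper's. The paper disposes of this lemma in one line: by continuity of $Z^*(t)$ (Lemma~\ref{th:continuity_of_Z_t}), an inactive constraint $Z^*(t)_i \neq t - N_i$ stays inactive for $t'$ in a neighborhood of $t$; the global piecewise-constant structure, with at most $q+1$ pieces, then really comes from the monotonicity result $\B(t')\subseteq\B(t)$ for $t\le t'$ proved separately in Lemma~\ref{eq:B_always_grows}. You instead run the standard parametric-QP active-set argument: enumerate candidate active sets $\B$, solve the associated equality-constrained QP whose solution and multipliers are affine in $t$, and characterize $I_\B=\{t:\B(t)=\B\}$ by finitely many affine inequalities in $t$, hence as an interval. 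Your characterization via strict primal slackness for $i\notin\B$ and nonnegative multipliers for $i\in\B$ is sound (LICQ holds since the constraint gradients are coordinate vectors, so the multipliers are unique and your ``iff'' is genuine), and the strict inequality in condition (i) already resolves the boundary ambiguity you worry about in your last paragraph, since $\B(t)$ is by definition the exact set of tight constraints at the unique optimum. What each approach buys: yours is more self-contained and simultaneously delivers, for free, the affine dependence of $Z^*(t)$ on $t$ within each piece (content the paper derives separately in Lemmas~\ref{eq:decomposition_of_problem} and~\ref{th:Z_and_Lprime_are_piece_wise_linear}), and it is arguably more rigorous than the paper's one-liner, which by itself only shows local stability of the inactive set; on the other hand, your bound of $2^q$ pieces is far weaker than the $q+1$ the paper ultimately obtains from monotonicity of $\B(t)$, which is what the algorithm's complexity analysis actually needs.
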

Consider dividing the tree $T=(r,\V,\E)$ into subtrees, each with at least one free node,
 using $\B(t)$ as separation points. See Figure \ref{fig:tree_and_variables} in Appendix
 \ref{sec:app:further_illustrations} for an illustration. Each $i\in \B(t)$ 
 belongs to at most $\text{degree}(i)$ different subtrees, where $\text{degree}(i)$ is the degree of node $i$, and each
 $i \in \V \backslash \B(t)$ belongs exactly to one subtree.
  Let $T_1,\dots, T_k$ be the set of resulting (rooted, labeled) trees. Let $T_w = (r_w,\V_w,\E_w)$, where the root $r_w$ is the  closest node in $T_w$ to $r$.
We call $\{T_w\}$ the subtrees \emph{induced by} $\B(t)$. We define $\B_w(t) = \B(t) \cap \V_w$, and, when it does not create ambiguity, we drop the index $t$ in $\B_w(t)$. Note that different $\B_w(t)$'s might have elements in common.  Also note that, by construction, if $i \in \B_w$, then $i$ must be a leaf of $T_w$, or the root of $T_w$.

\begin{definition}\label{def:sub_problem_Tw_Bw}
The $(T_w,\F_w)$-problem is the optimization problem over $|\V_w \backslash \B(t)|$ variables
\begin{align}\label{eq:simpler_sub_problem}
&\min_{\{Z_j : j \in \V_w \backslash \B(t)\}} \;(1/2)\sum_{j \in \V_w } (Z_j - Z_{\bar{j}})^2,
\end{align}
where $\bar{j}$ is the parent of $j$ in $T_w$, $Z_{\bar{j}} = 0$ if ${j} = r_w$, and  $Z_j =Z^*(t)_j= t - N_j$ if $j \in \B_w(t)$.
\end{definition}
\begin{lemma}\label{eq:decomposition_of_problem}
Problem \eqref{eq:dual} decomposes into $k$ independent problems. In particular, the minimizers $\{Z^*(t)_j : j \in \V_w \backslash \B(t) \}$ are determined 
as the solution~of the $(T_w,\F_w)$-problem.
If $j \in \V_w$, then $Z^*(t)_j = c_1 t + c_2$ , where $c_1$ and $c_2$ depend on $j$ but not on $t$, and $0 \leq c_1 \leq 1$.
\end{lemma}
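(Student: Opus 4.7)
My plan has two parts: first, establish the decomposition of \eqref{eq:dual} into $k$ sub-problems, and then analyze one sub-problem to derive both the affine form of $Z^*(t)_j$ and the bound $0\le c_1\le 1$.

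For the decomposition, I would simply read off the structure of the objective: $\tfrac12\sum_{i\in\V}(Z_i-Z_{\bar i})^2$ consists of $|\V|$ terms, each involving only the two variables $Z_i$ and $Z_{\bar i}$ (with $Z_{\bar r}=0$). By construction of the subtrees induced by $\B(t)$, every edge of $T$ sits inside a unique $T_w$, and the ``virtual root edge'' at $r$ sits in the $T_w$ containing $r=r_w$. The extra fake root contribution $(Z_{r_w}-0)^2$ that Definition \ref{def:sub_problem_Tw_Bw} adds when $r_w\neq r$ is constant in the free variables, since then $r_w\in \B_w$ and $Z_{r_w}$ is fixed at $t-N_{r_w}$, so this term does not affect the minimization. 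With the values on $\B(t)$ pinned at $t-N_i$, what remains is a sum over $w$ of expressions each depending only on the free variables in $\V_w\setminus \B_w$, and the minimizers on these free variables therefore coincide with those of the individual $(T_w,\F_w)$-problems.

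For a single sub-problem, I would augment $T_w$ with an extra node $0$ joined to $r_w$, so that the sub-objective becomes $\tfrac12 Z^\top L Z$, where $L$ is the graph Laplacian of the augmented tree and $Z$ is indexed by $\V_w\cup\{0\}$. With $B'=\B_w\cup\{0\}$, the boundary vector $Z_{B'}=(t\ones-N_{\B_w},0)$ is affine in $t$. Since the augmented tree is connected and $B'\neq\varnothing$, the principal submatrix $L_{FF}$ on the free indices $F=\V_w\setminus \B_w$ is strictly positive definite, and the first-order condition gives
\begin{equation*}
Z^*_F(t) \;=\; -L_{FF}^{-1} L_{F,\B_w}\ones\cdot t \;+\; L_{FF}^{-1} L_{F,\B_w} N_{\B_w},
\end{equation*}
which is affine in $t$. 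So each component has the form $c_1 t+c_2$ with $c_1,c_2$ independent of $t$; on $\B_w$ one trivially has $c_1=1$.

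To bound $c_1\in[0,1]$, I would view $-L_{FF}^{-1} L_{F,\B_w}\ones$ as the restriction to $F$ of the discrete harmonic extension $u$ solving $(Lu)_F=0$ with Dirichlet data $u_{\B_w}=\ones$ and $u_0=0$. The discrete maximum principle on a connected graph then forces $\min_{B'} u \le u_F \le \max_{B'} u$ componentwise, yielding $0\le c_1\le 1$. The main obstacle is this last step: one must include the virtual node $0$ (with value $0$) in the boundary set, because otherwise the maximum-principle argument would recover only the trivial constant extension $u\equiv 1$ and miss the case $r_w=r$ free, in which the convention $Z_{\bar r}=0$ genuinely pulls $c_1$ strictly below $1$.
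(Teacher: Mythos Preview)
Your proposal is correct and rests on the same mathematical core as the paper's proof: the objective decouples over the subtrees once the fixed values on $\B(t)$ are plugged in, the first-order conditions make each free $Z_j$ a weighted average of its neighbors, and a maximum-principle argument then bounds the $t$-coefficient in $[0,1]$. The paper carries this out by hand: it writes the averaging identity $Z_j=\tfrac{1}{|\partial j|}\sum_{r\in\partial j}Z_r$, argues $c_1\ge 0$ by inspection of the recursion, and gets $c_1\le 1$ from an informal inequality of the form $\max_j Z_j\le \rho t+(1-\rho)\max_j Z_j+C$. Your route packages the same content in the language of graph Laplacians and discrete harmonic extensions, which buys you a cleaner and more rigorous bound: once you add the virtual node $0$ and recognize $c_1$ as the harmonic extension of the Dirichlet data $(\ones_{\B_w},0)$, the two-sided bound $0\le c_1\le 1$ is immediate from the standard discrete maximum principle, and you avoid the somewhat loose recursive inequality the paper uses. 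You are also more careful than the paper about the extra term $(Z_{r_w}-0)^2$ that Definition~\ref{def:sub_problem_Tw_Bw} introduces when $r_w\neq r$, correctly noting it is constant in the free variables since then $r_w\in\B_w$.
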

\begin{lemma}\label{th:Z_and_Lprime_are_piece_wise_linear}
$Z^*(t)$ and $\LL'(t)$ are piecewise linear and continuous in $t$. Furthermore, $Z^*(t)$ and $\LL'(t)$ change linear segments if and only if $\B(t)$ changes.
\end{lemma}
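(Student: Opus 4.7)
The plan is to leverage Lemma \ref{th:B_piece_wise_constant}, Lemma \ref{th:continuity_of_Z_t}, and Lemma \ref{eq:decomposition_of_problem} to establish the statement for $Z^*(t)$, and then derive the corresponding properties of $\LL'(t)$ via an envelope-theorem argument applied to a reparametrized version of \eqref{eq:dual}.

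First, for $Z^*(t)$, Lemma \ref{th:B_piece_wise_constant} partitions $\mathbb{R}$ into a locally finite family of maximal open intervals $\{I_k\}$ on each of which $\B(t)\equiv \B_k$ is constant. On each $I_k$, Lemma \ref{eq:decomposition_of_problem} yields $Z^*(t)_j = c_{1,j}\,t + c_{2,j}$, with constants that depend only on $j$ and on $\B_k$. Hence $Z^*(t)$ is affine on each $I_k$, and by Lemma \ref{th:continuity_of_Z_t} it is continuous across the breakpoints, so $Z^*(t)$ is piecewise linear and continuous overall.

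Next, for $\LL'(t)$, the plan is to reparametrize by $Y_i = Z_i - t$ for all $i\in\V$, which rewrites \eqref{eq:dual} as
\begin{align*}
\LL(t) \;=\; \min_{Y\in\mathbb{R}^q} \tfrac{1}{2}(Y_r + t)^2 + \tfrac{1}{2}\sum_{i\neq r}(Y_i - Y_{\bar{i}})^2 \text{ subject to } Y_i \leq -N_i,\ \forall i\in\V,
\end{align*}
a problem whose feasible set no longer depends on $t$. Since the minimizer $Y^*(t)=Z^*(t)-t\,\ones$ is unique by Theorem \ref{th:dual_prob} and the objective is $C^1$ jointly in $(Y,t)$, the envelope/Danskin theorem gives $\LL'(t)=Y_r^*(t)+t=Z_r^*(t)$. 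Continuity and piecewise linearity of $\LL'$ thus follow immediately from those of $Z_r^*$.

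It remains to characterize the breakpoints. The ``only if'' direction is immediate: on each $I_k$, both $Z^*$ and $\LL'$ are affine and do not change linear pieces. For the ``if'' direction, I would argue that at any $t_0$ where $\B$ changes there is some index $j$ that enters or leaves $\B$; say $j\in \B(t_0^+)\setminus \B(t_0^-)$, so that $Z_j^*(t)=t-N_j$ on a right neighborhood of $t_0$. If $Z_j^*$ had this same affine form on a left neighborhood, then by the very definition of $\B(t)$ we would have $j\in \B(t_0^-)$, contradicting the hypothesis; hence $Z^*(t)$ genuinely changes linear pieces at $t_0$. The main obstacle I foresee is transferring this local change at $j$ to the root component $Z^*_r$, so that $\LL'(t)=Z_r^*(t)$ is also forced to change slope at $t_0$: one would propagate the change through the $(T_w,\B_w)$-subproblem containing $j$ and use strict convexity of \eqref{eq:simpler_sub_problem} in the free variables to rule out an accidental cancellation of slopes along the path from $j$ to the root.
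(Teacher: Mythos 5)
Your treatment of $Z^*(t)$, and of the ``only if'' half of the equivalence, matches the paper's: combine Lemma \ref{th:B_piece_wise_constant} (the intervals on which $\B$ is constant), Lemma \ref{eq:decomposition_of_problem} (affine dependence on each such interval), and Lemma \ref{th:continuity_of_Z_t} (continuity across breakpoints), and observe that nothing can change linear segment while $\B(t)$ is constant. Your route to $\LL'(t)$ is genuinely different and is a nice observation: the paper writes $\LL(t)=\tfrac12\sum_{i\in\V}(Z^*(t)_i-Z^*(t)_{\bar i})^2$ and argues that the derivative of a continuous piecewise-quadratic composition of a piecewise-linear path is piecewise linear, whereas your shift $Y=Z-t\ones$ plus the envelope theorem yields the identity $\LL'(t)=Z^*_r(t)$ (hence $\LL''=Z'^*_r$), which is correct and can be checked against Lemma \ref{th:computing_ddL_from_dZ} on small examples. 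It buys you something real: continuity of $\LL'$ at breakpoints follows for free from continuity of $Z^*_r$, whereas the composition argument by itself only gives one-sided derivatives that could a priori disagree at a kink (the paper leans on convexity and glosses over this). Your argument that a node entering or leaving $\B$ forces its own component of $Z^*$ to change affine piece is also complete and correct.

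The obstacle you flag at the end is, however, a genuine gap, and your own reduction $\LL'=Z^*_r$ makes it concrete: if a node $j$ far from the root newly enters $\B(t_0)$, nothing you have written forces the \emph{root} component $Z^*_r$ to change slope, so the ``$\B$ changes $\Rightarrow \LL'$ changes segment'' direction is unproven. One way to close it: in any induced subtree $T_w$ whose root $r_w$ lies in $\B$, all boundary data move at rate $1$, so by the first-order (averaging) conditions every free node of $T_w$ also has rate exactly $1$; its slack $Z^*_j(t)-(t-N_j)$ is then a nonpositive constant, so such a node is either already in $\B$ or never enters it. Hence any node that newly enters $\B$ lies in the subtree containing the still-free global root and is joined to $r$ by a path of free nodes, and raising its boundary rate from $c_1<1$ to $1$ strictly raises the harmonic extension (i.e., the rate) at $r$ --- e.g., by the random-walk interpretation, the probability of absorption on the rate-$1$ boundary before the ground strictly increases --- so $\LL''=Z'^*_r$ strictly increases and $\LL'$ changes segment. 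In fairness, the paper's own proof of this direction is a one-sentence assertion (``since the particular linear dependency of $Z^*$ depends on $\B(t)$, it follows that\ldots''), so you are not behind it here; note also that only the ``only if'' direction is load-bearing for the correctness of Algorithm \ref{alg:projection}, since a change of $\B$ at which the slopes happened not to change would merely be a harmless spurious critical value.
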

\begin{lemma}\label{eq:B_always_grows}
If $t \leq t'$, then $\B(t') \subseteq \B(t)$. In particular, $\B(t)$ changes at most $q$ times with $t$.
\end{lemma}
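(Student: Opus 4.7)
The plan is to introduce the slack variables $W_i(t) := t - N_i - Z^*(t)_i$. By feasibility of $Z^*(t)$ in \eqref{eq:dual} we have $W_i(t) \geq 0$, and by definition $i \in \B(t)$ if and only if $W_i(t) = 0$. I would then show that, for every fixed $i \in \V$, the map $t \mapsto W_i(t)$ is non-decreasing. Granting this, the inclusion follows at once: if $i \in \B(t')$ and $t \leq t'$, then $0 \leq W_i(t) \leq W_i(t') = 0$, so $W_i(t) = 0$ and therefore $i \in \B(t)$.

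To prove the monotonicity of $W_i$, I would combine Lemmas~\ref{th:continuity_of_Z_t}, \ref{eq:decomposition_of_problem}, and~\ref{th:Z_and_Lprime_are_piece_wise_linear}. Fix any open interval on which $\B(t)$ is constant; Lemma~\ref{eq:decomposition_of_problem} guarantees that on this interval $Z^*(t)_i = c_1 t + c_2$ with $0 \leq c_1 \leq 1$, where $c_1,c_2$ depend on $i$ but not on $t$. Substituting into the definition of $W_i$ gives $W_i(t) = (1 - c_1)t - N_i - c_2$, a linear function of $t$ with non-negative slope $1 - c_1$. Hence $W_i$ is non-decreasing on each piece. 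Lemma~\ref{th:continuity_of_Z_t} makes $W_i$ continuous in $t$, and Lemma~\ref{th:Z_and_Lprime_are_piece_wise_linear} makes it piecewise linear; a continuous piecewise linear function with non-negative slope on every piece is globally non-decreasing.

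For the second claim, the first part implies that $|\B(t)|$ is a non-increasing, integer-valued function of $t$ taking values in $\{0,1,\dots,q\}$. Whenever $\B(t)$ changes it must strictly lose at least one element, and so $|\B(t)|$ strictly decreases. Thus $\B$ can change at most $q$ times as $t$ ranges over $\mathbb{R}$.

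The only mildly delicate point is to justify monotonicity of $W_i$ across the breakpoints where $\B(t)$ jumps, since the linearization $Z^*(t)_i = c_1 t + c_2$ produced by Lemma~\ref{eq:decomposition_of_problem} is only valid on a single piece. Continuity of $Z^*(t)$ (Lemma~\ref{th:continuity_of_Z_t}) resolves this: it is exactly what is needed to glue together non-decreasing linear segments into a globally non-decreasing function. I do not foresee any real obstacle beyond this bookkeeping.
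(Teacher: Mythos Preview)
Your proof is correct and follows essentially the same route as the paper: both hinge on the bound $c_1 \leq 1$ from Lemma~\ref{eq:decomposition_of_problem}, which forces the slack $t - N_i - Z^*(t)_i$ to be non-decreasing in $t$ on each interval where $\B(t)$ is constant, and continuity (Lemma~\ref{th:continuity_of_Z_t}) glues the pieces together. The paper phrases this as a contradiction argument (if some $j$ left $\B$ as $t$ decreases, the linear piece of $Z^*_j$ would need slope $c_1>1$), whereas you give the direct monotonicity version; the content is identical.
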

\begin{lemma}\label{th:Z_and_Lprime_finite_break_points}
$Z^*(t)$ and $\LL'(t)$ have less than $q+1$ different linear segments.
\end{lemma}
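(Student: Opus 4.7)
The plan is to bound the number of linear segments by a chain-counting argument in the Boolean lattice $2^{\V}$, combining Lemmas~\ref{th:Z_and_Lprime_are_piece_wise_linear} and~\ref{eq:B_always_grows}, and then squeezing an extra unit out of the naive bound.

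First, Lemma~\ref{th:Z_and_Lprime_are_piece_wise_linear} reduces the problem to counting the distinct values of the set-valued map $t\mapsto \B(t)$: since the breakpoints of the continuous piecewise-linear functions $Z^*(t)$ and $\LL'(t)$ coincide with the jumps of $\B(t)$, the number of linear segments of each function is exactly $|\{\B(t):t\in\mathbb{R}\}|$. Next, Lemma~\ref{eq:B_always_grows} gives monotonicity, $\B(t')\subseteq\B(t)$ for $t\leq t'$, so the distinct values form a strictly descending chain
\[
\B_0 \supsetneq \B_1 \supsetneq \cdots \supsetneq \B_k
\]
in $2^{\V}$. Because cardinalities strictly decrease along the chain and lie in $\{0,1,\dots,q\}$, this already yields $k+1 \leq q+1$ segments, the natural but still weaker bound.

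To refine this to strictly less than $q+1$, I would pin down the two endpoints of the chain and rule out that both extremes $\V$ and $\emptyset$ are attained. The bottom endpoint is easy: for $t>\max_i N_i$ the unconstrained minimizer $Z=0$ is interior-feasible for~(\ref{eq:dual}), so $\B(t)=\emptyset$ always sits at the bottom. The saving therefore has to come from the top: I would argue that $\B(t)=\V$ cannot occur, so the chain starts at a proper subset of $\V$ and has at most $q$ elements. Concretely, if $\B(t)=\V$ held for some $t$, then $Z^*_i=t-N_i$ for every $i\in\V$, and the KKT stationarity condition for~(\ref{eq:dual}) at the root $r$ (where $Z_{\bar r}=0$ rather than being a variable) together with non-negativity of the root's multiplier forces a sign relation on the multipliers that is inconsistent with the transition $\V\to\V\setminus\{r'\}$ for any $r'$ as $t$ grows. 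In short, at the would-be largest breakpoint, at least two indices must leave $\B(t)$ simultaneously, collapsing two consecutive cardinality levels into one jump and reducing the chain length to at most $q$.

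The main technical obstacle is precisely this extra unit of saving: Lemmas~\ref{th:Z_and_Lprime_are_piece_wise_linear} and~\ref{eq:B_always_grows} taken in isolation only deliver at most $q+1$ segments, and the strict inequality in the statement requires additional structure of~(\ref{eq:dual}). I would access that structure through the subtree decomposition of Lemma~\ref{eq:decomposition_of_problem}: applying it to the tree $T$ with $\B(t)=\V$ makes every edge its own trivial $(T_w,\F_w)$-problem and pins $Z^*(t)$ to $t\ones-N$, so the Lagrange multipliers across the induced subtrees satisfy a linear system whose feasibility for $t$ beyond a single critical value is impossible. Combining this exclusion with the monotone-chain bound from the first two steps yields $k+1\leq q$, establishing that $Z^*(t)$ and $\LL'(t)$ have strictly fewer than $q+1$ distinct linear segments.
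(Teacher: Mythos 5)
Your chain-counting argument---reduce to counting the distinct values of $\B(t)$ via Lemma \ref{th:Z_and_Lprime_are_piece_wise_linear}, then use the monotonicity from Lemma \ref{eq:B_always_grows} to bound the length of the resulting descending chain in $2^{\V}$---is exactly the paper's proof. The paper stops there: its argument is the one-line combination of those two lemmas and does not attempt the extra ``saving of one unit'' that occupies the second half of your write-up.

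That refinement contains a genuine error: the claim that $\B(t)=\V$ cannot occur is false. For $t$ sufficiently negative every constraint $Z_i\le t-N_i$ becomes active (the paper's Algorithm \ref{alg:projection} even has an explicit termination branch, case (b), for ``all the variables are in $\B$''). Already for $q=1$ the solution of \eqref{eq:dual} is $Z^*(t)=\min(0,\,t-N_1)$, so $\B(t)=\V=\{1\}$ for all $t<N_1$ and there are exactly $2=q+1$ linear segments; a two-node chain likewise realizes $3=q+1$ segments by passing through the cardinalities $0,1,2$. Your fallback claim---that at the largest breakpoint at least two indices must enter $\B$ simultaneously---also fails, since $\B(t_1)=\arg\max_s N_s$ is a singleton whenever the maximum of $N$ is unique. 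In short, the strict inequality you are trying to manufacture is not actually available: read literally, the lemma's ``less than $q+1$'' should be ``at most $q+1$,'' which is what the two cited lemmas deliver and all that the algorithm's analysis uses (the for-loop bound plus the break conditions). The first two steps of your argument already constitute the intended proof; the third step should be dropped rather than repaired.
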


%
%
%
%
%

\vspace{-0.4cm}
\subsection{The Algorithm}\label{sec:main_alg}

In a nutshell, our algorithm computes the solution path $\{Z^*(t)\}_{t \in \mathbb{R}}$ and the  derivative $\{\LL'(t)\}_{t \in \mathbb{R}}$. From these paths, it finds the unique  $t^*$, at which 
\begin{equation}\label{eq:condition_for_t}
{\rm d}(t + \mathcal{L}(t))/{{\rm d}t} =0\rvert_{t = t^*} \Leftrightarrow
\mathcal{L}'(t^*) = -1.
\end{equation}
It then evaluates the path $Z^*(t)$ at $t =t^*$, and uses this value, along with \eqref{eq:Z_star_M_star_relation}, to find $M^*$ and $F^*$, the unique minimizers of \eqref{eq:projection_onto_PPM}. Finally, we compute $\C(T) = \|\hat{F} - F^*\|$.

We know that $\{Z^*(t)\}$ and $\{\LL'(t)\}$ are continuous piecewise linear, with a finite number of different linear segments (Lemmas \ref{th:Z_and_Lprime_are_piece_wise_linear}, \ref{eq:B_always_grows} and \ref{th:Z_and_Lprime_finite_break_points}).
Hence, to describe   
$\{Z^*(t)\}$ and $\{\LL'(t)\}$, we only need to evaluate them at the \emph{critical values}, $t_1>t_2 >  \dots >t_k$, at which $Z^*(t)$ and $\LL'(t)$ change linear segments. 
We will later use Lemma \ref{th:Z_and_Lprime_are_piece_wise_linear} 
as a criteria to find the critical values. Namely, $\{t_i\}$ are the values of $t$ at which, as $t$ decreases, new variables become fixed, and $\mathcal{B}(t)$ changes. Note that variables never become free once fixed, by Lemma \ref{eq:B_always_grows}, which also implies that $k \leq q$.

The values $\{Z^*(t_i)\}$ and $\{\LL'(t_i)\}$ are computed sequentially as follows. If $t$ is very large, the constraint in \eqref{eq:dual} is not active, and $Z^*(t) = \LL(t) = \LL'(t)=0$. Lemma \ref{th:Z_and_Lprime_are_piece_wise_linear} tells us that, as we decrease $t$, the first critical value is the largest $t$ for which this constraint becomes active, and at which $\mathcal{B}(t)$ changes for the first time. Hence, if
$i= 1$, we have $t_i = \max_s\{N_s\}$, $Z^*(t_i) = \LL'(t_i) = 0$, and $\B(t_i) = \arg\max_s\{N_s\}$.
Once we have $t_i$,  we compute the rates $Z'^*(t_i)$ and $\LL''(t_i)$ from $\B(t_i)$ and $T$, as explained in Section \ref{sec:computing_rates}. Since the paths are piecewise linear, derivatives are not defined at critical points. Hence, here, and throughout this section, these derivatives are taken from the left, i.e., $Z'^*(t_i) = \lim_{t \uparrow t_i} (Z^*(t_i) - Z^*(t))/(t_i - t)$ and $\LL''(t_i) = \lim_{t \uparrow t_i} (\LL'(t_i) - \LL'(t))/(t_i - t)$.

Since $Z'^*(t)$ and $\LL''(t)$ are constant for $t\in (t_{i+1}, t_{i}]$, for 
$t\in (t_{i+1}, t_i]$ we have
\begin{align} \label{eq:liner_relation_between_critical_points}
Z^*(t) = Z^*(t_i) + (t - t_i)Z'^*(t_i), \quad \LL'(t) = \LL'(t_i) + (t - t_i)\LL''(t_i),
\end{align}
and the next critical value, $t_{i+1}$, is the largest $t < t_i$, for which new variables become fixed, and $\B(t)$ changes. The value $t_{i+1}$ is found by solving for $t<t_i$ in 
\begin{equation}\label{eq:formula_for_next_t}
Z^*(t)_r = Z^*(t_i)_r + (t - t_i)Z'^*(t_i)_r = t - N_r,
\end{equation}
and keeping the largest solution among  all $r \notin \B$.
Once $t_{i+1}$ is computed, we update $\B$ with the new variables that became fixed, and we obtain $Z^*(t_{i+1})$ and $\LL'(t_{i+1})$ from \eqref{eq:liner_relation_between_critical_points}.
The process then repeats.

By Lemma \ref{th:convexity_of_L_dual}, $\LL'$ never increases. Hence, we stop this process (a) 
as soon as $\LL'(t_i) < -1$, or (b) when all the variables are in $\B$, and thus there are no more critical values to compute. If (a), let $t_k$ be the last critical value with $\LL'(t_k) > -1$, and if (b), let $t_k$ be the last computed critical value. We use $t_k$ and \eqref{eq:liner_relation_between_critical_points} to compute $t^*$, at which $\LL'(t^*) = -1$ and also $Z^*(t^*)$. From $Z^*(t^*)$ we then compute $M^*$ and $F^*$ and $\C(U) = \|\hat{F} - F^*\|$.

The algorithm is shown compactly in Alg. \ref{alg:projection}. Its inputs are $\hat{F}$ and $T$, represented, e.g., using a linked-nodes data structure. Its outputs are minimizers to \eqref{eq:projection_onto_PPM}. It makes use of a procedure \emph{ComputeRates}, which we will explain later. This procedure terminates in $\mathcal{O}(q)$ steps and uses $\mathcal{O}(q)$ memory. Line \ref{alg:line:next_critical_value} comes from solving \eqref{eq:formula_for_next_t} for $t$. In line \ref{alg:line:return}, the symbols $M^*(Z^*, T)$ and $F^*(Z^*, T)$ remind us that $M^*$ and $F^*$ are computed from $Z^*$ and $T$ using \eqref{eq:Z_star_M_star_relation}.
The correctness of Alg. \ref{alg:projection} follows from the Lemmas in Section \ref{sec:useful_obs}, and the explanation above. In particular, since there are at most $q+1$ different linear regimes, the bound $q$ in the for-loop does not prevent us from finding any critical value. Its time complexity is $\mathcal{O}(q^2)$, since each line completes in $\mathcal{O}(q)$ steps, and is executed at most $q$ times. 
\begin{theorem}[Complexity]\label{th:main_alg_complexity}
Algorithm \ref{alg:projection} finishes in $\mathcal{O}(q^2)$ steps, and requires $\mathcal{O}(q)$ memory.
\end{theorem}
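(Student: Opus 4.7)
The plan is to prove Theorem \ref{th:main_alg_complexity} by separately bounding (a) the number of iterations of the outer loop of Alg. \ref{alg:projection}, (b) the cost of a single iteration, and (c) the persistent memory footprint. Since the authors already sketch the argument in one line, my job is just to verify that each of the ingredients invoked is actually justified by the lemmas in Section \ref{sec:useful_obs}.

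For the iteration count, I would cite Lemma \ref{eq:B_always_grows}, which guarantees $\B(t') \subseteq \B(t)$ whenever $t \leq t'$, so $|\B(t)|$ is monotone nondecreasing as $t$ decreases, and changes at most $q$ times. Combined with Lemma \ref{th:Z_and_Lprime_are_piece_wise_linear} (the segments of $Z^*(t)$ and $\LL'(t)$ change iff $\B(t)$ does) and Lemma \ref{th:Z_and_Lprime_finite_break_points} (there are at most $q+1$ segments), this bounds the number of critical values $t_i$ produced by the algorithm by $q$. The early-exit on $\LL'(t_i) < -1$ can only make this smaller, so the for-loop runs at most $q$ times.

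For the per-iteration cost, I would walk through the body of the loop. The call to \emph{ComputeRates} is $\mathcal{O}(q)$ by the complexity claim in Section \ref{sec:computing_rates}, which I would treat as a black box here (its detailed justification belongs to that section). Line \ref{alg:line:next_critical_value} evaluates \eqref{eq:formula_for_next_t} for each $r \notin \B$ and retains the largest feasible $t$; this is at most $q$ candidates, each requiring only the already-computed values $Z^*(t_i)_r$, $Z'^*(t_i)_r$ and $N_r$, hence $\mathcal{O}(1)$ per candidate and $\mathcal{O}(q)$ overall. Updating $\B$ and performing the linear extrapolations in \eqref{eq:liner_relation_between_critical_points} to obtain $Z^*(t_{i+1})$ and $\LL'(t_{i+1})$ touch $\mathcal{O}(q)$ coordinates. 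Multiplying the $\mathcal{O}(q)$ per-iteration cost by the $\mathcal{O}(q)$ iteration bound yields $\mathcal{O}(q^2)$. The final post-loop step of computing $t^*$ from $\LL'(t_k)$ and $\LL''(t_k)$, then $Z^*(t^*)$, then $M^*$ and $F^*$ via \eqref{eq:Z_star_M_star_relation}, and finally $\C(U) = \|\hat{F} - F^*\|$, is $\mathcal{O}(q)$ and does not affect the asymptotic bound.

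For the memory bound, I would enumerate the persistent state: the tree $T$ stored as linked nodes is $\mathcal{O}(q)$; the arrays $\hat{F}$, $N$, the current $Z^*(t_i)$, its rate $Z'^*(t_i)$, and the membership indicator for $\B$ are each $\mathcal{O}(q)$; the scalars $t_i$, $\LL'(t_i)$ and $\LL''(t_i)$ are $\mathcal{O}(1)$; and the outputs $M^*$, $F^*$ are $\mathcal{O}(q)$. Because the algorithm overwrites the state at each iteration and never stores the whole history of critical values, nothing else accumulates. Together with the $\mathcal{O}(q)$ working memory claimed for \emph{ComputeRates}, the total is $\mathcal{O}(q)$. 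The main obstacle is really deferred: the entire complexity hinges on \emph{ComputeRates} genuinely running in $\mathcal{O}(q)$ time and space, which I would verify separately when that procedure is described, and on line \ref{alg:line:next_critical_value} having constant-time access to the rate $Z'^*(t_i)_r$ produced by that procedure.
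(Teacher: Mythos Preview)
Your proposal is correct and follows essentially the same line-by-line accounting as the paper's own proof: at most $q$ iterations of the loop, each costing $\mathcal{O}(q)$ (dominated by \emph{ComputeRates} and the scan in line~\ref{alg:line:next_critical_value}), plus $\mathcal{O}(q)$ pre- and post-processing, with only the current iterate's $\mathcal{O}(q)$ state retained. The only items the paper adds that you omit are the explicit $\mathcal{O}(q)$ DFS for computing $N$ in line~\ref{alg:line:compute_N_from_F} and the observation that evaluating $M^*$ via \eqref{eq:Z_star_M_star_relation} costs $\mathcal{O}(\sum_i |\partial i|)=\mathcal{O}(|\E|)=\mathcal{O}(q)$; neither affects the bound, and the loop count is in any case immediate from the explicit range \texttt{for $i=1$ to $q$}.
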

\begin{theorem}[Correctness]\label{th:alg_correctness}
Algorithm \ref{alg:projection} outputs the solution to \eqref{eq:projection_onto_PPM}.
\end{theorem}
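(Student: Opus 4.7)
The plan is to reduce correctness of Algorithm \ref{alg:projection} to the claim that it correctly identifies $t^*$ and $Z^*(t^*)$: by Theorem \ref{th:dual_prob}, applying \eqref{eq:Z_star_M_star_relation} to these values yields the unique minimizers $M^*$ and $F^*$ of \eqref{eq:projection_onto_PPM}. The characterization of $t^*$ itself is immediate: $t + \LL(t)$ is convex and its minimizer is unique (Lemma \ref{th:convexity_of_L_dual} and Theorem \ref{th:dual_prob}), so $t^*$ is the unique root of $\LL'(t) = -1$ as in \eqref{eq:condition_for_t}; since $\LL'$ is non-decreasing in $t$, a single downward sweep of $t$ suffices to locate it.

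The core of the proof is an inductive argument, over $i$, that $t_i$, $\B(t_i)$, $Z^*(t_i)$, and $\LL'(t_i)$ are all computed correctly by the main loop. For the base case, at sufficiently large $t$ the choice $Z = 0$ satisfies every constraint strictly, so $Z^*(t) = 0$ and $\LL'(t) = 0$; as $t$ decreases, the first constraint to bind is the one with the largest $N_s$, which gives $t_1 = \max_s N_s$ and $\B(t_1) = \arg\max_s N_s$, matching the algorithm's initialization. For the inductive step, assume the values at $t_i$ are correct. By Lemma \ref{th:Z_and_Lprime_are_piece_wise_linear}, $Z^*$ and $\LL'$ are affine on $(t_{i+1}, t_i]$, so \eqref{eq:liner_relation_between_critical_points} is valid with slopes supplied by \emph{ComputeRates}, whose correctness is deferred to Section \ref{sec:computing_rates}. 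Lemmas \ref{th:Z_and_Lprime_are_piece_wise_linear} and \ref{eq:B_always_grows} identify $t_{i+1}$ as the largest $t < t_i$ at which some currently free variable reaches its upper bound $t - N_r$, which is precisely the rule implemented via \eqref{eq:formula_for_next_t}. Lemma \ref{eq:B_always_grows} also bounds the iteration count by $q$, so the loop terminates correctly.

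For the final step, since $\LL'$ is non-decreasing in $t$ its values decrease monotonically as the sweep moves through critical points, so either $\LL'$ crosses $-1$ between two consecutive critical values (case (a)) or $\B$ fills up first (case (b)). In case (a), $\LL'$ is affine on the interval containing $t^*$, so $t^*$ is the unique solution of a scalar linear equation and $Z^*(t^*)$ follows from \eqref{eq:liner_relation_between_critical_points}; case (b) is handled analogously, using the slopes valid beyond the last computed critical value. Applying \eqref{eq:Z_star_M_star_relation} at $Z^*(t^*)$ then delivers the required $M^*$ and $F^*$, and $\C(U) = \|\hat{F} - F^*\|$ is read off directly.

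The main obstacle will be the inductive step, and within it the claim that \eqref{eq:formula_for_next_t} genuinely captures the first change of $\B$ below $t_i$. The key ingredient is the slope bound $0 \leq c_1 \leq 1$ from Lemma \ref{eq:decomposition_of_problem}: each free component $Z^*(t)_r$ evolves with slope at most $1$, while its upper bound $t - N_r$ has slope exactly $1$, so the gap $(t - N_r) - Z^*(t)_r$ is monotone non-increasing as $t$ decreases and vanishes exactly at a boundary crossing. Combined with the monotonicity $\B(t') \subseteq \B(t)$ for $t \leq t'$ (Lemma \ref{eq:B_always_grows}), this pins down correctness of the update rule. The remaining correctness of \emph{ComputeRates} is only referenced, not re-proved here.
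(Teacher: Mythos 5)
Your proposal is correct and follows essentially the same route as the paper's proof: both verify step-by-step that the algorithm's computed quantities ($t_i$, $\B(t_i)$, $Z^*(t_i)$, $\LL'(t_i)$) are the true ones, relying on piecewise linearity (Lemma \ref{th:Z_and_Lprime_are_piece_wise_linear}), monotone growth of $\B$ (Lemma \ref{eq:B_always_grows}), monotonicity of $\LL'$ and uniqueness of $t^*$ (Lemma \ref{th:convexity_of_L_dual}, Theorem \ref{th:dual_prob}), correctness of \emph{ComputeRates} (Theorem \ref{th:compute_rates_main_theorem}), and the final map \eqref{eq:Z_star_M_star_relation}. Your explicit inductive framing and the slope-bound argument ($0 \leq c_1 \leq 1$) justifying that \eqref{eq:formula_for_next_t} captures the first boundary crossing are a slightly more detailed packaging of what the paper delegates to Lemma \ref{eq:B_always_grows}, but the substance is the same.
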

\vspace{-0.4cm}
\begin{algorithm}[h!]
\caption{Projection onto the PPM (input: $T$ and $\hat{F}$; output: $M^*$ and $F^*$)} \label{alg:projection}
\begin{algorithmic}[1]
	\State $N_i = \sum_{j \in \Delta i} \hat{F}_j$, for all $i\in \V$ 
    \label{alg:line:compute_N_from_F}
    \Comment{This takes $\mathcal{O}(q)$ steps using a DFS, see proof of Theorem \ref{th:main_alg_complexity}}
	\State $i = 1$, $t_i = \max_r\{N_r\}$, $\B(t_i) = \arg \max_r\{N_r\}$, $Z^*(t_i) = {\bf 0}$,  $\LL'(t_i) = {0}$.\label{alg:line:init}
    \Comment{Initialize}
	\For{$i = 1$ to $q$} \label{alg:line:forloop}
		\State $(Z'^*(t_i),\LL''(t_i)) = \text{ComputeRates}(\B(t_i),T)$
                \Comment{Update rates of change}\label{alg:line:main_alg_compute_rates}
        \State $P = \{P_r : P_r = \frac{N_r + Z^*(t_i)_r - t_i Z'^*(t_i)_r}{1 - Z'^*(t_i)_r} \text{ if } r \notin \B(t_i), t_r < t_i, \text{ and } P_r = -\infty \text{ otherwise}\}$\label{alg:line:next_critical_value}
        \State $t_{i+1} = \max_r P_r$
        \Comment{Update next critical value from \eqref{eq:liner_relation_between_critical_points}}
		\State $\B(t_{i+1}) = \B(t_{i}) \cup \arg \max_r P_s$ \label{alg:line:main_alg_new_fixed_nodes}
        \Comment{Update list of fixed variables}
        \State $Z^*(t_{i+1}) = Z^*(t_{i}) + (t_{i+1} - t_i) Z'^*(t_i)$
        \Comment{Update solution path }
        \State $\LL'(t_{i+1}) = \LL'(t_{i}) + (t_{i+1} - t_i) \LL''(t_i)$ 
        \Comment{Update objective's derivative}
                \State {\bf if }$\LL'(t_{i+1}) < -1 $ {\bf then break}
               \Comment{If already passed by $t^*$, then exit the for-loop}
	\EndFor
	\State $t^* = t_{i} - \frac{1 + \LL'(t_i)}{\LL''(t_i)}$\label{alg:line:find_tstar}
    \Comment{Find solution to \eqref{eq:condition_for_t}}
    \State $Z^* = Z^*(t_i) + (t^* - t_i) Z'^*(t_i)$\label{alg:line:find_Zstar}
    \Comment{Find minimizers of \eqref{eq:dual} for $t = t^*$}
    \State \textbf{return} $M^*(Z^*,T)$, $F^*(Z^*,T)$ \label{alg:line:return}
    \Comment{Return solution to \eqref{eq:projection_onto_PPM} using \eqref{eq:Z_star_M_star_relation}, which takes $\mathcal{O}(q)$ steps}
\end{algorithmic}
\end{algorithm}
%
%
%
%
%

\vspace{-0.6cm}
\subsection{Computing the rates}\label{sec:computing_rates}
\vspace{-0.2cm}

We now explain how the procedure \emph{ComputeRates} works. Recall that it takes as input the tree $T$ and the set $\B(t_i)$, and it outputs the derivatives $Z'^*(t_i)$ and $\LL''(t_i)$. 

A simple calculation shows that if we compute $Z'^*(t_i)$, then computing $\LL''(t_i)$ is easy.
\begin{lemma} \label{th:computing_ddL_from_dZ}
$\LL''(t_i)$ can be computed from $Z'^*(t_i)$ in $\mathcal{O}(q)$ steps and with $\mathcal{O}(1)$ memory as
\begin{equation} \label{eq:computing_ddL_from_dZ}
\LL''(t_i) = \sum_{j \in \V} (Z'^*(t_i)_j - Z'^*(t_i)_{\bar{j}})^2,
\end{equation}
\end{lemma}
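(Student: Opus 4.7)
The plan is to compute $\mathcal{L}''(t_i)$ by differentiating the explicit expression $\mathcal{L}(t) = \tfrac{1}{2}\sum_{j\in\V}(Z^*(t)_j - Z^*(t)_{\bar{j}})^2$ twice in $t$, on the open linear segment that has $t_i$ as its right endpoint, and then letting $t\uparrow t_i$.

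First I would invoke Lemma~\ref{th:Z_and_Lprime_are_piece_wise_linear} (and Lemma~\ref{eq:decomposition_of_problem}, which gives the affine formula $Z^*(t)_j = c_1 t + c_2$ on a linear segment) to justify that there is an open interval of the form $(t_{i+1}, t_i)$ on which $Z^*(t)$ is an affine function of $t$, so in particular $t\mapsto Z^*(t)_j$ is $C^\infty$ on this interval with second derivative $0$ for every $j\in\V$. On this open interval, the equality $\mathcal{L}(t) = \tfrac{1}{2}\sum_{j\in\V}(Z^*(t)_j - Z^*(t)_{\bar{j}})^2$ is just the definition of $\mathcal{L}(t)$ evaluated at its unique minimizer $Z^*(t)$, and is therefore always valid, with no envelope-theorem care needed.

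Next, I would differentiate this equality twice in $t$ using the chain rule. The first derivative is
\begin{equation*}
\mathcal{L}'(t) = \sum_{j\in\V}(Z^*(t)_j - Z^*(t)_{\bar{j}})\bigl(Z'^*(t)_j - Z'^*(t)_{\bar{j}}\bigr),
\end{equation*}
and the second derivative, using that the second derivative of each $Z^*(t)_j$ vanishes on the open linear segment, reduces to
\begin{equation*}
\mathcal{L}''(t) = \sum_{j\in\V}\bigl(Z'^*(t)_j - Z'^*(t)_{\bar{j}}\bigr)^2.
\end{equation*}
Since $Z'^*(t)$ is constant on $(t_{i+1}, t_i)$ and equals the left derivative $Z'^*(t_i)$ in the sense used throughout this section, taking the limit $t\uparrow t_i$ yields exactly \eqref{eq:computing_ddL_from_dZ}.

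For the complexity claim, observe that the sum has $|\V|=q$ terms and each term requires only a parent lookup $j\mapsto\bar{j}$, a subtraction, a squaring, and an accumulation into a single running total. With the tree $T$ stored so that parents are accessible in $\mathcal{O}(1)$, a single pass over the nodes gives $\mathcal{O}(q)$ time and $\mathcal{O}(1)$ extra memory beyond the input $Z'^*(t_i)$ and $T$. The only mildly subtle point is handling the one-sided derivative at the critical value $t_i$, but this is immediate once we restrict differentiation to the open segment and take the left limit, so there is no substantive obstacle.
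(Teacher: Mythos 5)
Your proposal is correct and follows essentially the same route as the paper's proof: differentiate $\mathcal{L}(t)=\tfrac12\sum_{j\in\V}(Z^*(t)_j-Z^*(t)_{\bar{j}})^2$ twice on the open linear segment $(t_{i+1},t_i)$, use that $Z^*(t)$ is affine there so its second derivative vanishes, and take the left limit $t\uparrow t_i$. The added remarks on the envelope-theorem non-issue and the $\mathcal{O}(q)$/$\mathcal{O}(1)$ accounting are fine and consistent with the paper.
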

where $\bar{j}$ is the closest ancestor to $j$ in $T$.
We note that if $j \in \B(t_i)$, then, by definition, $Z'^*(t_i)_j= 1$. Assume now that  $j \in \V \backslash \B(t_i)$.
Lemma \ref{eq:decomposition_of_problem} implies we can find $Z'^*(t_i)_j$ by solving the $(T_w = (r_w,\V_w,\E_w),\B_w)$-problem as a function of $t$, where $w$ is such that $j \in \V_w$. 
In a nutshell, \emph{ComputeRates} is a recursive procedure to solve all the $(T_w,\B_w)$-problems as an explicit function~of~$t$.

It suffices to explain how \emph{ComputeRates} solves one particular $(T_w,\B_w)$-problem explicitly. To simplify notation, in the rest of this section, we refer to $T_w$ and $\B_w$ as $T$ and $\B$. Recall that, by the definition of $T=T_w$ and $\B=\B_w$, if $i \in \B$, then $i$ must be a leaf of $T$, or the root of $T$.
\begin{definition} \label{def:sub_problem_T_B_A_B_G}
Consider a rooted tree $T = (r,\V,\E)$, a set $\B\subseteq\V$, and variables
$\{Z_j:j\in \V\}$ such that, if 
$j\in \B$, then $Z_j = \alpha_j t + \beta_j$ for some $\alpha$ and $\beta$.
We define the $(T,\B,\alpha,\beta,\gamma)$-problem as
\begin{align}\label{eq:general_sub_problem}
\min_{\{Z_j: j \in \V \backslash \B\}} \frac{1}{2} \sum_{j \in \V} \gamma_{j} (Z_j - Z_{\bar{j}})^2,
\end{align}
where $\gamma > 0$, $\bar{j}$ is the closest ancestor to $j$ in $T$,
and $Z_{\bar{j}} = 0$ if ${j}=r$.
\end{definition}
We refer to the solution of the $(T,\B,\alpha,\beta,\gamma)$-problem as $\{Z^*_j: j \in \V \backslash \B\}$, which uniquely minimizes \eqref{eq:general_sub_problem}.
Note that \eqref{eq:general_sub_problem} is unconstrained and its solution, $Z^*$, is a linear function of $t$. Furthermore, the $(T_w,\B_w)$-problem is the same as the $(T_w,\B_w,{\bf 1},-N,{\bf 1})$-problem, which is what we actually solve.

We now state three useful lemmas that help us solve any $(T,\B,\alpha,\beta,\gamma)$-problem efficiently. 

\begin{lemma}[Pruning]\label{th:pruning}
Consider the solution $Z^*$ of the $(T,\B,\alpha,\beta,\gamma)$-problem.
Let $j\in\V \backslash \B$ be a leaf.
Then $Z^*_j = Z^*_{\bar{j}}$. Furthermore, consider the  $(\tilde{T},\B,\alpha,\beta,\gamma)$-problem, where $\tilde{T} = (\tilde{r},\tilde{\V},\tilde{\E})$ is equal to $T$ with node $j$ pruned, and let its solution be $\tilde{Z}^*$. We have that $Z^*_i = \tilde{Z}^*_i$, for all $i \in \tilde{\V}$.
\end{lemma}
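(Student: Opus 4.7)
The plan is to exploit the fact that $j$ being a leaf means $Z_j$ appears in exactly one summand of the objective, so the first-order condition at $Z_j$ decouples cleanly and forces $Z_j^* = Z_{\bar j}^*$.

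First, I would observe that the objective in \eqref{eq:general_sub_problem} is a strictly convex quadratic in the free variables (it is a sum of squares, weighted by $\gamma > 0$, whose Hessian with respect to $\{Z_i:i\in\V\setminus\B\}$ is positive definite because every free variable contributes at least one squared difference). Hence the $(T,\B,\alpha,\beta,\gamma)$-problem has a unique minimizer, characterized by its first-order conditions. Since $j$ is a leaf of $T$, $Z_j$ occurs only in the single term $\tfrac12 \gamma_j(Z_j - Z_{\bar j})^2$ (no child of $j$ contributes a term containing $Z_j$). Because $j \in \V\setminus\B$, we may differentiate with respect to $Z_j$; setting the derivative to zero yields $\gamma_j(Z_j^* - Z_{\bar j}^*) = 0$, and since $\gamma_j > 0$, we obtain $Z_j^* = Z_{\bar j}^*$. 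If the degenerate case $j = r$ arises, the convention $Z_{\bar j} = 0$ gives $Z_j^* = 0 = Z_{\bar j}^*$, so the identity still holds.

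For the second part, I would substitute $Z_j = Z_{\bar j}$ back into the objective, which zeroes out the term $\tfrac12\gamma_j(Z_j - Z_{\bar j})^2$, leaving
\begin{equation*}
\tfrac12 \sum_{i \in \V\setminus\{j\}} \gamma_i(Z_i - Z_{\bar i})^2.
\end{equation*}
Because $j$ is a leaf, removing it does not alter $\bar i$ for any other $i \in \V\setminus\{j\}$, and the set of free/fixed labels on the remaining nodes is unchanged. Thus the restricted minimization over $\{Z_i : i \in \V\setminus(\B\cup\{j\})\}$ is literally the $(\tilde T,\B,\alpha,\beta,\gamma)$-problem. Any minimizer of the original problem must therefore restrict to a minimizer of the pruned problem, and conversely any minimizer $\tilde Z^*$ of the pruned problem, extended by $Z_j := \tilde Z^*_{\bar j}$, is a critical point of the original strictly convex objective and hence the unique minimizer. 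Uniqueness (from strict convexity of both problems) then gives $Z_i^* = \tilde Z_i^*$ for every $i \in \tilde\V$.

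I do not expect a genuine obstacle here; the only points requiring a little care are (i) checking strict convexity to justify uniqueness and the use of first-order conditions as a sufficient characterization, and (ii) verifying that pruning a leaf does not disturb the parent relation $\bar i$ for any surviving node, so that the restricted objective really is the objective of the pruned instance rather than merely a relative of it.
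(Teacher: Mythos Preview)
Your proposal is correct and follows essentially the same approach as the paper: both use the first-order condition at the free leaf $j$ to get $Z_j^*=Z_{\bar j}^*$, and both then argue that eliminating $j$ leaves the remaining optimality conditions (equivalently, the remaining objective) unchanged. The only cosmetic difference is that for the second part the paper compares the first-order equation at $\bar j$ before and after pruning, whereas you substitute $Z_j=Z_{\bar j}$ directly into the objective and invoke uniqueness via strict convexity; both routes are equally short and valid.
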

\begin{lemma}[Star problem]\label{th:star_problem}
Let $T$ be a star
such that node $1$ is the center node, node $2$ is the root, and 
nodes $3,\dots, r$ are leaves.
Let $\B = \{2,\dots, r\}$. Let $Z^*_1 \in \mathbb{R}$ be the solution
of the $(T,\B,\alpha,\beta,\gamma)$-problem.
Then,
\begin{equation} \label{eq:star_solution}
Z^*_1 = \left( \frac{\gamma_1 \alpha_2+ \sum^r_{i=3} \gamma_r \alpha_r}{\gamma_1 + \sum^r_{i=3} \gamma_r}\right) t + \left( \frac{\gamma_1 \beta_2+ \sum^r_{i=3} \gamma_r \beta_r}{\gamma_1 + \sum^r_{i=3} \gamma_r}\right).
\end{equation}
In particular, to find the rate at which $Z^*_1$ changes with $t$, we only need to know $\alpha$ and $\gamma$,  not  $\beta$.
\end{lemma}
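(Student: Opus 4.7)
My plan is to observe that this is a single-variable unconstrained strictly convex quadratic minimization, so the solution is obtained in one line from the first-order optimality condition; the rest is bookkeeping to separate the coefficient of $t$ from the constant term.

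First I would unfold the objective of the $(T,\B,\alpha,\beta,\gamma)$-problem for the specified star. The only free variable is $Z_1$, and the only edges of $T$ are the one from the root $2$ to the center $1$ (so $\bar{1}=2$) and the edges from $1$ to each leaf $i \in \{3,\dots,r\}$ (so $\bar{i}=1$). The $j=2=r$ term $\tfrac{1}{2}\gamma_2(Z_2-0)^2$ involves only the fixed quantity $Z_2=\alpha_2 t+\beta_2$, so it is a constant in $Z_1$ and drops out of the minimization. What remains is
\begin{equation*}
\Phi(Z_1) \;=\; \tfrac{1}{2}\gamma_1(Z_1-Z_2)^2 + \tfrac{1}{2}\sum_{i=3}^{r}\gamma_i(Z_i-Z_1)^2 \;+\; \text{const},
\end{equation*}
a strictly convex quadratic in $Z_1$ since $\gamma>0$ entrywise.

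Next I would write the first-order condition $\Phi'(Z_1)=0$, namely
\begin{equation*}
\gamma_1(Z_1-Z_2) \;-\; \sum_{i=3}^{r}\gamma_i(Z_i-Z_1) \;=\; 0,
\end{equation*}
and solve for $Z_1^*$ to get $Z_1^* = (\gamma_1 Z_2 + \sum_{i=3}^{r}\gamma_i Z_i)/(\gamma_1+\sum_{i=3}^{r}\gamma_i)$. Strict convexity guarantees this is the unique minimizer.

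Finally I would substitute $Z_2=\alpha_2 t+\beta_2$ and $Z_i=\alpha_i t+\beta_i$ for $i\in\{3,\dots,r\}$, which are the affine-in-$t$ forms enforced by $j\in\B$ in Definition \ref{def:sub_problem_T_B_A_B_G}, and group terms by powers of $t$. This yields exactly the displayed formula \eqref{eq:star_solution}. The final remark that the rate $\mathrm{d}Z_1^*/\mathrm{d}t$ depends only on $\alpha$ and $\gamma$ is then immediate by inspection, since the $\beta_j$ contribute only to the constant term.

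There is no real obstacle here; the statement is a direct computation and the only thing to be careful about is the bookkeeping of which term is fixed vs. free (in particular, remembering that the $\gamma_2$ term contributes nothing because $Z_2$ and $Z_{\bar 2}=0$ are both fixed) and that the star structure makes every edge incident to the single free node $1$.
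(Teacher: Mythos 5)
Your proof is correct and takes the same route as the paper, which simply notes that the result "follows directly from the first order optimality conditions, a linear equation that we solve for $Z^*_1$"; you have filled in exactly that computation, including the correct observation that the root term $\tfrac{1}{2}\gamma_2 Z_2^2$ is constant in the single free variable. (Your derived formula $Z^*_1 = \bigl(\gamma_1 Z_2 + \sum_{i=3}^r \gamma_i Z_i\bigr)/\bigl(\gamma_1 + \sum_{i=3}^r \gamma_i\bigr)$ also implicitly corrects the index typo in \eqref{eq:star_solution}, where the summands should read $\gamma_i\alpha_i$ and $\gamma_i\beta_i$ rather than $\gamma_r\alpha_r$ and $\gamma_r\beta_r$.)
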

\begin{lemma}[Reduction]\label{th:tree_reduction}
Consider the $(T,\B,\alpha,\beta,\gamma)$-problem
such that $j,\bar{j} \in \V \backslash \B$, and such that
$j$ has all its children $1,\dots,r \in \B$. Let $Z^*$ be its solution.
Consider the $(\tilde{T},\tilde{\B},\tilde{\alpha},\tilde{\beta},\tilde{\gamma})-problem$,
where $\tilde{T} = (\tilde{r},\tilde{\V},\tilde{\E})$ is equal to $T$ with nodes $1,\dots,r$ removed, and
$\tilde{\B} = (\B \backslash \{1,\dots,r\}) \cup \{j\}$. Let ${\tilde{Z}^*}$ be its solution.
If $(\tilde{\alpha}_i,\tilde{\beta}_i,\tilde{\gamma}_i)=(\alpha_i,\beta_i,\gamma_i)$ for all $i \in \B \backslash \{1,\ldots,r\}$, and $\tilde{\alpha}_j$, $\tilde{\beta}_j$ and $\tilde{\gamma}_j$ satisfy
\begin{align}
&\tilde{\alpha}_j = \frac{\sum^r_{i=1}  \gamma_i \alpha_i}{\sum^r_{i=1} \gamma_i}, \;\; \tilde{\beta}_j = \frac{\sum^r_{i=1}  \gamma_i \beta_i} {\sum^r_{i=1}\gamma_i}, \tilde{\gamma}_j = \left((\gamma_j)^{-1} + \left(\sum^r_{i=1} \gamma_i\right)^{-1}\right)^{-1}\label{eq:reduction_equations},
\end{align}
then $Z^*_i = {\tilde{Z}^*}_i$ \,for all $i \in \V \backslash \{j\}$.
\end{lemma}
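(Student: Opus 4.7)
The plan is to eliminate the single free variable $Z_j$ from the original $(T,\B,\alpha,\beta,\gamma)$-problem by partial minimization, and to show that what remains is exactly the objective of the $(\tilde T,\tilde\B,\tilde\alpha,\tilde\beta,\tilde\gamma)$-problem, up to an additive constant that depends only on $t$. Since both problems are strictly convex unconstrained quadratics on the common set of remaining free variables $\V\setminus(\B\cup\{j\})$, equality of the two objectives up to such a constant forces their unique minimizers to coincide on every such coordinate. On $\B\setminus\{1,\ldots,r\}$ the two solutions are both equal to $\alpha_i t+\beta_i$ by the hypothesis $(\tilde\alpha_i,\tilde\beta_i,\tilde\gamma_i)=(\alpha_i,\beta_i,\gamma_i)$. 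Together this yields $Z^*_i=\tilde Z^*_i$ for every $i$ in $\V\setminus\{j,1,\ldots,r\}$, which is the content of the lemma (the nodes $1,\ldots,r$ are absent from $\tilde\V$).

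First, I would isolate in \eqref{eq:general_sub_problem} the edge terms that depend on $Z_j$ or on the fixed children $Z_i=\alpha_i t+\beta_i$, $i=1,\ldots,r$. Because each of $1,\ldots,r$ is a leaf of $T$, the only such terms are the single edge from $j$ to $\bar j$ and the $r$ edges from $j$ to its children, namely
\begin{equation*}
g(Z_j,Z_{\bar j})=\tfrac12\gamma_j(Z_j-Z_{\bar j})^2+\tfrac12\sum_{i=1}^r\gamma_i\bigl(Z_j-(\alpha_i t+\beta_i)\bigr)^2.
\end{equation*}
All remaining terms involve only edges of $T$ that survive in $\tilde T$ with unchanged parents and weights, so they already coincide with the corresponding terms of the reduced objective.

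Next, I would apply the identity $\sum_i w_i(y-x_i)^2=(\sum_i w_i)(y-\bar x)^2+c$, with $\bar x=\sum_i w_i x_i/\sum_i w_i$ and $c$ independent of $y$, to collapse the children's contribution into one quadratic in $Z_j$ centered at $\tilde\alpha_j t+\tilde\beta_j$ with weight $S:=\sum_{i=1}^r\gamma_i$; the resulting $\tilde\alpha_j,\tilde\beta_j$ match \eqref{eq:reduction_equations} by direct substitution. The partial minimum $\min_{Z_j} g(Z_j,Z_{\bar j})$ is then a one-dimensional quadratic with two anchors $(Z_{\bar j},\gamma_j)$ and $(\tilde\alpha_j t+\tilde\beta_j,S)$, whose minimum value, by the ``parallel-spring'' formula, is
\begin{equation*}
\tfrac12\,\frac{\gamma_j S}{\gamma_j+S}\bigl(Z_{\bar j}-(\tilde\alpha_j t+\tilde\beta_j)\bigr)^2+c(t).
\end{equation*}
Since $\gamma_j S/(\gamma_j+S)=(\gamma_j^{-1}+S^{-1})^{-1}=\tilde\gamma_j$, this is precisely the edge term that $\tilde T$ contributes for the edge from the now-fixed node $j\in\tilde\B$ (with value $\tilde\alpha_j t+\tilde\beta_j$) up to its still-free parent $\bar j$, so combining the partial minimum with the untouched terms reconstructs the reduced objective plus a constant, and the claim follows by the strict-convexity argument of the first paragraph.

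The only delicate part is the bookkeeping: confirming that the hypothesis of $1,\ldots,r$ being leaves ensures no further edge of $T$ enters $g$, and that the parallel-spring algebra lands exactly on the expressions prescribed by \eqref{eq:reduction_equations} for $\tilde\alpha_j,\tilde\beta_j,\tilde\gamma_j$. I expect this matching of constants to be the main, though routine, obstacle; conceptually the step is standard Gaussian elimination of one internal variable in a tree-structured quadratic.
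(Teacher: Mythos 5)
Your proof is correct, but it takes a genuinely different (dual) route from the paper's. The paper works directly with the first-order optimality conditions: it writes the stationarity equation for $Z_j$, solves it to express $Z_j$ as a $\gamma$-weighted average of $Z_{\bar j}$ and the fixed children, substitutes this into the equation for $Z_{\bar j}$, and checks that the resulting linear equation has the same coefficients as the corresponding equation of the reduced problem precisely when \eqref{eq:reduction_equations} holds; it then concludes that the two linear systems share a solution. You instead eliminate $Z_j$ by partial minimization of the quadratic objective (a Schur-complement / Gaussian-elimination step on the energy rather than on the KKT system), using the weighted-mean-of-squares identity to produce $\tilde\alpha_j,\tilde\beta_j$ and the series-combination formula $\gamma_j S/(\gamma_j+S)=\tilde\gamma_j$ to produce the new edge weight, and then invoke strict convexity to equate the minimizers. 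The two arguments are algebraically equivalent, but yours has the advantage of treating all remaining free variables uniformly at once (the untouched terms of the objective are manifestly identical, so siblings and other branches need no separate discussion), whereas the paper's equation-by-equation bookkeeping only explicitly addresses $\bar j$ and its ancestors; the paper's version, on the other hand, also hands you the explicit formula for $Z^*_j$ in terms of $Z^*_{\bar j}$, which is what the back-substitution step of Algorithm \ref{alg:compute_rates_rec} (line \ref{alg:line:rec:8}) actually uses. Your two flagged bookkeeping points are exactly the right ones to check: the children $1,\dots,r$ must be leaves for $g$ to capture every term involving them (this holds in all uses of the lemma, since fixed nodes of a subproblem are leaves or the root), and the strict convexity needed for uniqueness is supplied by the root-anchoring convention $Z_{\bar j}=0$ for $j=r$ together with the fixed nodes.
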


Lemma \ref{th:star_problem} and Lemma \ref{th:tree_reduction} allow us to recursively solve  
any $(T,\B,\alpha,\beta,\gamma)$-problem, and obtain for it an explicit solution of the form $Z^*(t) = c_1 t + c_2$, where $c_1$ and $c_2$ do not depend on $t$.

Assume that we have already repeatedly pruned $T$, by repeatedly invoking Lemma \ref{th:pruning}, such that, if $i$ is a leaf, then $i \in \B$. See Figure \ref{fig:recursion}-(left).
First, we find some node $j \in \V \backslash \B$ such that all of its children are in $\B$.
If $\bar{j} \in \B$, then $\bar{j}$ must be the root, and the $(T,\B,\alpha,\beta,\gamma)$-problem
must be a star problem as in Lemma \ref{th:star_problem}. We can use Lemma \ref{th:star_problem} to solve it explicitly. 
Alternatively, if $\bar{j} \notin \V \backslash \B$, then we  invoke Lemma \ref{th:tree_reduction}, and reduce the $(T,\B,\alpha,\beta,\gamma)$-problem to a strictly smaller  $(\tilde{T},\tilde{\B},\tilde{\alpha},\tilde{\beta},\tilde{\gamma})$-problem, which we solve  recursively. Once the $(\tilde{T},\tilde{\B},\tilde{\alpha},\tilde{\beta},\tilde{\gamma})$-problem is solved, we have an explicit expression $Z^*_i(t) = {c_1}_i t + {c_2}_i$ for all $i \in \V \backslash \{j\}$, and, in particular, we have an explicit expression   $Z^*_{\bar{j}}(t) = {c_1}_{\bar{j}} t + {c_2}_{\bar{j}}$. The only  free variable of the $(T,\B,\alpha,\beta,\gamma)$-problem to be determined is $Z^*_j(t)$. To compute $Z^*_j(t)$, we apply Lemma \ref{th:star_problem} to the $({\dbtilde{T}},{\dbtilde{\B}},{\dbtilde{\alpha}},{\dbtilde{\beta}},{\dbtilde{\gamma}})$-problem, where ${\dbtilde{T}}$ is a star around $j$, ${\dbtilde{\gamma}}$ are the components of $\gamma$ corresponding to nodes that are neighbors of  $j$, ${\dbtilde{\alpha}}$ and ${\dbtilde{\beta}}$ are such that $Z^*_i(t)  = {\dbtilde{\alpha}}_i t + {\dbtilde{\beta}}_i$ for all $i$ that are neighbors of $j$, and for which $Z^*_i(t)$ is already known, and ${\dbtilde{\B}}$ are all the neighbors of $j$. See Figure \ref{fig:recursion}-(right).

The algorithm is compactly described in Alg. \ref{alg:compute_rates_rec}. It is slightly different from the description above for computational efficiency. Instead of computing $Z^*(t) = c_1t+c_2$, we keep track only of $c_1$, the rates, and we do so only for the variables in $\V \backslash \B$.
The algorithm assumes that the input $T$ has been pruned. The inputs $T$, $\B$, $\alpha$, $\beta$ and $\gamma$ are passed by reference. They are modified inside the algorithm but, once \emph{ComputeRatesRec} finishes, they keep their initial values. Throughout the execution of the algorithm, $T = (r,\V,\E)$ encodes (1) a doubly-linked list where each node points to its children and its parent, which we call $T.a$, and (b) a a doubly-linked list of all the nodes in $\V \backslash \B$ for which all the children are in $\B$, which we call $T.b$. In the proof of Theorem \ref{th:complexity_of_rec_reduce}, we prove how this  representation of $T$ can be kept updated with  little computational effort. 
The input $Y$, also passed by reference, starts as an uninitialized array of size $q$, where we will store the rates $\{Z'^*_i\}$. At the end, we read $Z'^*$ from $Y$.
\begin{algorithm}[h!]
\caption{ComputeRatesRec (input: $T=(r,\V,\E), \B,\alpha,\beta,\gamma,Y$)} \label{alg:compute_rates_rec}
\begin{algorithmic}[1]
    \State Let $j$ be some node in $\V \backslash \B$ whose children are in $\B$\label{alg:line:rec:1}
    \Comment{We read $j$ from $T.b$ in $\mathcal{O}(1)$ steps}
    \If{$\bar{j} \in \B$}\label{alg:line:rec:2}
    \State  Set $Y_j$ using \eqref{eq:star_solution} in Lemma \ref{th:star_problem} \label{alg:line:rec:3}
    \Comment{If $\bar{j} \in \B$, then the $(T,\B,\alpha,\beta,\gamma)$-problem is star-shaped}
    \Else
    \State Modify $(T,\B,\alpha,\beta,\gamma)$ to match  $(\tilde{T},\tilde{\B},\tilde{\alpha},\tilde{\beta},\tilde{\gamma})$ defined by Lemma \ref{th:tree_reduction} 
    for $j$ in line \ref{alg:line:rec:1} \label{alg:line:rec:5}
    \State  $\text{ComputeRatesRec}(T,\B,\alpha,\beta,\gamma,Y)$\label{alg:line:rec:6}
    \Comment{Sets $Y_i= Z'^*_i$ for all ${i \in \V \backslash \B}$; $Y_j$ is not yet defined}
    \State Restore $(T,\B,\alpha,\beta,\gamma)$ to its original value before line \ref{alg:line:rec:5} was executed \label{alg:line:rec:7}
\State Compute $Y_j$ from \eqref{eq:star_solution}, using for $\alpha,\beta,\gamma$ in \eqref{eq:star_solution} the values
    ${\dbtilde{\alpha}},{\dbtilde{\beta}},{\dbtilde{\gamma}}$, where ${\dbtilde{\gamma}}$ are the components of $\gamma$ corresponding to nodes that are neighbors of  $j$ in $T$, and ${\dbtilde{\alpha}}$ and ${\dbtilde{\beta}}$ are such that $Z^*_i  = {\dbtilde{\alpha}}_i t + {\dbtilde{\beta}}_i$ for all $i$ that are neighbors of $j$ in $T$, and for which $Z^*_i$ is already known\label{alg:line:rec:8}
    \EndIf
\end{algorithmic}
\end{algorithm}

Let $q$ be the number of nodes of the tree $T$ that is the input at the zeroth level of the recursion. 
\begin{theorem}\label{th:complexity_of_rec_reduce}
Algorithm \ref{alg:compute_rates_rec} correctly computes $Z'^*$ for the $(T,\B,\alpha,\beta,\gamma)$-problem, and it
can be implemented to finish in $\mathcal{O}(q)$ steps, and to use $\mathcal{O}(q)$ memory.
\end{theorem}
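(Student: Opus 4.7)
The plan is to prove correctness by induction on $|\V \setminus \B|$ and to establish the complexity bounds by an amortized accounting argument paired with a careful choice of data structures for $T$ and $\B$.

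For correctness, I would induct on the number of free variables. Since the tree is pruned upon entry (by repeated application of Lemma \ref{th:pruning}), at every call a node $j \in \V \setminus \B$ whose children are all in $\B$ exists, so line \ref{alg:line:rec:1} is well-defined. In the base case, $\bar j \in \B$ (or $j = r$, treated as $Z_{\bar j}=0$), so the subproblem seen locally at $j$, with $j$ free and its parent and children fixed, is exactly the star of Lemma \ref{th:star_problem}; formula \eqref{eq:star_solution} therefore yields $Y_j = Z'^*_j$. For the inductive step, Lemma \ref{th:tree_reduction} guarantees that the reduced $(\tilde T,\tilde\B,\tilde\alpha,\tilde\beta,\tilde\gamma)$-problem built in line \ref{alg:line:rec:5} has the same $Z^*$ on $\V \setminus \{j\}$ as the original, and has one fewer free variable. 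By the inductive hypothesis, the recursive call in line \ref{alg:line:rec:6} writes $Y_i = Z'^*_i$ for every $i \in (\V \setminus \B) \setminus \{j\}$. After restoration in line \ref{alg:line:rec:7}, every neighbor of $j$ in $T$ has a known rate (either because it lies in $\B$, or because its $Y$ was just filled in), so Lemma \ref{th:star_problem} applied to the local star ${\dbtilde{T}}$ around $j$ in line \ref{alg:line:rec:8} correctly produces $Y_j = Z'^*_j$.

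For the complexity, I would first observe that each recursive call selects a distinct $j$: once chosen, $j$ enters $\tilde\B$ in the reduced problem and is never picked again in any descendant call, and on the way back up line \ref{alg:line:rec:8} is executed at most once per frame. Hence the algorithm performs exactly $|\V \setminus \B| \le q$ recursive calls. The non-recursive work inside the call for $j$ is dominated by the sums in \eqref{eq:reduction_equations} and by the star evaluation in line \ref{alg:line:rec:8}, both of which are proportional to the degree of $j$ in the current tree $T$. Because reductions only remove nodes, this degree is at most the original degree of $j$, so summing over all calls gives $O\big(\sum_{j \in \V} \mathrm{deg}(j)\big) = O(q)$ total non-recursive work. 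The information each frame must push on a stack so that line \ref{alg:line:rec:7} can undo line \ref{alg:line:rec:5} in constant time per restored child is similarly $O(\mathrm{deg}(j))$, so the total memory for restoration is also $O(q)$.

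The main obstacle is engineering the data structures so that lines \ref{alg:line:rec:1}, \ref{alg:line:rec:5}, and \ref{alg:line:rec:7} together run in $O(\mathrm{deg}(j))$ time per call, rather than in time proportional to the current size of $T$. I would implement $T.a$ as a doubly-linked representation in which each node stores a pointer to its parent, to the head of its child list, and to its own position in that list, so that detaching the children of $j$ and later re-attaching them costs $O(1)$ per child. I would maintain $T.b$ as an incrementally-updated doubly-linked list of candidate nodes: each node in $\V \setminus \B$ carries a counter of its free children and is inserted into $T.b$ precisely when that counter reaches zero, so reading $j$ from the head of $T.b$ is $O(1)$. When the reduction at $j$ is performed, the only updates required are removing $j$ from $T.b$, decrementing $\bar j$'s counter, and possibly inserting $\bar j$; the restoration reverses these updates by popping from a per-frame stack. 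Combining these pieces gives the claimed $\mathcal{O}(q)$ time and $\mathcal{O}(q)$ memory bounds.
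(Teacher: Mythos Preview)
Your proposal is correct and follows essentially the same approach as the paper: correctness via Lemmas~\ref{th:pruning}--\ref{th:tree_reduction} (which you organize as an explicit induction on $|\V\setminus\B|$), and the $\mathcal{O}(q)$ bounds via the amortized degree-sum argument together with the doubly-linked structures $T.a$ and $T.b$. Your free-children counter for maintaining $T.b$ is in fact a bit more explicit than the paper's treatment, which only says ``$T.b$ loses $j$ (line~\ref{alg:line:rec:5}) or has $j$ reinserted (line~\ref{alg:line:rec:7})'' and does not spell out how $\bar j$ may newly enter $T.b$; your counter handles this cleanly in $O(1)$.
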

The correctness of Algorithm \ref{alg:compute_rates_rec} follows from Lemmas \ref{th:pruning}-\ref{th:tree_reduction}, and the explanation above. Its complexity is bounded by the total time spent on the two lines that actually compute rates during the whole recursion, lines \ref{alg:line:rec:3} and \ref{alg:line:rec:8}. All the other lines only transform the input problem into a more computable form. Lines \ref{alg:line:rec:3} and \ref{alg:line:rec:8} solve a star-shaped problem with at most ${degree}(j)$ variables, which, by inspecting \eqref{eq:star_solution}, we know can be done in $\mathcal{O}(\text{degree}(j))$ steps. Since, $j$ never takes the same value twice, the overall complexity is bounded by $\mathcal{O}(\sum_{j \in \V} \text{degree}(j)) = \mathcal{O}(|\E|)= \mathcal{O}(q)$. The $\mathcal{O}(q)$ bound on memory is possible because all the variables that occupy significant memory are being passed by reference, and are modified in place during the whole recursive procedure.

The following lemma shows how the recursive procedure to solve 
a $(T,\B,\alpha,\beta,\gamma)$-problem can be used to 
compute the rates of change of $Z^*(t)$
of a $(T,\B)$-problem. Its proof follows from the observation that the rate of change of the solution with $t$ in \eqref{eq:star_solution} in Lemma \ref{th:star_problem} only depends on $\alpha$ and $\beta$, and that the reduction equations \eqref{eq:reduction_equations} in Lemma \ref{th:tree_reduction} never make $\alpha'$ or $\gamma'$ depend on $\beta$. 

\begin{lemma}[Rates only]\label{th:rates_only}
Let $Z^*(t)$ be the solution of the $(T,\B)$-problem, and let
 $\tilde{Z}^*(t)$ be the solution of the $(T,\B,{\bf 1},0,{\bf 1})$-problem. Then,
$Z^*(t) = c_1 t + c_2$, and $\tilde{Z}^*(t) = c_1 t$ for some $c_1$ and $c_2$.
\end{lemma}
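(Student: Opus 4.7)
The plan is to recognize that both problems in the statement are instances of the same general $(T,\B,\alpha,\beta,\gamma)$-problem, differing only in the vector $\beta$, and then to show that under the recursive solution procedure from Algorithm \ref{alg:compute_rates_rec} the slope in $t$ of each coordinate depends only on $\alpha$ and $\gamma$.

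First, I would rewrite both problems in the general form. By Definition \ref{def:sub_problem_Tw_Bw}, the $(T,\B)$-problem coincides with the $(T,\B,{\bf 1},-N,{\bf 1})$-problem, since on fixed nodes $Z_j = t - N_j$, i.e., $\alpha_j = 1$, $\beta_j = -N_j$, and $\gamma_j = 1$. The companion $(T,\B,{\bf 1},0,{\bf 1})$-problem therefore shares $T$, $\B$, $\alpha$, and $\gamma$ with the first, and differs only in its $\beta$-vector. By Lemma \ref{eq:decomposition_of_problem} (and the affine structure established in Section \ref{sec:computing_rates}), the solution of each is affine in $t$, so it suffices to compare their slopes and intercepts.

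Second, I would trace these coefficients through the recursion. Inspection of \eqref{eq:star_solution} in Lemma \ref{th:star_problem} shows that the coefficient of $t$ is a function of $\alpha$ and $\gamma$ alone, while the intercept is a linear function of $\beta$ (and $\gamma$). The reduction formulas \eqref{eq:reduction_equations} in Lemma \ref{th:tree_reduction} show that $\tilde{\alpha}_j$ and $\tilde{\gamma}_j$ are produced from $\alpha$ and $\gamma$ alone, so $\beta$ never leaks into $\alpha$ or $\gamma$. By a straightforward induction on the depth of the recursion inside Algorithm \ref{alg:compute_rates_rec}, the sequences of $\alpha$ and $\gamma$ visited are identical for both problems, and hence the slopes assigned at line \ref{alg:line:rec:3} and line \ref{alg:line:rec:8} agree coordinate-wise. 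This yields $Z^*(t) = c_1 t + c_2$ and $\tilde{Z}^*(t) = c_1 t + c_2'$ with a common $c_1$.

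Third, to pin down $c_2' = 0$, I would track intercepts separately for the $\beta \equiv 0$ problem. The reduction formula gives $\tilde{\beta}_j = 0$ whenever all contributing $\beta_i$ are zero, so $\beta \equiv 0$ persists through every level of the recursion; and the intercept term of \eqref{eq:star_solution} vanishes whenever $\beta$ is identically zero on the star's leaves. Consequently every intercept computed at lines \ref{alg:line:rec:3} and \ref{alg:line:rec:8} is zero, giving $\tilde{Z}^*(t) = c_1 t$. The main subtlety is verifying that the recursion's base case (where $\bar{j} \in \B$ so Lemma \ref{th:star_problem} is applied directly) and its inductive step (via Lemma \ref{th:tree_reduction}) both respect this $\beta$-independence of the slope; both reduce to a direct reading of the formulas, and the overall correctness of the recursion as a solver for the $(T,\B,\alpha,\beta,\gamma)$-problem is already guaranteed by Theorem \ref{th:complexity_of_rec_reduce}.
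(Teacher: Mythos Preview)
Your proposal is correct and follows essentially the same route as the paper's proof: identify the $(T,\B)$-problem with the $(T,\B,{\bf 1},-N,{\bf 1})$-problem, then observe from \eqref{eq:star_solution} and \eqref{eq:reduction_equations} that throughout the recursion the slope in $t$ depends only on $\alpha$ and $\gamma$, never on $\beta$. Your third paragraph, tracking that $\beta\equiv 0$ is preserved by the reduction and hence forces the intercept to vanish, is a detail the paper leaves implicit but that you handle cleanly.
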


\begin{figure}[t!]
\includegraphics[width=0.35\columnwidth,trim={0cm -1.5cm 0cm 0cm},clip]{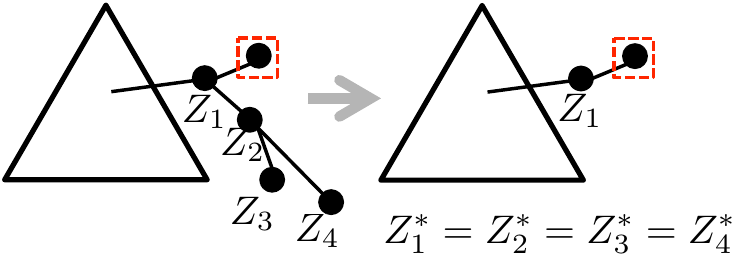}
\includegraphics[width=0.65\columnwidth,trim={0cm 0cm 0cm 0cm},clip]{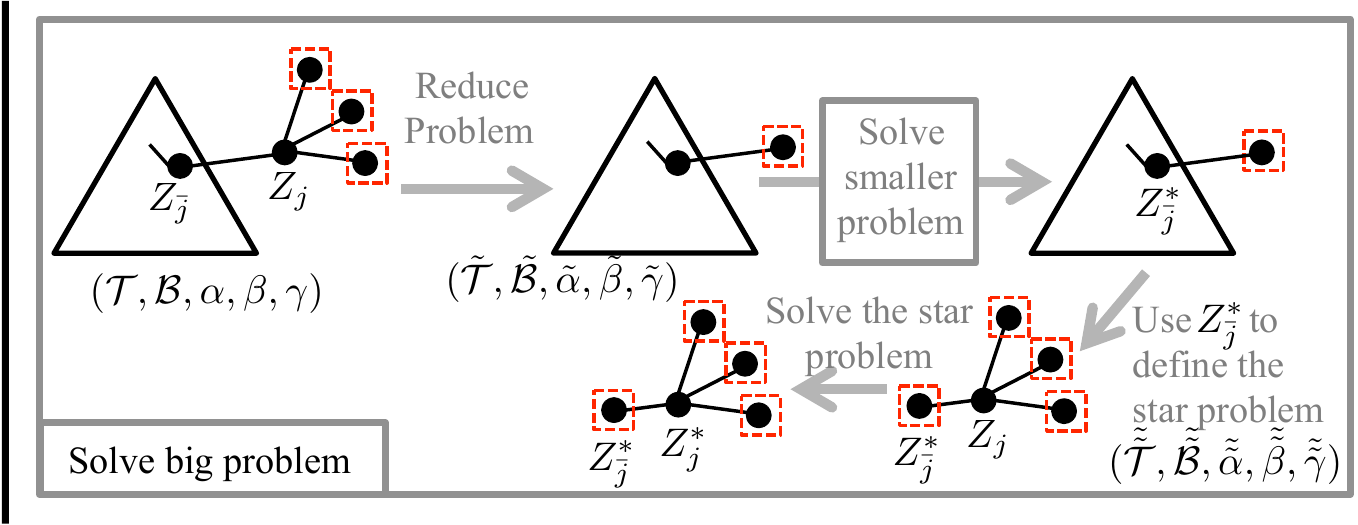}
\caption{\small Red squares represent fixed nodes, and black circles free nodes. (Left) By repeatedly invoking Lemma \ref{th:pruning}, we can remove nodes $2$, $3$, and $4$ from the original problem, since their associated optimal values are equal to the optimal value for node $1$. (Right) We can compute the rates for all the free nodes of a subtree recursively by applying Lemma \ref{th:tree_reduction} and Lemma \ref{th:star_problem}. We
know the linear behavior of variables associated to red squares.}
\vspace{-0.5cm}
 \label{fig:recursion}
\end{figure}

We finally present the full algorithm to compute $Z'^*(t_i)$ and $\LL''*(t_i)$ from 
$T$ and $\B(t_i)$.
\begin{algorithm}[h!]
\caption{ComputeRates (input: $T$ and $\B(t_i)$ output: $Z'^*(t_i)$ and $\LL''(t_i)$)} \label{alg:compute_rates}
\begin{algorithmic}[1]
	\State $Z'^*(t_i)_j = 1$ for all $j \in \B(t_i)$
	\For{ each $(T_w,\B_w)$-problem induced by $\B(t_i)$ }
	\State Set $\tilde{T}_w$ to be $T_w$ pruned of all leaf nodes in $\B_w$, by repeatedly evoking Lemma \ref{th:pruning}
\State $ \text{ComputeRatesRec}(\tilde{T}_w,j,\B_w,{\bf 1},{\bf 0},{\bf 1},\tilde{Z'^*})$\label{alg:line:rates:4}
    \State $Z'^*(t_i)_j = \tilde{Z'^*}_j$ for all $j \in  V_w \backslash \B$ \label{alg:line:rates:5}
    \EndFor
    	\State Compute $\LL''(t_i)$ from $Z'^*(t_i)$ using Lemma \ref{th:computing_ddL_from_dZ}
        \State \textbf{return} $Z'^*(t_i)$ and $\LL''(t_i)$
\end{algorithmic}
\end{algorithm}

\vspace{-0.4cm}
The following theorem follows almost directly from Theorem \ref{th:complexity_of_rec_reduce}. 
\begin{theorem}\label{th:compute_rates_main_theorem}
Alg. \ref{alg:compute_rates} correctly computes $Z'^*(t_i)$ and $\LL''(t_i)$ in $\mathcal{O}(q)$ steps, and uses $\mathcal{O}(q)$ memory.
\end{theorem}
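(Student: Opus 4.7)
The plan is to derive correctness from Lemma~\ref{eq:decomposition_of_problem} combined with Theorem~\ref{th:complexity_of_rec_reduce}, and to bound the cost by a handshaking argument over the subtrees induced by $\B(t_i)$.

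For correctness, by Lemma~\ref{eq:decomposition_of_problem} the free components of $Z^*(t)$ decompose into independent $(T_w,\B_w)$-subproblems, so it suffices to compute the rate on each of them. For any $j\in \B(t_i)$, the active constraint $Z^*(t)_j=t-N_j$ immediately gives $Z'^*(t_i)_j=1$, which matches the initialization on line~1. For the free nodes of a given $T_w$, Lemma~\ref{th:rates_only} asserts that the rate $c_1$ in the representation $Z^*(t)=c_1 t + c_2$ equals the solution of the $(T_w,\B_w,\mathbf{1},\mathbf{0},\mathbf{1})$-problem, and Lemma~\ref{th:pruning} permits pruning the leaves of $T_w$ that belong to $\B_w$ without altering the values at the remaining nodes. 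Hence the pruned tree $\tilde{T}_w$ satisfies the hypotheses of Theorem~\ref{th:complexity_of_rec_reduce}, so the call to ComputeRatesRec on line~\ref{alg:line:rates:4} correctly outputs the desired rates. Finally, $\LL''(t_i)$ is produced by the identity in Lemma~\ref{th:computing_ddL_from_dZ}.

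For complexity, Theorem~\ref{th:complexity_of_rec_reduce} gives each ComputeRatesRec call a cost of $\mathcal{O}(|\tilde{\V}_w|)$ in both time and memory, and the pruning that builds $\tilde{T}_w$ from $T_w$ is linear in $|\V_w|$. Each free node belongs to exactly one $\V_w$, while each boundary node $i\in\B(t_i)$ belongs to at most $\text{degree}(i)$ of the $\V_w$'s, so by handshaking $\sum_w |\V_w| \le q + \sum_{i\in\B(t_i)} \text{degree}(i) \le q + 2|\E| = \mathcal{O}(q)$, since $T$ is a tree. Adding the $\mathcal{O}(q)$ cost of line~1 and of Lemma~\ref{th:computing_ddL_from_dZ} yields the claimed $\mathcal{O}(q)$ total running time. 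Memory is $\mathcal{O}(q)$ because the output array $Z'^*(t_i)$ has size $q$ and all workspace used inside the for-loop can be reclaimed between iterations, given that ComputeRatesRec operates in place on the data passed by reference.

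The main subtlety I anticipate is verifying that constructing the $k$ pruned subtrees $\tilde{T}_w$, together with the auxiliary linked-list representations $T.a,T.b$ required by ComputeRatesRec, introduces no hidden per-subtree overhead that would spoil the linear bound. A single DFS over $T$ that splits the traversal at every node of $\B(t_i)$ suffices to produce each $\tilde{T}_w$ with amortized $\mathcal{O}(|\V_w|)$ work, preserving the $\mathcal{O}(q)$ total complexity.
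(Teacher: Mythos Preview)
Your proposal is correct and follows essentially the same route as the paper: correctness via Lemma~\ref{eq:decomposition_of_problem}, Lemma~\ref{th:rates_only}, pruning, and Theorem~\ref{th:complexity_of_rec_reduce}; complexity via summing $|\V_w|$ over the induced subtrees, where your handshaking bound makes explicit the step the paper compresses into $\sum_w |T_w|=\mathcal{O}(|T|)$. One small slip to fix: Lemma~\ref{th:pruning} prunes \emph{free} leaves (those in $\V_w\setminus\B_w$), not leaves in $\B_w$ as you write---this mirrors a typo in line~3 of Algorithm~\ref{alg:compute_rates}, but both the paper's own proof and the precondition of \emph{ComputeRatesRec} require removing free leaves so that every remaining leaf is fixed.
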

%

%
%
%
%
\vspace{-0.4cm}
\section{Reducing computation time in practice} \label{sec:improvements}

Our numerical results are obtained for an improved  version of Algorithm \ref{alg:projection}. We now explain the main idea behind this algorithm.

The bulk of the complexity of Alg. \ref{alg:projection} comes from line \ref{alg:line:main_alg_compute_rates}, i.e., computing the rates $\{Z'^*(t_i)_j\}_{j \in \V \backslash \B(t_i)}$ from $\B(t_i)$ and $T$. For a fixed $j \in \V \backslash \B(t_i)$, and by Lemma \ref{eq:decomposition_of_problem}, the rate $Z'^*(t_i)_j$, depends only on one particular $(T_w=(r_w,\V_w,\E_w),\B_w)$-problem induced by $\B(t_i)$.
If exactly this same problem is induced by both $\B(t_i)$ and $\B(t_{i+1})$, which happens if the new nodes that become fixed in line \ref{alg:line:main_alg_new_fixed_nodes} of round $i$ of Algorithm \ref{alg:projection} are not in $\V_w \backslash \B_w$, then we can save computation time in round $i+1$, by not recomputing  any rates for $j \in \V_w \backslash \B_w$, and using for $Z'^*(t_{i+1})_j$ the value 
$Z'^*(t_{i})_j$.

Furthermore, if only a few $\{Z'^*_j\}$ change from round $i$ to round $i+1$, then we can also save computation time in computing $\LL''$ from $Z'^*$
by subtracting from the sum in the right hand side of equation  \eqref{eq:computing_ddL_from_dZ} the terms that depend on the previous, now changed, rates, and adding new terms that depend on the new rates.

Finally, if the rate $Z'^*_j$ does not change, then the value of $t < t_i$ at which $Z^*_j(t)$ might intersect $t - N_j$, and become fixed, given by $P_j$ in line \ref{alg:line:next_critical_value}, also does not change. (Note that this is not obvious from the formula for $P_r$ in line \ref{alg:line:next_critical_value}). If not all $\{P_r\}$ change from round $i$ to round $i+1$, we can also save computation time in computing the maximum, and maximizers, in line \ref{alg:line:main_alg_new_fixed_nodes} by storing $P$ in a maximum binary heap, and executing lines \ref{alg:line:next_critical_value} and \ref{alg:line:main_alg_new_fixed_nodes} by extracting all the maximal values from the top of the heap. Each time any $P_r$ changes, the heap needs to be updated.

%
%
%
%

\vspace{-0.5cm}
\section{Numerical results}\label{sec:num_res}
\vspace{-0.3cm}

Our algorithm to solve \eqref{eq:projection_onto_PPM} exactly in a finite number of steps is of
interest in itself. Still, it is  interesting to compare it with other algorithms. In particular, we compare the convergence rate of our algorithm with two popular methods that solve \eqref{eq:projection_onto_PPM} iteratively: the Alternating Direction Method of Multipliers (ADMM), and the Projected Gradient Descent (PGD) method. We apply the ADMM, and the PGD, to both the primal formulation \eqref{eq:projection_onto_PPM}, and the dual formulation \eqref{eq:dual_higher}.
We implemented all the algorithms in C, and derived closed-form updates for ADMM and PG, see Appendix \ref{app:sec:details_of_ADMM_and_PDG}.
We ran all algorithms on a single core of an Intel Core i5 2.5GHz processor. 

Figure \ref{fig:run_time_diff_alg}-(left) compares different algorithms for a random Galton–Watson input tree truncated to have $q=1000$ nodes, with the number of children of each node chosen uniformly within a fixed range, and for a random input $\hat{F}\in\mathbb{R}^q$, with entries chosen i.i.d. from a normal distribution. We observe the same behavior for all random instances that was tested.
We gave ADMM and PGD an advantage by optimally tuning them for each individual problem-instance tested. In contrast, our algorithm requires no tuning, which is a clear advantage. 
At each iteration, the error is measured as $\max_j \{|M_j - M^*_j|\}$. Our algorithm is about $74\times$ faster than its closest competitor (PGD-primal) for $10^{-3}$ accuracy.
In Figure \ref{fig:run_time_diff_alg}-(right), we show the average run time of our algorithm versus the problem size, for random inputs of the same form. The scaling of our algorithm is (almost) linear, and much faster than our $\mathcal{O}(q^2p)$, $p=1$, theoretical bound.
\vspace{-0.3cm}

\begin{figure}[h!] \label{fig:run_time_diff_alg}
\centering
\includegraphics[width=0.49\columnwidth,trim={1.0cm 11cm 2cm 1.3cm},clip]{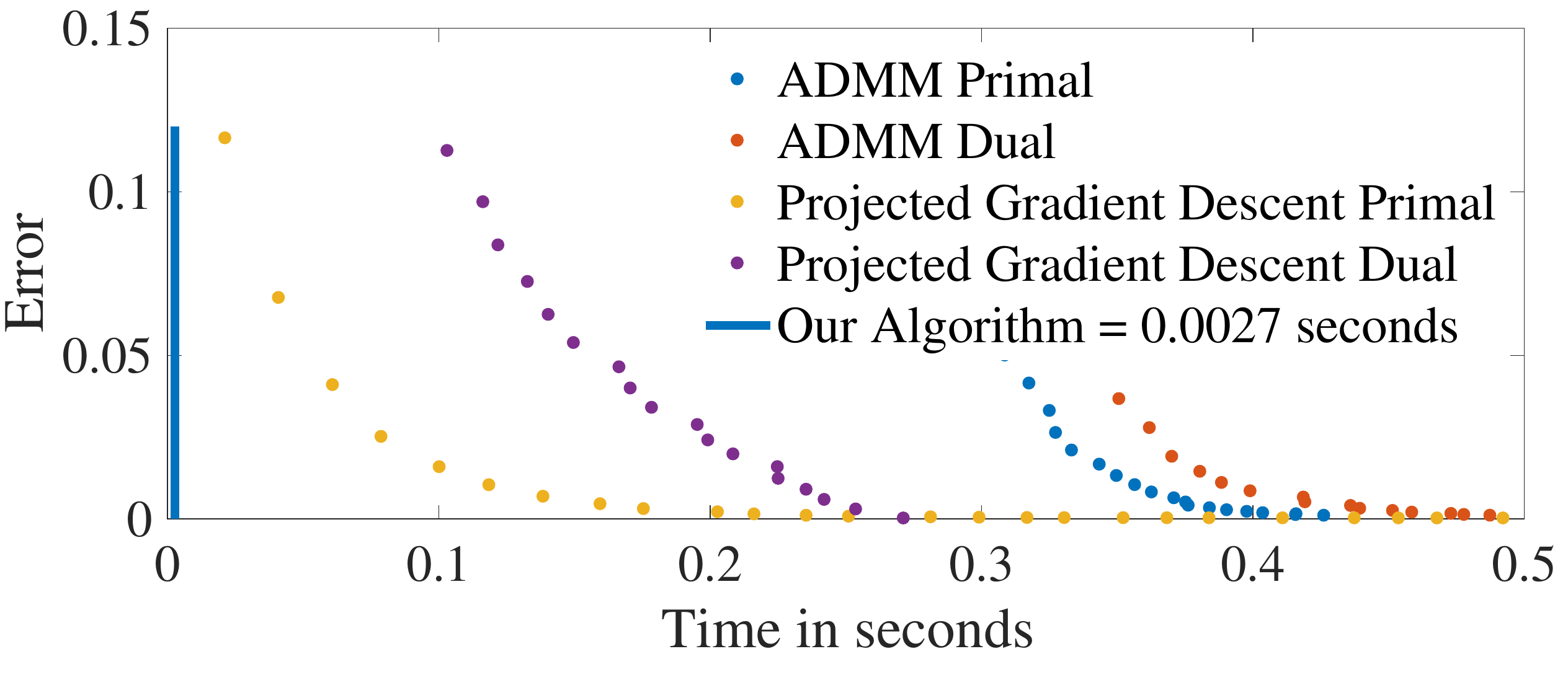}
\;
\includegraphics[width=0.49\columnwidth,trim={0.8cm 9.5cm 1.3cm 10.1cm},clip]{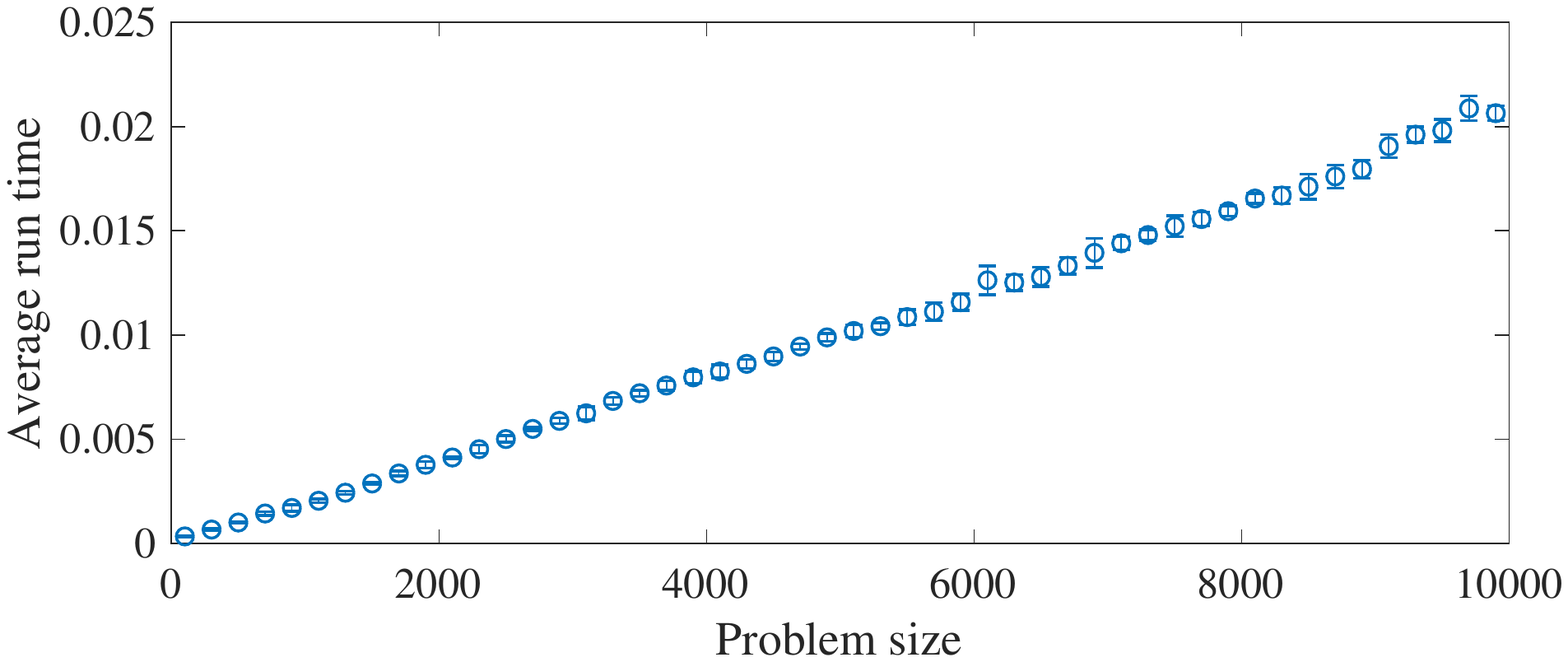}
\caption{\small (Left) Time that the different algorithms take to solve our problem for trees of with $1000$ nodes. (Right) Average run time of our algorithm for problems of different sizes. For each size, each point is averaged over $500$ random problem instances.}
\vspace{-0.35cm}
\end{figure}

Finally, we use our algorithm to exactly solve \eqref{eq:projection_onto_PPM_with_tree}
by computing $\mathcal{C}(U)$ for all trees and a given input $\hat{F}$. Exactly solving \eqref{eq:projection_onto_PPM_with_tree} is very important for biology, since several relevant phylogenetic tree inference problems deal with trees of small sizes. We use an NVIDIA QUAD P5000 GPU to compute the cost of all possible trees with $q$ nodes in parallel, and return the tree with the smallest cost. Basically, we assign to each GPU virtual thread a unique tree, using Prufer sequences \cite{prufer1918neuer}, and then have each thread compute the cost for its tree.
For $q= 10$, we compute the cost of all $100$ million trees in about $8$ minutes, and for $q = 11$, we compute the cost of all $2.5$ billion trees in slightly less than $2.5$ hours.

Code to solve \eqref{eq:projection_onto_PPM} using Alg. \ref{alg:projection}, with the improvements of Section \ref{sec:improvements}, can be found in \cite{gitcode}. More results using our algorithm can be found in Appendix \ref{sec:app:more_results}.

%
%
%
%
\vspace{-0.4cm}
\section{Conclusions and future work}
\vspace{-0.2cm}

We propose a new direct algorithm that, for a given tree,
computes how close the matrix of frequency of mutations per position is to satisfying the perfect phylogeny model. Our algorithm is faster than the state-of-the-art iterative methods for the same problem, even if we optimally tune them.
We use the proposed algorithm to build a GPU-based phylogenetic tree inference engine for the trees of relevant biological sizes. Unlike existing algorithms, which 
only heuristically search a small part of the space of possible trees, our algorithm performs a complete search over all trees relatively fast.
It is an open problem to find direct algorithms that can provably solve our problem
in linear time on average, or even for a worst-case input.

{\bf Acknowledgement: } This work was partially funded by NIH/1U01AI124302, NSF/IIS-1741129, and a NVIDIA hardware grant.

%
%
%
%

\bibliographystyle{unsrt}
\bibliography{references}

%
%
%
%

\newpage
\clearpage

\appendix
{\bf \Large Appendix for ``Efficient Projection onto the Perfect Phylogeny Model''}

\section{Further illustrations}\label{sec:app:further_illustrations}

\begin{figure}[h!] 
\centering
\includegraphics[width=4.5cm]{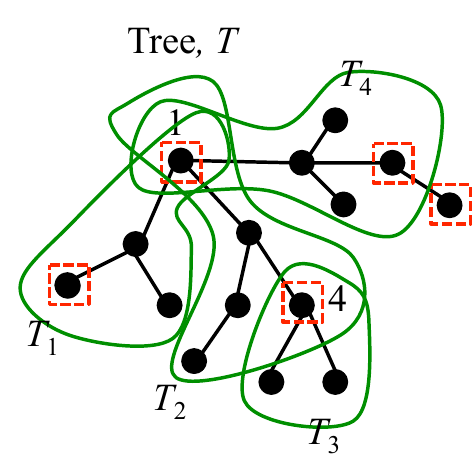}
\caption{\small Four subtrees of $T$ induced by $\B(t)$, represented by the red squares. The root of $T_1$, $T_2$ and $T_4$ is node $1$. The root of $T_3$ is node $4$. All subtrees must have  nodes associated to free variables (free nodes). Any subtree is uniquely identified by any free node in it. Within each subtree, any fixed node must be the root or a leaf.}
\label{fig:tree_and_variables}
\end{figure}

\section{Proof of Theorem \ref{th:dual_prob} in Section \ref{sec:main_results}} \label{sec:app:proof_of_th:dual_prob}

We prove Theorem \ref{th:dual_prob}, by first proving the following very similar theorem.

\begin{theorem} \label{th:compact_dual_prob}
Problem \eqref{eq:projection_onto_PPM} can be solved by solving
\begin{align} 
& \min_{t} t + \mathcal{L}(t),\label{eq:compact_dual_higher}\\[-0.4cm]
& \hspace{1.2cm}\mathcal{L}(t) = \min_{Z\in\mathbb{R}^q} \frac{1}{2}\|(U^\top)^{-1} Z \|^2 \text{ subject to } Z + N \leq t \ones \label{eq:compact_dual},
\end{align}
where $N = U^\top \hat{F}$.
In particular, if $t^*$ minimizes 
\eqref{eq:compact_dual_higher}, 
$Z^*$ minimizes \eqref{eq:compact_dual} for $t=t^*$, and $M^*,F^*$ minimize \eqref{eq:projection_onto_PPM}, then
\begin{equation}\label{eq:compact_Z_star_M_star_relation}
M^* = - U^{-1}(U^{-1})^\top Z^*, \quad F^* = -(U^{-1})^\top Z^*.
\end{equation}
Furthermore, $t^*$, $M^*$, $F^*$ and $Z^*$ are unique.
\end{theorem}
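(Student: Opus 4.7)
The plan is to apply Moreau's decomposition, as the paragraph below the statement suggests. Since $U$ is triangular with unit diagonal (under a topological ordering of the tree), it is invertible, so the change of variable $F = UM$ makes $M \in \Delta := \{M : M \geq 0, \ones^\top M = 1\}$ equivalent to $F \in S := U\Delta$. Problem \eqref{eq:projection_onto_PPM} thus becomes the Euclidean projection $F^* = \operatorname{prox}_{I_S}(\hat{F})$ of $\hat{F}$ onto the nonempty closed convex polytope $S$, with $M^* = U^{-1} F^*$.

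I would next invoke Moreau's identity $\hat{F} = \operatorname{prox}_{I_S}(\hat{F}) + \operatorname{prox}_{\sigma_S}(\hat{F})$, where $\sigma_S$ is the support function of $S$ (the Fenchel conjugate of $I_S$). Because the supremum of a linear functional over the simplex is attained at a vertex, $\sigma_S(u) = \sup_{M \in \Delta}(U^\top u)^\top M = \max_i (U^\top u)_i$. Hence $\operatorname{prox}_{\sigma_S}(\hat{F}) = \arg\min_u \tfrac{1}{2}\|u - \hat{F}\|^2 + \max_i(U^\top u)_i$, which, after introducing the epigraphical variable $t$, rewrites as
$$\min_{u,\,t}\; \tfrac{1}{2}\|u - \hat{F}\|^2 + t \quad \text{subject to } U^\top u \leq t\ones.$$
The substitution $Z := U^\top u - N$, with $N = U^\top \hat{F}$, gives $u - \hat{F} = (U^\top)^{-1} Z$, so $\|u - \hat{F}\|^2 = \|(U^\top)^{-1} Z\|^2$ and the constraint becomes $Z + N \leq t\ones$; this is exactly \eqref{eq:compact_dual_higher}--\eqref{eq:compact_dual}.

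The primal--dual correspondence now falls out of Moreau's identity: $F^* = \hat{F} - u^* = -(U^\top)^{-1} Z^* = -(U^{-1})^\top Z^*$, and $M^* = U^{-1} F^* = -U^{-1}(U^{-1})^\top Z^*$, giving \eqref{eq:compact_Z_star_M_star_relation}. Uniqueness is then short: $F^*$ is the Euclidean projection of $\hat{F}$ onto the nonempty closed convex set $S$, hence unique; $M^* = U^{-1} F^*$ and $Z^* = -U^\top F^*$ inherit uniqueness from the invertibility of $U$; and $t^* = \max_i (U^\top u^*)_i = \max_i (Z^* + N)_i$ is pinned down by $Z^*$.

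The main obstacle is verifying that the epigraphical reformulation together with the substitution $Z = U^\top u - N$ preserves minimizers (not only optimal values), since this is what legitimizes reading $F^*$ and $M^*$ off from $Z^*$ via \eqref{eq:compact_Z_star_M_star_relation}; the sign bookkeeping in $F^* = \hat{F} - u^*$ combined with the change of variables likewise requires care to produce the minus signs in \eqref{eq:compact_Z_star_M_star_relation}.
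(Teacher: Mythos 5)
Your proposal is correct and follows essentially the same route as the paper: both identify problem \eqref{eq:projection_onto_PPM} as the Euclidean projection of $\hat{F}$ onto the polytope $U\Delta$, apply Moreau's decomposition with the support function $\max_i (U^\top u)_i$, introduce the epigraph variable $t$, and perform the change of variables $Z = U^\top(u-\hat{F})$ to arrive at \eqref{eq:compact_dual_higher}--\eqref{eq:compact_dual} and the sign-flipped recovery formulas \eqref{eq:compact_Z_star_M_star_relation}. Your uniqueness argument also matches the paper's (uniqueness of the projection plus invertibility of $U$), so no further work is needed.
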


\begin{proof}[Proof of Theorem \ref{th:compact_dual_prob}]
Problem \eqref{eq:projection_onto_PPM} depends on the tree $T$ through the matrix of the ancestors, $U$. 
To see how Theorem \ref{th:compact_dual_prob} implies Theorem \ref{th:dual_prob}, it is convenient to make this dependency more explicit.
Any tree in $\T$, can be represented through a binary matrix $T$, where $T_{ij}=1$ if and only if node $i$ is the closest ancestor of node $j$. 
Henceforth, let $\T$ denote the set of all such binary matrices.
We need the following lemma, which we prove later in this section of the appendix.
\begin{lemma}\label{th:bijection_U_T}
Consider an evolutionary tree and its matrices $T \in \T$ and  $U \in \U$. We have  
\begin{equation}\label{eq:U_and_T_relation}
U = (I - T)^{-1}.
\end{equation}
\end{lemma}

Eq.~\eqref{eq:U_and_T_relation}
implies that
$((U^{-1})^\top Z)_i = (Z - T^\top Z)_i
 = Z_i - Z_{\bar{i}}$, and that $U^{-1}((U^{-1})^\top Z)_i = Z_i - Z_{\bar{i}} - \sum_{r \in \partial i} (Z_r - Z_{\bar{r}})$, where $\partial i$ denotes the children of $i$ in $T$, 
 $\bar{i}$ represents the closest ancestor of $i$ in $T$. We assume by convention that $Z_{\bar{i}} = 0$ when $i = r$ is the root of $T$.
Furthermore, the definition of $U$ implies that $N_i = (U^\top \hat{F})_i = \sum_{j \in \Delta i} \hat{F}_j$, where $\Delta i$ denotes the ancestors of $j$.
Thus,
\begin{align} \label{eq:mechanical_interp}
\mathcal{L}(t) = &  \min_{Z\in\mathbb{R}^q} \frac{1}{2}\sum_{i \in \V} (Z_i - Z_{\bar{i}})^2 \text{ subject to }\\
&  Z_i \leq t - \sum_{j \in \Delta i} \hat{F}_j \;,\forall i\in \V,\nonumber\\
& M^*_i = -Z^*_i + Z^*_{\bar{i}} + \sum_{r \in \partial i} (Z^*_r - Z^*_{\bar{r}}) \text{ and } \\
& F^*_i = -Z^*_i + Z^*_{\bar{i}}, \forall i\in \V \nonumber. \label{eq:fast_computation_of_M_and_F_from_Z}
\end{align}
\end{proof}

\begin{proof}[Proof of Theorem \ref{th:dual_prob}]
Our proof is based on Moreau's decomposition \cite{parikh-boyd}. Before we proceed with the proof, let us introduce a few concepts.

Given a convex, closed and proper function $g:\mathbb{R}^q \mapsto \mathbb{R}$, we define its proximal operator by the map $G: \mathbb{R}^q \mapsto \mathbb{R}^q$ such that
\begin{equation}\label{eq:def_PO_g}
G(n) = \arg \min_{x\in \mathbb{R}^q} g(x) + \frac{1}{2}\|x - n\|^2,
\end{equation}
where in our case $\|\cdot\|$ is the Euclidean norm. We define the Fenchel dual of $g$ as
\begin{equation}
g^*(x) = \sup_{s\in\mathbb{R}^q} \{x^\top s - g(s)\},
\end{equation}
and we denote the proximal operator of $g^*$ by $G^*$. Note that $G^*$ can be computed from definition \eqref{eq:def_PO_g} by replacing $g$ by $g^*$.

Moreau's decomposition identity states that
\begin{equation}
G(n) + G^*(n) = n.
\end{equation}

We can now start the proof.
Consider the following indicator function
\begin{equation}\label{eq:def_of_g_for_our_prob}
g(\tilde{M}) =  
\begin{cases}
0, \qquad \text{if } (U^{-1}\tilde{M}) \geq 0 \ \text{and} \ \mathbf{1}^{\text{T}} (U^{-1}\tilde{M}) = 1, \\
+\infty, ~~\ \text{otherwise},
\end{cases}
\end{equation}
where $\tilde{M} \in \mathbb{R}^q$, and consider its associated proximal operator $G$.
Solving problem \eqref{eq:projection_onto_PPM}, i.e., finding a minimizer $M^*$, is equivalent to evaluating $U^{-1} G(\hat{F})$. Using Moreau's decomposition, we have
\begin{equation}
M^* = U^{-1} G(\hat{F}) = U^{-1}\hat{F} -  U^{-1}G^*(\hat{F}).
\end{equation}
We will show that $G^*(\hat{F}) = \hat{F} + (U^{-1})^\top Z^*$, where $Z^*$ is a minimizer of \eqref{eq:dual}, which proves \eqref{eq:Z_star_M_star_relation} and essentially completes the proof.

To compute $G^*$, we first need to compute
\begin{align}
g^*(Y) &= \sup_{\tilde{M}} \{Y^\top \tilde{M} - g(\tilde{M})\}\\
&=\max_{\tilde{M}} Y^\top \tilde{M}\label{eq:g_start_middle}\\
& \hspace{0.5cm} \text{ subject to }~~U^{-1} \tilde{M} \geq 0, \ones^\top(U^{-1} \tilde{M})=1\nonumber.
\end{align}
Making the change of variable $M = U^{-1} \tilde{M}$, the maximum in problem \eqref{eq:g_start_middle} can be re-written as 
\begin{align}
&\max_{{M}} (U^\top Y)^\top  M \label{eq:g_start_middle_2}\\
& \hspace{0.cm} \text{ subject to }~~{M} \geq 0, \ones^\top {M}=1\nonumber.
\end{align}
It is immediate to see that
the maximum in \eqref{eq:g_start_middle_2} is achieved 
if we set all components of $M$ equal to zero except the one corresponding to the largest component of the vector $U^\top Y$, which we should set to one.
Therefore, we have 
\begin{equation}
g^*(Y) = \max_i \; (U^\top Y)_i.
\end{equation}
Now we can write
\begin{align}\label{eq:appendix_towards_uniqueness}
G^*(\hat{F}) &= \arg \min_{Y\in \mathbb{R}^q}\; g^*(Y) + \frac{1}{2}\|Y - \hat{F}\|^2\\
&=\arg \min_{Y\in \mathbb{R}^q,t\in\mathbb{R}} \;t + \frac{1}{2}\|Y - \hat{F}\|^2\\
& \hspace{1.3cm} \text{ subject to }~~U^\top Y \leq t\nonumber.
\end{align}

Making the change of variable $Z = U^\top (Y -\hat{F})$, we can write $G^*(\hat{F})$ as
\begin{align}
G^*(\hat{F}) 
&=\hat{F} + (U^{-1})^\top Z^*, \text{ where } \\
(Z^*, t^*) &= \arg \min_{Z\in \mathbb{R}^q,t\in\mathbb{R}} \;t + \frac{1}{2}\|(U^{-1})^\top Z\|^2\\
& \hspace{1cm} \text{ subject to }~~Z + U^\top \hat{F} \leq t\nonumber.
\end{align}

To see that $M^*$ and $F^*$ are unique, notice that 
problem \eqref{eq:projection_onto_PPM} is a projection onto a convex set polytope, which always has a unique minimizer.
Moureau's decomposition implies that $G^*(\hat{F})$ is unique, hence the minimizer $Y^*$ of \eqref{eq:appendix_towards_uniqueness} is unique. Thus, $Z^* = U^\top(Y^* - \hat{F})$ and $t^* = g^*(Y^*)$ are also unique.
\end{proof}

\begin{proof}[Proof of Lemma \ref{th:bijection_U_T}]
We assume that the tree has $q$ nodes.
The matrix $T$ is such that $T_{v,v'}=1$ if and only if $v$ is the closet ancestor of $v'$. Because of this, the $v$th column of $T^k$ has a one in row $v'$ if and only if $v'$ is an ancestor $v$ separated by $k$ generations. Thus, the $v$th column of~$I + T + T^2 +\dots+ T^{q-1}$, contains a one in all the rows $v'$ such that $v'$ is an ancestor of $v$, or if $v=v'$. But this is the definition of the matrix $U$ associated to the tree $T$.
Since no two mutants can be separated by more than $q-1$ generations, $T^k = 0$ for all $k \geq q$. It follows that 
$$U = I + T + T^2 +\dots +T^{q-1} = \sum^\infty_{i=0} T^i = (I-T)^{-1}.$$
\end{proof}
%
%
%
%
%
%
\section{Proof of useful observations in Section \ref{sec:useful_obs}} \label{sec:app:proof_of_useful_obser}
\begin{proof}[Proof of Lemma \ref{th:convexity_of_L_dual}]
The proof follows from the following generic fact, which we prove first. Let 
$g(W) = \min_{Z\in \mathbb{R}^q} f(Z,W)$.  If $f$ is convex in $(Z,W)$, then $g$ is convex.

Indeed, let $\alpha\geq 0$ and $\alpha' =  1 - \alpha$. We get $\alpha g( W_1)  + \alpha'g(W_2) = \min_{Z_1,Z_2} \alpha f(Z_1,W_1) + \alpha' f(Z_2,W_2) \geq  \min_{Z_1,Z_2}  f(\alpha Z_1 + \alpha' Z_2,\alpha W_1 + \alpha' W_2) = g(\alpha W_1 + \alpha' W_2)$.

To apply this result to our problem, let $f_1(Z)$ be the objective of \eqref{eq:mechanical_interp} and let $f_2(Z,t,N)$ be a function (on the extended reals) such that $f_2 = 0 $ if $(Z,t,N)$ satisfy the constraints in \eqref{eq:mechanical_interp} and $+\infty$ otherwise. Now notice that $\LL(t) = \min_{Z} f_1(Z) + f_2(Z,t)$, where $f_1 + f_2$ is convex in $(Z,t,N)$, since both $f_1$ and $f_2$ are convex in $(Z,t,N)$.
Convexity implies that $\LL$ is continuous in $N$ and $t$. It also implies that $\LL'(t)$ is non increasing~in~$t$.
\end{proof}
\begin{proof}[Proof of Lemma \ref{th:continuity_of_Z_t}]
{Continuity of $Z^*(t)$: }
The objective function in \eqref{eq:mechanical_interp} is convex as a function of $Z$ and has unique minimum at $Z_i = 0, \forall i$. Hence, it is strictly convex. Due to strict convexity, if the objective takes values in a small interval, then $Z$ must be inside some small ball.

Since we know, by the remark following Lemma \ref{th:convexity_of_L_dual}, that $\LL$ is continuous as a function of $t$, if $t$ and $t'$ are close, then $\LL(t)$ and $\LL(t')$ must be close.  Strict convexity then implies that $Z^*(t)$ and $Z^*(t')$ must be close.
The same argument can be used to prove continuity with respect to $N$.

{Continuity of $Z^*(t^*)$: }
Recall that $Z^*(t^*) = Z^*$, the solution of 
\eqref{eq:projection_onto_PPM}.
$Z^*$ is a continuous function of $M^*$, which
is the solution to \eqref{eq:projection_onto_PPM},
and thus is fully determined by $U$ and $\hat{F}$.
Since, $\hat{F} = (U^\top)^{-1} N$, $\hat{F}$ is a continuous function of $N$, and it suffices to prove that $M^*$ is continuous in $\hat{F}$.
Problem \eqref{eq:projection_onto_PPM} finds the projection of $\hat{F}$
onto a convex polytope. Let $F^*$ be this projection.
Since $F^*$ changes continuously with $\hat{F}$, $M^* = U^{-1} F^*$
also changes continuously with $\hat{F}$.
\end{proof}
\begin{proof}[Proof of Lemma \ref{th:B_piece_wise_constant}]
Since $Z^*(t)$ is continuous, if $Z^*(t)_i \neq t - N_i$ then $Z^*(t')_i \neq t' - N_i$ for $t'$ in some neighborhood of $t$.
\end{proof}
\begin{proof}[Proof of Lemma \ref{eq:decomposition_of_problem}]
First note that, by definition of $\B(t)$, we know the value of all variables in $\B(t)$. Hence, the unknowns in problem \eqref{eq:mechanical_interp} are the variables in $\V \backslash \B(t)$, which can be partitioned into disjoint sets $\{\V_i \backslash \B(t)\}^k_{i=1}$.

Second notice that for each term in the objective \eqref{eq:mechanical_interp} that involves not known variables, there is some subtree $T_i$ that contains both of its variables. 
It follows that, given $\B(t)$, problem \eqref{eq:mechanical_interp} breaks into $k$ independent problems, the $i$th problem having as unknowns only the variables in $\V_i \backslash \B(t)$ and all terms in the objective where either $j$ or $\bar{j}$ are in $\V_i \backslash \B(t)$. 

Obviously, if $j \in \mathcal{V}_w  \cap \B(t)$, then, by definition, $Z^*(t)_j = c_1 t + c_2$, with $c_1 = 1$.
To find the behavior of $Z^*(t)_j$ for $j \in \mathcal{V}_w  \backslash \B(t)$, we need to solve \ref{eq:simpler_sub_problem}.
To solve \eqref{eq:simpler_sub_problem}, notice that the first-order optimality conditions for problem \eqref{eq:mechanical_interp}
imply that, if $j \in \V \backslash \B(t)$, then
\begin{equation}
Z_j = \frac{1}{|\partial j|} \sum_{r \in \partial j } Z_r,
\end{equation}
where $\partial j$ denotes the neighbors of node $j$. We can further write
\begin{align}
&Z_j = \frac{1}{|\partial j|} \sum_{r \in \partial j \cap \B(t)} Z_r + \frac{1}{|\partial j|} \sum_{r \in \partial j \backslash \B(t)}, \nonumber\\
& Z_r
=\frac{1}{|\partial j|} \sum_{r \in \partial j \cap \B(t)} (t - N_r) + \frac{1}{|\partial j|} \sum_{r \in \partial j \backslash \B(t)} Z_r\label{eq:sol_for_Z_j}.
\end{align}
It follows that 
$Z_j = c_1t + c_2$, for some $c_1$ and $c_2$ that depend on $T$, $N$ and $\B$.
If we solve for $Z_j$ by recursively applying 
\eqref{eq:sol_for_Z_j}, it is immediate to see that $c_1 \geq 0$.

To see that $c_1\leq 1$, we study how $Z_j$, defined by \eqref{eq:sol_for_Z_j}, depends on $t$ algebraically. To do so, we treat $t$ as a variable. The study of this algebraic dependency in the proof should not be confused with $t$ being fixed in the statement of the theorem.

Define $\rho = |\partial i \cup \B(t)|/|\partial j|$, and notice that 
\begin{align}
\max_j\{ Z_j\} \leq \rho t + (1-\rho) \max_j \{Z_j\} + C,
\end{align}
in which $C$ is some constant. Recursively applying the above inequality we get 
\begin{equation}
\max_j\{ Z_j\} \leq t + C',
\end{equation}
in which $C'$ is some constant.
This shows that no $Z_j$ can grow with $t$ faster than $1 \times t$ and hence $c_1 \leq 1$.
\end{proof}

\begin{proof}[Proof of Lemma \ref{th:Z_and_Lprime_are_piece_wise_linear}]
Lemma \ref{eq:decomposition_of_problem} implies that, for any $j$, $Z^*_j(t)$ depends linearly on $t$. The particular linear dependency, depends on $\B(t)$, which is piecewise constant by Lemma \ref{th:B_piece_wise_constant}. Therefore, $Z^*_j(t)$ is a continuous piecewise linear function of $t$. This in turn implies that $\LL'(t)$ is a continuous piecewise linear function of $t$, since it is the derivative of the 
continuous piecewise quadratic $\LL(t) = (1/2)\sum_{i \in \V} (Z^*(t)_i - Z^*(t)_{\bar{i}})^2$. Finally, since the particular  linear dependency of $Z^*$, depends on $\B(t)$,
it follows that $Z^*(t)$ and $\LL'(t)$ change linear segment if and only if $\B(t)$ changes.
\end{proof}
\begin{proof}[Proof of Lemma \ref{eq:B_always_grows}]
Let us assume that there exists $t < t'$ for which $\B(t) \subset \B(t')$. We can assume without loss of generality that $t$ is sufficiently close to $t'$ such that $\B(s)$ is constant for $s\in [t,t')$. 
Let $j$ be such that $j\in B(t')$ but $j\notin B(t)$. This means that $Z^*_j(s) < s - N_j$ for all $s\in [t,t')$ and that
$Z^*_j(t') = t' - N_j$. Since by Lemma \ref{eq:decomposition_of_problem}, $Z^*_j(s) = c_1 s + c_2$, for some constants $c_1$ and $c_2$, the only way that $Z^*_j(s)$ can intersect $s - N_j$ at $s = t'$ is for 
$c_1 > 1$, which is a contradiction.

If $\B(t)$ decreases as $t$ increase, and given that the largest that $\B(t)$ can be is $\{1,\dots,q\}$, it follows that $\B(t)$ can only take $q+1$ different configurations. One configuration per size of $\B(t)$, from $q$~to~$0$.
\end{proof}
\begin{proof}[Proof of Lemma \ref{th:Z_and_Lprime_finite_break_points}]
Lemma \ref{eq:B_always_grows} implies that
$\B(t)$ changes at most $q+1$ times. Lemma \ref{th:Z_and_Lprime_are_piece_wise_linear} then implies that $Z^*(t)$ and $\LL'(t)$ have less than $q+1$ different linear segments.
\end{proof}

%
%
%
%
%

\section{Proofs of the properties of the algorithm in Section \ref{sec:main_alg}}

\begin{proof}[Proof of Theorem \ref{th:main_alg_complexity}]
{\bf{Run-time:}} Recall that $Z^*, Z'^* \in \mathbb{R}^q$ and that $\LL' \in \mathbb{R}$.
Line \ref{alg:line:compute_N_from_F} is done in $\mathcal{O}(q)$ steps by doing a DFS on $T$. Here, we assume that $T$ is represented as a linked list. Specifically, starting from the root, we keep a variable $x$ where we accumulate the values of $\hat{F}_j$ visited from the root to the current node being explored in $T$ as we move down the tree. As we move up the tree, we subtract values of the nodes $\hat{F}_j$ from $x$. Then, at each node $i$ visited by the DFS, we can read from $x$ the value $N_i$.
Line \ref{alg:line:init} takes $\mathcal{O}(q)$ steps to finish.
The procedure ComputeRates takes $\mathcal{O}(q)$ steps to finish, which we prove in Theorem \ref{th:compute_rates_main_theorem}. All of the other lines inside the for-loop are manipulations that take at most $\mathcal{O}(q)$ steps.
Lines \ref{alg:line:find_Zstar} and \ref{alg:line:find_tstar} take $\mathcal{O}(q)$ steps.
From \eqref{eq:Z_star_M_star_relation}, the complexity to compute $F^*$ is $\mathcal{O}(q)$, and the complexity to compute $M^*$ is $\mathcal{O}(\sum_{i \in \V} |\partial i|) = \mathcal{O}(|\E|) = \mathcal{O}(q)$.

{\bf{Memory: }} The DFS in line \ref{alg:line:compute_N_from_F} only requires $\mathcal{O}(q)$  memory.
Throughout the algorithm, we only need to keep the two most recent values of $t_i$, $\B(t_i)$, $Z^*(t_i)$, $Z'^*(t_i)$, $\LL'(t_i)$ and $\LL''(t_i)$. This takes $\mathcal{O}(q)$ memory. The procedure ComputeRates takes $\mathcal{O}(q)$ memory, which we prove in Theorem \ref{th:compute_rates_main_theorem}.
\end{proof}
\begin{proof}[Proof of Theorem \ref{th:alg_correctness}]

The proof of Theorem \ref{th:alg_correctness} amounts to checking that, at every step of Algorithm \ref{alg:projection}, the quantities computed, e.g., the paths $\{Z^*(t)\}$ and $\{\LL'(t)\}$, are correct.

Lemmas \ref{th:Z_and_Lprime_are_piece_wise_linear} and \ref{th:Z_and_Lprime_finite_break_points} prove that $Z^*(t)$ and $\LL'(t)$ 
are piecewise linear and continuous with at most $q$ changes in linear segment. Hence, the paths $\{Z^*(t)\}$ and $\{\LL'(t)\}$ are fully specified by their value at $\{t_i\}^k_{i=1}$, and $k \leq q$.

Lemma \ref{th:Z_and_Lprime_are_piece_wise_linear} proves that these critical values are determined as the instants, at which $\B(t)$ changes. Furthermore, Lemma \ref{eq:B_always_grows} proves that, as $t$ decreases, variables are only added to $\B(t)$. Hence, to find $\{t_i\}$ and $\{\B(t_i)\}$, we only need to find the times and components at which, as $t$ decreases, $Z^*(t)_r$ goes from $Z^*(t)_r < t - N_r$ to $Z^*(t)_r = t - N_r$. Also, since $\B$ can have at most $q$ variables, the for-loop in line \ref{alg:line:forloop} being bounded to the range $1$-$q$, does not prevent the algorithm from finding any critical value.

Theorem \ref{th:compute_rates_main_theorem} tells us that we can compute $Z'^*(t_i)$ from $\B(t_i)$ and $T$. Since we have already proved that the path $\{Z^*(t)\}$ is piecewise linear and continuous, we can compute $t_{i+1}$, and the variables that become fixed, by solving \eqref{eq:formula_for_next_t} for $t$ for each $r \notin \B(t_i)$, and choosing for $t_{i+1}$  the largest such $t$, and choosing for the new fixed variables, i.e., $\B(t_{i+1})-\B(t_{i})$, the components $r$ for which the solution of \eqref{eq:formula_for_next_t} is $t_{i+1}$.

Since we have already proved that that $Z^*(t)$ and $\LL'(t)$ 
are piecewise linear and constant, we can compute $Z^*(t_{i+1})$ and $\LL'(t_{i+1})$ from $Z^*(t_{i})$, $\LL'(t_{i})$, $Z'^*(t_{i})$ and $\LL''(t_{i})$ using \eqref{eq:liner_relation_between_critical_points}.

Lemma \ref{th:convexity_of_L_dual} proves that $\LL'(t)$ decreases with $t$, and Theorem \ref{th:dual_prob} proves that $t^*$ is unique. Hence, as $t$ decreases, there is a single $t$ at which $\LL'(t)$ goes from $>-1$ to $< -1$. Since we have already proved that we correctly, and sequentially, compute  $\LL'(t_i)$,  $\LL''(t_i)$, and that $\LL'(t)$ is piecewise linear and constant, we can stop computing critical values whenever we can determine that $\LL'(t) = \LL'(t_k) + (t-t_k) \LL''(t_k)$ will cross the value $-1$, where $t_k$ is the latest computed critical value. This is the case when $\LL'(t_k) > -1$ and $\LL'(t_{k+1}) < -1$, or when $\LL'(t_k) > -1$ and $t_k$ is the last possible critical value, which happens when $|\B(t_i)| = q$.
From this last critical value, $t_k$, we can then find $t^*$ and $Z^*$ by solving $-1  = \LL'(t_k) + (t^*-t_k) \LL''(t_k)$ and $Z^* =  Z^*(t_k) + (t^* - t_k) Z'^*(t_k)$.
Finally, once we have $Z^*$, we can use \eqref{eq:Z_star_M_star_relation}  in Theorem \ref{th:dual_prob} to find $M^*$ and $F^*$.
\end{proof}

%
%
%
%
\section{Proofs for computing the rates in Section \ref{sec:computing_rates}} \label{sec:app:proofs_for_rates}

\begin{proof}[Proof of Lemma \ref{th:computing_ddL_from_dZ}]
Let $t \in (t_{i+1},t_{i})$. We have,
\begin{align}
&\LL'(t) = \frac{{\rm d} }{{\rm d} t} \frac{1}{2}\sum_{j \in \V} (Z^*(t)_j - Z^*(t)_{\bar{j}})^2= \nonumber\\
& \sum_{i \in \V} (Z^*(t)_j - Z^*(t)_{\bar{j}})({Z^*}'(t)_j - {Z^*}'(t)_{\bar{j}}).
\end{align}
Taking another derivative, and recalling that $Z''^*(t) = 0$ for $t \in (t_{i+1},t_{i})$, we get
\begin{align}
&\LL''(t) = \sum_{j \in \V} ({Z^*}'(t)_j - {Z^*}'(t)_{\bar{j}})^2,
\end{align}
and the lemma follows by taking the limit $t \uparrow t_i$.
\end{proof}

\begin{proof}[Proof of Lemma \ref{th:pruning}]

The $(T,\B,\alpha,\beta,\gamma)$-problem is unconstrained and convex, hence we can solve it by taking derivatives of the objective with respect to the free variables, and setting them to zero. Let us call the objective function $F(Z)$. 
If $j \in \V \backslash \B$ is a leaf, then $\frac{{\rm d} F}{{\rm d} Z_j} = 0$ implies that $Z^*_j = Z^*_{\bar{j}}$. 
We now prove the second part of the lemma. Let $\tilde{F}(Z)$ be the objective of the modified problem. Clearly,  $\frac{{\rm d} F}{{\rm d} Z_i} = \frac{{\rm d} \tilde{F}}{{\rm d} Z_i}$ for all $i \in \tilde{T} \backslash \bar{j}$. 
Let $C$ be the children of $\bar{j}$ in $T$ and $\tilde{C}$ be the children of $\bar{j}$ in $\tilde{T}$. We have $\tilde{C} = C \backslash j$. Furthermore, $\frac{{\rm d} \tilde{F}}{{\rm d} Z_j} = 0$ is equivalent to $\gamma_j (Z_j - Z_{\bar{j}}) + \sum_{s \in \tilde{C}} \gamma_s (Z_j - Z_s) = 0$, and $\frac{{\rm d} F}{{\rm d} Z_j} = 0$ is equivalent to $\gamma_j (Z_j - Z_{\bar{j}})  + \sum_{s \in C} \gamma_s (Z_j - Z_s) = 0$. However, we have already proved that the optimal solution for the original problem has $Z^*_j= Z^*_{\bar{j}}$. Hence, this condition can be replaced in $\frac{{\rm d} F}{{\rm d} Z_j}$, which becomes $\gamma_j (Z_j - Z_{\bar{j}}) + \sum_{s \in \tilde{C}} \gamma_s (Z_j - Z_s) = 0$. Therefore, the two problems have the same optimality conditions, which implies that $Z^*_i = \tilde{Z}^*_i$, for all $i \in \tilde{\V}$.
\end{proof}

\begin{proof}[Proof of Lemma \ref{th:star_problem}]
The proof follows directly from the first order optimality conditions, a linear equation that we solve for $Z^*_1$.
\end{proof}
\begin{proof}[Proof of Lemma \ref{th:tree_reduction}]
The first order optimality conditions for both problems are a system of linear equations, one equation per free node in each problem.
All the equations associated to the ancestral nodes of $j$ are the same for both problems. 
The equation associated to variable $j$ in the $(T,\B,\alpha,\beta,\gamma)$-problem is
\begin{equation}
\gamma_j (Z_{\bar{j}} - Z_j) + \sum^r_{i=1} \gamma_i (Z_i - Z_j) = 0,
\end{equation}
which implies that 
\begin{equation}\label{eq:Z_j_in_equiv_reduction}
Z_j  = \frac{ \gamma_j Z_{\bar{j}} + \sum^r_{i=1} \gamma_i Z_i }{ \gamma_j + \sum^r_{i=1} \gamma_i }.
\end{equation}

The equation associated to the variable $\bar{j}$ in the $(T,\B,\alpha,\beta,\gamma)$-problem is
\begin{equation} \label{eq:Z_j_bar_in_equiv_reduction}
F(Z,\alpha,\beta,\gamma) + \gamma_j (Z_j - Z_{\bar{j}}) = 0,
\end{equation}
where $F(Z)$ is a linear function of $Z$ determined by the tree structure and parameters associated to the ancestral edges and nodes of $\bar{j}$.
The equation associated to the variable $\bar{j}$ in the $(\tilde{T},\tilde{\B},\tilde{\alpha},\tilde{\beta},\tilde{\gamma})$-problem is
\begin{equation}\label{eq:Z_j_bar_in_equiv_reduction_smaller_problem}
F(\tilde{Z},\tilde{\alpha},\tilde{\beta},\tilde{\gamma}) + \tilde{\gamma}_j (\tilde{\alpha}_j t + \tilde{\beta}_j - \tilde{Z}_{\bar{j}}) = 0,
\end{equation}
for the same function $F$ as in \eqref{eq:Z_j_bar_in_equiv_reduction}.
 Note that the components of $\tilde{\alpha}$, $\tilde{\beta}$ and $\tilde{\gamma}$ associated to the ancestral edges and nodes of $\bar{j}$ are the same as in $\alpha$, $\beta$ and $\gamma$. Hence, 
$F(\tilde{Z},\tilde{\alpha},\tilde{\beta},\tilde{\gamma}) = F(\tilde{Z},\alpha,\beta,\gamma)$.
 
 By replacing \eqref{eq:Z_j_in_equiv_reduction} into \eqref{eq:Z_j_bar_in_equiv_reduction}, one can easily check the following. Equations 
\eqref{eq:Z_j_bar_in_equiv_reduction} and  \eqref{eq:Z_j_bar_in_equiv_reduction_smaller_problem}, as  linear equations on $Z$ and $\tilde{Z}$ respectively,
have the same coefficients if \eqref{eq:reduction_equations} holds. Hence, if \eqref{eq:reduction_equations} holds, the solution to the linear system associated to the optimality conditions in both problem gives the same optimal value for all variables ancestral to $\bar{j}$ and including $\bar{j}$.
\end{proof}
\begin{proof}[Proof of Theorem \ref{th:complexity_of_rec_reduce}]

Although $T$ changes during the execution of the algorithm, in the proof we let $T = (r,\V,\E)$ be the tree, passed to the algorithm at the zeroth level of the recursion. Recall that $|\V|= q$ and $\E = q-1$.

{\bf{Correctness: }} The correctness of the algorithm follows directly from Lemmas \ref{th:pruning}, \ref{th:star_problem}, and \ref{th:tree_reduction} and the explanation following these lemmas.

{\bf{Run-time: }} It is convenient to think of the complexity of the algorithm by assuming that it is running on a machine with a single instruction pointer that jumps from line to line in Algorithm \ref{alg:compute_rates_rec}. With this in mind, for example, the recursive call in line \ref{alg:line:rec:6} simply makes the instruction pointer jump from line \ref{alg:line:rec:6} to line \ref{alg:line:rec:1}. The run-time of the algorithm is bounded by the
sum of the time spent in each line in Algorithm \ref{alg:compute_rates_rec}, throughout its entire execution. Each basic step costs one unit of time.
Each node in $\V$ is only chosen as $j$ at most once, throughout the entire execution of the algorithm. Hence, line \ref{alg:line:rec:1} is executed at most $q$ times, and thus any line is executed at most $q$ times, at most once for each possible choice for $j$.

Assuming that we have $T.b$ updated, $j$ in line \ref{alg:line:rec:1} can be executed in $\mathcal{O}(1)$ time, by reading the first element of the linked list $T.b$. Lines \ref{alg:line:rec:2} and \ref{alg:line:rec:6} also take $\mathcal{O}(1)$ time. Here, we are thinking of the cost of line \ref{alg:line:rec:6} as simply the cost to make the instruction pointer jump from line \ref{alg:line:rec:1} to line \ref{alg:line:rec:6}, not the cost to 
fully completing the call to \emph{ComputeRatesRec} on the modified problem.
The  modification made to the $(T,\B,\alpha,\beta,\gamma)$-problem by lines \ref{alg:line:rec:5} and \ref{alg:line:rec:7}, is related to the addition, or removal, of at most $\text{degree}(j)$ nodes, where $\text{degree}(j)$ is the degree of $j$ in $T$. Hence, they can be executed in $\mathcal{O}(\text{degree}(j))$ steps. Finally, lines \ref{alg:line:rec:3} and \ref{alg:line:rec:8} require solving a star-shaped problem with $\mathcal{O}(\text{degree}(j))$ variables, and thus take $\mathcal{O}(\text{degree}{j})$, which can be observed by inspecting \eqref{eq:star_solution}.

Therefore, the run-time of the algorithm is bounded by $\mathcal{O}(\sum_j \text{degree}(j)) = \mathcal{O}(q)$.

To see that it is not expensive to keep $T$ updated, notice that, if $T$ changes, then either $T.b$ loses $j$ (line \ref{alg:line:rec:5}) or  has $j$ reinserted (line \ref{alg:line:rec:7}), both of which can be done in $\mathcal{O}(1)$ steps. Hence, we can keep $T.b$ updated with only $\mathcal{O}(1)$ effort each time we run
line \ref{alg:line:rec:5} and line \ref{alg:line:rec:7}.
Throughout the execution of the algorithm, the tree $T$ either shrinks by loosing nodes that are children of the same parent (line \ref{alg:line:rec:5}), or $T$ grows by regaining nodes that are all siblings (line \ref{alg:line:rec:7}). Hence, the linked list $T.a$ can be kept updated with only $\mathcal{O}(1)$ effort each time we run line \ref{alg:line:rec:5} and line \ref{alg:line:rec:7}.
Across the whole execution of the algorithm, $T.a$ and $T.b$ can be kept updated with $\mathcal{O}(\sum_j \text{degree}(j)) = \mathcal{O}(q)$ effort.

{\bf{Memory: }} All the variables with a size that depend on $q$ are passed by reference in each call of \emph{ComputeRatesRec}, namely, $Y$, $T$, $\B$, $\alpha$, $\beta$ and $\gamma$. Hence, we only need to allocate memory for them once, at the zeroth level of the recursion. All these variables take $\mathcal{O}(q)$ memory to store.
\end{proof}

\begin{proof}[Proof of Lemma \ref{th:rates_only}]
From Definition \ref{def:sub_problem_T_B_A_B_G}, we know that the $(T,\B,{\bf 1},-N,{\bf 1})$-problem
and the $(T,\B)$-problem are the same.
Hence, it is enough to prove that the solutions of (i) any $(T,\B,\alpha,\beta,\gamma)$-problem and of (ii) the 
$(T,\B,\alpha,0,\gamma)$-problem change at the same rate as a function of $t$.

We have already seen that the $(T,\B,\alpha,\beta,\gamma)$-problem can be solved by recursively invoking Lemma \ref{th:tree_reduction} until we arrive at problems that are small enough to be solved via Lemma \ref{th:star_problem}. 

We now make two observations. First, while recursing, Lemma \ref{th:tree_reduction} always transform a $(\tilde{T},\tilde{\B},\tilde{\alpha},\tilde{\beta},\tilde{\gamma})$-problem into a smaller problem $(\tilde{\tilde{T}},\tilde{\tilde{\B}},\tilde{\tilde{\alpha}},\tilde{\tilde{\beta}},\tilde{\tilde{\gamma}})$-problem where, by \eqref{eq:reduction_equations}, $\tilde{\tilde{\gamma}}$ and $\tilde{\tilde{\alpha}}$ only depend on $\tilde{\alpha}$ and $\tilde{\gamma}$ but not on $\tilde{\beta}$. 

Second, while recursing, and each time Lemma \ref{th:star_problem} is invoked to compute an explicit value for some component of the solution via solving some star-shaped $(\tilde{\tilde{T}},\tilde{\tilde{\B}},\tilde{\tilde{\alpha}},\tilde{\tilde{\beta}},\tilde{\tilde{\gamma}})$-problem, the rate of change of this component with $t$, is a function of $\tilde{\tilde{\alpha}}$ and $\tilde{\tilde{\gamma}}$ only. We can see this from \eqref{eq:star_solution}.

Hence, the rate of change with $t$ of the solution of the $(T,\B,\alpha,\beta,\gamma)$-problem does not depend on $\beta$. So we can assume $\beta = 0$.
\end{proof}
\begin{proof}[Proof of Theorem \ref{th:compute_rates_main_theorem}]

{\bf{Correctness: }} The correctness of Algorithm \ref{alg:compute_rates} follows from the correctness of Algorithm \ref{alg:compute_rates_rec}. 

{\bf{Run-time and memory: }}
We can prune each $T_w$ in $\mathcal{O}(|T_w|)$ steps and $\mathcal{O}(1)$ memory using DFS. In particular, once we reach a leaf of $T_w$ that is free, i.e., not in $\B_w$, and as DFS travels back up the tree, we can prune from $T_w$ all the nodes that are free.
By Theorem \ref{th:complexity_of_rec_reduce}, the number of steps and memory needed to completely finish  line \ref{alg:line:rates:4} is $\mathcal{O}(|T_w|)$. The same is true to complete line \ref{alg:line:rates:5}.
Hence, the number of steps and memory required to execute the for-loop is $\mathcal{O}(\sum_w |T_w|)=\mathcal{O}(|T|)=\mathcal{O}(q)$.
Finally, by Theorem \ref{th:computing_ddL_from_dZ}, $\LL''$ can be computed from $Z'^*$ in $\mathcal{O}(q)$ steps using $\mathcal{O}(1)$ memory.
\end{proof}

\section{Details of the ADMM and the PGD algorithms in Section \ref{sec:num_res}} \label{app:sec:details_of_ADMM_and_PDG}

Here we explain the details of our implementations of the Alternating Direction Method of Multipliers (ADMM) and the Projected Gradient Descent (PGD) methods, applied to our problem.

\subsection{ADMM}

\subsubsection{ADMM for the primal problem}

We start by putting our initial optimization problem (\ref{eq:projection_onto_PPM}) into the following equivalent form:
\begin{equation}
\min_{M \in \mathbb{R}^{q}} \{f(M) = \frac{1}{2} \|F - U M \|^2\} + g(M),
\end{equation}
where $g(M)$ is the indicator function imposing the constraints on $M$:
\begin{equation}
g(M) := 
\begin{cases}
0, \qquad M \geq 0,  M^\top \ones = \ones, \\
+\infty, \ ~~\text{otherwise}.
\end{cases}
\end{equation}

In this formulation, our target function is a sum of two terms. We now proceed with the standard ADMM procedure, utilizing the splitting $f$, $g$. Our ADMM scheme iterates on the following variables $M, M_1, M_2, u_1, u_2, \in \mathbb{R}^q$. $M_1$ and $M_2$ are primal variables, $M$ is a consensus variable, and $u_1$ and $u_2$ are dual variables.
It has tunning parameters $\alpha,\rho \in \mathbb{R}$.

First, we evaluate the proximal map associated with the first term
\begin{equation}
M_1 \gets \argmin_{S \in \mathbb{R}^{q}} \frac{1}{2} \|F - U S \|^2 + \frac{\rho}{2} \|S - M + u_1 \|^2,
\end{equation}
where $S$ is a dummy variable. This map can be evaluated in closed form,
\begin{equation}
M_1 = (\rho I + U^\top U)^{-1} (\rho M - \rho u_1 + U^\top F). 
\end{equation}

Second, we evaluate the proximal map associated with the second term
\begin{equation}
M_2 \gets \argmin_{S \in \mathbb{R}^{q}} g(S) + \frac{\rho}{2} \|S - M + u_2 \|^2,
\end{equation}
where $S$ is again a dummy variable. This map is precisely the projection onto the simplex, which has been extensively studied in the literature; there are many fast algorithms that solve this problem exactly. We implemented the algorithm proposed in \cite{condat2016fast}.

Lastly, we perform the rest of the standard ADMM updates:
\begin{equation}
\begin{aligned}
M &\gets \frac{1}{2}(M_1 + u_1 +  M_2 + u_2), \\
u_1 &\gets u_1 + \alpha (M_1 - M), \\
u_2 &\gets u_2 + \alpha (M_2 - M). \\
\end{aligned}
\end{equation}
We repeat the above steps until a satisfactory precision is reached, and read off the final solution from the variable $M$. 

\subsubsection{ADMM for the dual problem}

We now apply ADMM to the dual problem (\ref{eq:dual_higher}). We start by incorporating the constraints into the target function to rewrite (\ref{eq:dual_higher}) as
\begin{equation}
\min_{Z, t} \{f(t) = t\} + \{h(Z) = \frac{1}{2} \|(U^\top)^{-1} Z\|^2 \} + g(t, Z),
\end{equation}
where 
\begin{equation}
g(t, Z) := 
\begin{cases}
0, \qquad t \ones - Z \geq N, \\
+\infty, \ ~~\text{otherwise},
\end{cases}
\end{equation}
is the indicator function imposing the constraints on $t, Z$. ADMM now splits the problem into three parts, each associated to one of the functions $f,g$ and $h$.

Our ADMM scheme will iterate on the following variables $Z, X_Z, X_{gZ}, u_Z, u_{gZ} \in \mathbb{R}^q$, and $t, X_t, X_{gt}, u_t, u_{gt} \in \mathbb{R}$. The variables $X_Z, X_{gZ},X_t, X_{gt}$ are primal variables, $t,Z$ are consensus variables, and $u_Z, u_{gZ},u_t, u_{gt}$ are dual variables.
It has tunning parameters $\alpha,\rho \in \mathbb{R}$.

First, we evaluate the proximal map for the first term
\begin{equation}
X_Z \gets \argmin_{S \in \mathbb{R}^q} \frac{1}{2} \|(U^\top)^{-1} S\|^2 + \frac{\rho}{2} \|S - Z + u_Z\|^2,
\end{equation}
where $S$ is a dummy variable. This map can be evaluated using an closed form formula:
\begin{equation}
X_Z = (\rho I + U^{-1} (U^{-1})^{\top})^{-1} \rho (Z - u_Z).
\end{equation}

Next, we evaluate the proximal map for the second term
\begin{equation}
X_t \gets \argmin_{S \in \mathbb{R}} S + \frac{\rho}{2} (S - t + u_t)^2,
\end{equation}
where $S$ is a dummy variable. 
Again, this can be solved straightforwardly:
\begin{equation}
X_t = \frac{\rho t - \rho u_t - 1}{\rho}.
\end{equation}
We then evaluate the proximal map for the third term, which involves the constraints
\begin{align}
& (X_{gZ}, X_{gt}) \gets \argmin_{S \in \mathbb{R}^q, S_t \in \mathbb{R}} g(S, S_t)
+ \frac{\rho}{2} \|(S, S_t)\nonumber\\
&  - (Z - u_{gZ}, t - u_{gt})\|^2,
\end{align}
where $S, S_t$ are dummy variables. This problem is a projection onto the polyhedron defined by the constraints, $t \ones - Z \geq N$, in $\mathbb{R}^{q+1}$. We developed an algorithm that solves this problem exactly in $\mathcal{O}(q \log q)$ steps. This is discussed in Section \ref{projection-onto-polyhedron}.

What is left to be done is the following part of the ADMM:
\begin{equation}
\begin{aligned}
& Z \gets \frac{1}{2}(X_Z + u_Z +  X_{gZ} + u_{gZ}), \\
& u_Z \gets u_Z + \alpha (X_Z - Z), \\
& u_{gZ} \gets u_{gZ} + \alpha (X_{gZ} - Z), \\
& t \gets \frac{1}{2}(X_t + u_t +  X_{gt} + u_{gt}), \\
& u_t \gets u_t + \alpha (X_t - t), \\
& u_{gt} \gets u_{gt} + \alpha (X_{gt} - t). \\
\end{aligned}
\end{equation}
We repeat the above steps until a satisfactory precision is reached, and read off the final solution from the variables $t$ and $Z$.

\subsection{PGD}

\subsubsection{PGD for the primal problem}

Implementing PGD is rather straightforward. For the initial problem (\ref{eq:projection_onto_PPM}), we simply do the following update:
\begin{equation}
M \gets \text{Proj-onto-Simplex}(M + \alpha U^\top (F - U M)),
\end{equation}
where Proj-onto-Simplex() refers to projection onto the simplex, for which we implemented the algorithm proposed in \cite{condat2016fast}. $\alpha \in \mathbb{R}$ is the step size, a tuning parameter. We perform this update repeatedly until a satisfactory precision is reached.

\subsubsection{PGD for the dual problem}

For the dual problem (\ref{eq:dual_higher}) , the updates we need are
\begin{equation}
\begin{split}
Z &\gets Z - \alpha U^{-1}(U^{-1})^\top Z, \\
t &\gets t - \alpha, \\
(Z, t) &\gets \text{Proj-onto-Polyhedron}((Z, t)),
\end{split}
\end{equation}
where Proj-onto-Polyhedron() refers to projection onto the polyhedron defined by $t \ones - Z \geq N$ in $\mathbb{R}^{q+1}$, while $\alpha \in \mathbb{R}$ is the step size. This is explicitly explained in \ref{projection-onto-polyhedron}. Again, we perform these updates repeatedly until a satisfactory precision is reached, and tune the parameters to achieve the best possible performance.
\subsection{Projection onto the polyhedron $t \ones - Z \geq N$} 
\label{projection-onto-polyhedron}
We would like to solve the following optimization problem:
\begin{gather} \label{proj_poly}
\argmin_{Z \in \mathbb{R}^q, t \in \mathbb{R}} \frac{1}{2} \|(Z, t) - (A, B)\|^2, \\
\text{subject to }~~t \ones - Z \geq N,
\end{gather}
which is the problem of projection onto the polyhedron $t \ones - Z \geq N$ in $\mathbb{R}^{q+1}$.
The Lagrangian of this optimization problem is
\begin{equation}
\mathcal{L} = \frac{1}{2} \|(Z, t) - (A,B)\|^2 + \lambda^\top (Z + N - t\ones),
\end{equation}
where $\lambda \in \mathbb{R}^q$ is the Lagrange multiplier. 
We solve problem \eqref{proj_poly} by solving the dual problem $\max_{\lambda \geq 0} \min_{Z,t}\mathcal{L}$.

We first solve the minimization over variables $Z$ and $t$. It is straightforward to find the closed form solutions:
\begin{equation} \label{proj-poly-var}
Z^* = A - \lambda, \qquad
t^* = B + \ones^\top \lambda.
\end{equation}
Using these expressions, we can rewrite the Lagrangian as
\begin{equation}
\mathcal{L} = -\frac{1}{2}\lambda^\top (I + \ones\ones^\top) \lambda + R^\top \lambda, 
\end{equation}
where $R = A+ N - B \ones$.

Now our goal becomes solving the following optimization problem:
\begin{gather} \label{proj_dual}
\argmin \frac{1}{2} \lambda^\top (I + \ones\ones^\top) \lambda - R^\top \lambda, \\
\text{subject to }~~
\lambda \geq 0.
\end{gather}
The KKT conditions for (\ref{proj_dual}) are
\begin{equation}
\lambda_i + \ones^\top \lambda - R_i - s_i = 0, \ \ \lambda_i \geq 0, \ \ s_i \geq 0, \  \lambda_i s_i = 0, \\ i = 1,..,q,
\end{equation}
where $s_i$ are Lagrange multipliers associated with the constraint $\lambda \geq 0$.

We proceed with sorting the vector $R$ first, and maintain a map $f: \{1, 2, ..., q\} \rightarrow \{1, 2, ..., q\}$ that maps the sorted indices back to the unsorted indices of $R$. Let us call the sorted $R$ by $\tilde{R}$. Then, from the above KKT conditions, it is straightforward to derive the following expression for $\lambda_i$:
\begin{equation} \label{lambda_i}
\lambda_i = 
\begin{cases}
&\tilde{R}_i - \ones^\top \lambda, \ i \geq \tau, \\
&0, \ i < \tau,
\end{cases}
\ \ i = 1, 2, ..., q
\end{equation}
where
\begin{equation} \label{eq:def_of_tau}
\tau = \min \{i \ | \ \tilde{R}_i - \ones^\top \lambda \geq 0\}.
\end{equation}
Then it follows that
\begin{equation} \label{1_lambda}
\begin{aligned}
\ones^\top \lambda = \sum_{i=\tau}^{q} (\tilde{R}_i - \ones^\top \lambda) = \frac{1}{2 + q - \tau} \sum_{i=\tau}^{q} \tilde{R}_i,
\end{aligned}
\end{equation}
and hence we have that
\begin{equation}
c(\tau):= \tilde{R}_\tau - {\bf 1}^\top \lambda = \tilde{R}_\tau -  \frac{1}{2 + q - i} \sum_{j=\tau}^{q} \tilde{R}_j.
\end{equation}
According to \eqref{eq:def_of_tau}, to find $\tau$, we only need to find the smallest value of $i$ that makes $c(i)$ non negative. That is, $\tau = \min \{ i \ | \ c(i) \geq 0 \}$.

Therefore, by sorting the components of $R$ from small to large, and checking $c(i)$ for each component, from large $i$ to small $i$, we can obtain the desired index $\tau$. Combining equations (\ref{lambda_i}) and (\ref{1_lambda}) with $\tau$, we find a solution that equals $\lambda^*$,  the solution to problem (\ref{proj_dual}), apart from a permutation of its components.
We then use our index map $f$ to undo the sorting of the components introduced by sorting $R$.

Finally, by plugging $\lambda^*$ back into equation (\ref{proj-poly-var}), we obtain the desired solution to our problem (\ref{proj_poly}). The whole projection procedure can be done in $\mathcal{O}(q \log q)$, the slowest step being the sorting of $R$.

\section{More results using our algorithm}\label{sec:app:more_results}

In this section, we use our fast projection algorithm to infer phylogenetic trees from frequency of mutation data.

The idea is simple. 
We scan all possible trees, and, for each tree $T$, we project $\hat{F}$ into a PPM for this $T$ using our fast projection algorithm. This gives us a projected $F$ and $M$ such that $F = UM$, the columns of $M$ are in the probability simplex, and $\|\hat{F} - F\|$ is small. Then, we return the tree whose projection yields the smallest $\|\hat{F} - F\|$. Since all of these projections can be done in parallel, we assign the projection for different subsets of the set of all possible trees to different GPU cores. Since we are performing an exhaustive search over all possible trees, we can only infer small trees. As such, when dealing with real-size data, similar to several existing tools, we first cluster the rows of $\hat{F}$, and produce an ``effective'' $\hat{F}$ with a small number of rows. We infer a tree on this reduced input. Each node in our tree is thus associated with multiple mutated positions in the genome, and multiple mutants, depending on the clustering. We cluster the rows of $\hat{F}$ using $k$-means, just like in \cite{malikic2015clonality}. We decide on the numbers of clusters, and hence tree size, based on the same BIC procedure as in \cite{malikic2015clonality}. It is possible that other pre-clustering, and tree-size-selection strategies, yield better results. We call the resulting tool EXACT. 

We note that it is not our goal to show that the PPM is adequate to extract phylogenetic trees from data. This adequacy, and its limits, are well documented in well-cited biology papers. Indeed, several papers provide open-source tools based on the PPM, and show their tools' good performance on
data containing the frequencies of mutation per position in different samples, $\hat{F}$ in our paper. A few tools are PhyloSub \cite{jiao2014inferring}, AncesTree \cite{el2015reconstruction}, CITUP \cite{malikic2015clonality}, PhyloWGS \cite{deshwar2015phylowgs}, Canopy \cite{jiang2016assessing}, SPRUCE \cite{el2016inferring}, rec-BTP \cite{hajirasouliha2014combinatorial}, and LICHeE \cite{popic2015fast}. These papers also discuss the limitations of the PPM regarding inferring evolutionary trees, and others propose extensions to the PPM to capture more complex phenomena, see e.g., \cite{bonizzoni2014and}. 

It is important to further distinguish the focus of our paper from the focus of the papers cited in the paragraph above. In this paper, we start from the fact that the PPM is already being used to infer trees from $\hat{F}$, and with substantiated success. However, all of the existing methods are heuristics, leaving room for improvement. { We identify one subproblem that, if solved very fast, allows us to do { exact PPM-based tree inference} for problems of relevant biological sizes. It is this subproblem, a projection problem in Eq. (3), that is our focus.
We introduce { the first non-iterative algorithm} to solve this projection problem, and show that it is 74$\times$ faster than different optimally-tuned iterative methods.
We are also { the first} to show that a full-exact-enumeration approach to inferring $U$ and $M$ from $\hat{F}$ is possible, in our case, using a GPU and our algorithm to compute and compare the cost of all the possible trees that might explain the data $\hat{F}$.}
EXACT often outperforms the above tools, {none of which does exact inference}. 
Our paper is not about EXACT, whose  development challenges and significance for biology go beyond solving our projection problem, and which is the focus of our future work. 

Despite this difference in purpose, in this section we compare the performance of inferring trees from a full exact search over the space of all possible PPM models with the performance of a few existing algorithms.
In Figure \ref{fig:app:more_results},  we compare EXACT, PhyloWGS, CITUP and AncesTree on recovering the correct ancestry relations on  biological datasets also used by [3]. A total of $30$ { different datasets} \cite{ancestreedata}, i.e., $\hat{F}$, were tested. We use the default parameters in all of the algorithms tested. 

\begin{figure*}[h!]
{\centering 
\includegraphics[trim={0.1cm 0 0 0},clip,width=\textwidth]{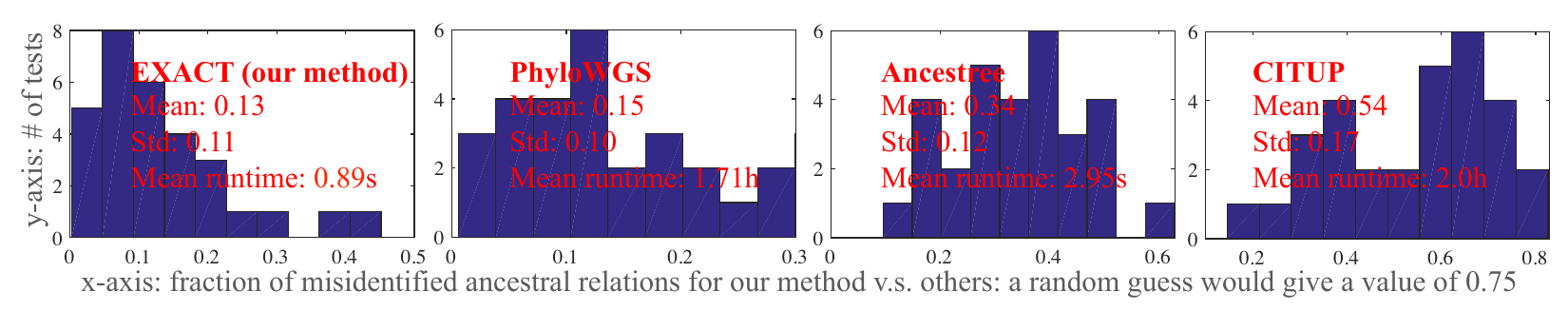}
}
\caption{\small Comparison of different phylogenetic tree inference algorithms.}\label{fig:app:more_results}
\end{figure*}

In each test, and for every pair of mutations $i$ and $j$, we use the tree output by each tool to determine if (a) $i$ is an ancestor of $j$ or if $j$ is an ancestor of $i$, (b) if $i$ and $j$ are in the same node, (c) if either $i$ or $j$ are missing in the tree, or,  otherwise, (d) if $i$ and $j$ are {incomparable}. We give these four possible ancestral relations, the following names: \emph{ancestral}, \emph{clustered},  \emph{missing}, and \emph{incomparable}. A random guess correctly identifies $25$\% of the ancestral categories, on average. If the fraction of misidentified relations is $0$, the output tree equals the ground-truth tree. All methods do better than random guesses. 

For example, in Figure \ref{fig:app:example_of_reconstruction}, according to EXACT, mutation $63$, at the root, is an ancestor of mutation $57$, at node $3$. However, according to the ground truth, in Figure \ref{fig:app:example_of_reconstruction}, they belong to the same node. So, as far as comparing $63$ with  $57$ goes, EXACT makes a mistake. As another example, according tp EXACT, mutations $91$ and $55$ are incomparable, while according to the ground truth, $91$ is a descendent of $55$. Hence, as far as comparing $91$ with $55$ goes, EXACT makes another mistake.
The fraction of errors, per ancestral relation error type, that each of these tools makes is: EXACT = $\{23\%,10\%,0\%,13\%\}$; PhyloWGS = $\{3\%,2\%,0\%,1\%\}$; AncesTree = $\{54\%,16\%,95\%,25\% \}$; CITUP = $\{27\%,13\%,0\%,21\%\}$.

In our experiments, EXACT performs, on average, better than the other three methods. PhyloWGS  performs close to EXACT, however, it has a much longer run time. %
Although AncesTree does fairly well in terms of accuracy, we observe that it often returns trees with the same topology, a star-shaped tree. The other methods, produce trees whose topology seems to be more strongly linked to the input data. Finally, AncesTree's inferred tree does not cover all of the existing mutations. This behaviour is expected, as, by construction, AncesTree tries to find the largest tree that can be explained with the PPM. See Figure \ref{fig:app:example_of_reconstruction}, and Figure \ref{fig:app:ground_truth_tree}, for an example of the output produced by different algorithms, and the corresponding ground truth.

\begin{figure*}[h!]
{\centering 
\includegraphics[trim={0.0cm 0 0 0},clip,width=\textwidth]{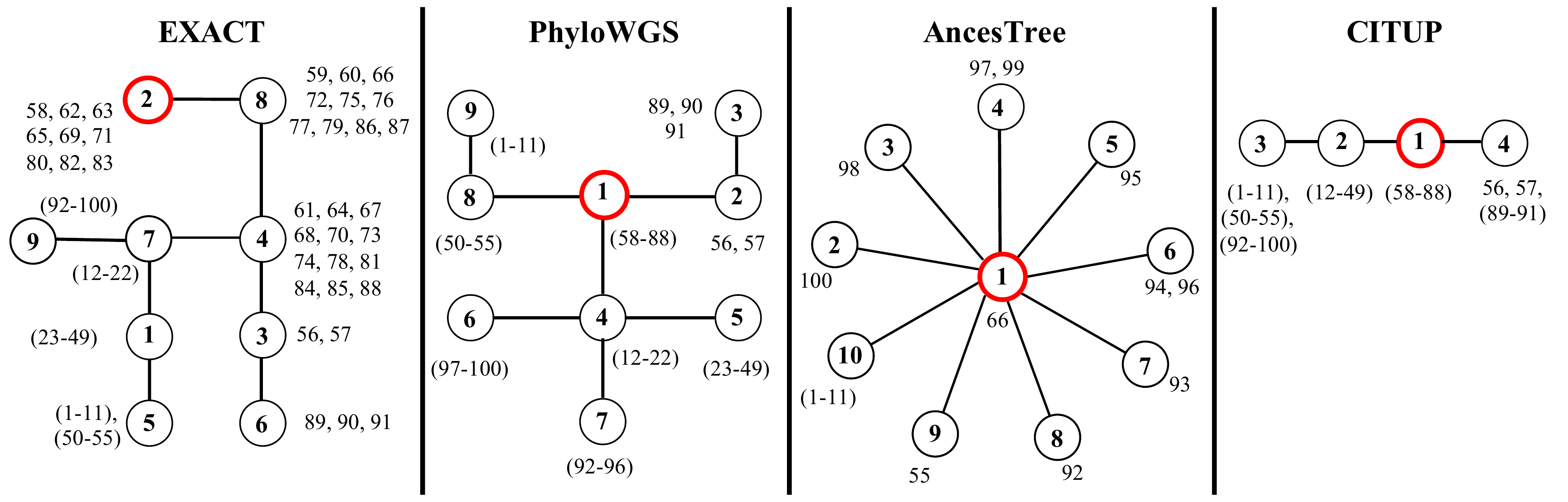}
}
\caption{ \small Tree reconstructed by different algorithms for the first file in the folder \cite{ancestreedata}. AncesTree often outputs star-shaped trees. The small numbers listed next to each node represent mutations. Mutations indexed by the same number in different trees are the same real mutation. The root of each tree is circled in thick red. Nodes are labeled by numbers, and these labels are assigned automatically by each tool. Labels of different trees are incomparable.  }\label{fig:app:example_of_reconstruction}
\end{figure*}

\begin{figure}[h!]
{\begin{center}
\includegraphics[trim={0.0cm 0 0 0},clip,width=3.5cm]{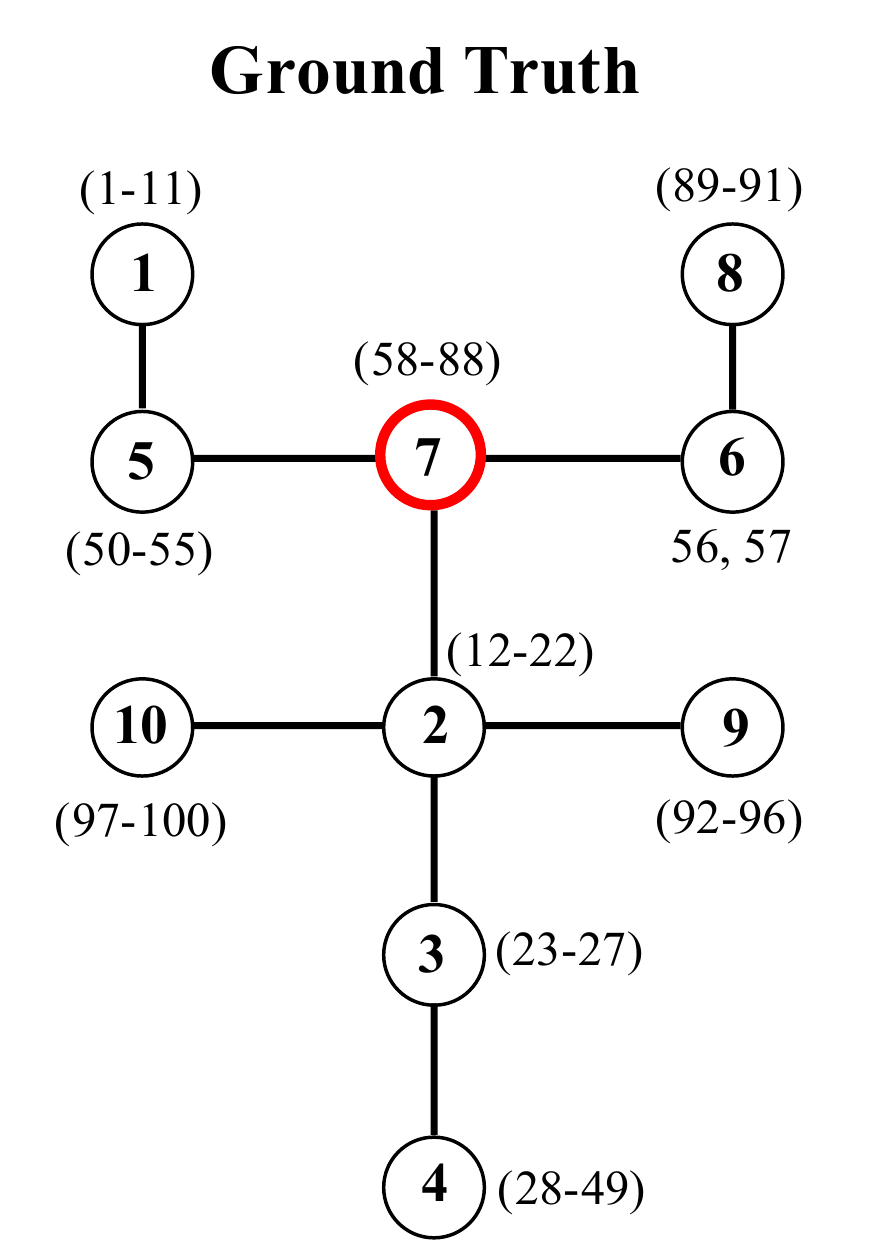}
\end{center}
}
\caption{ \small Ground truth tree for the input file that generated Figure \ref{fig:app:example_of_reconstruction}. The small numbers listed next to each node represent mutations. Mutations indexed by the same number in different trees are the same real mutation. The root of each tree is circled in thick red. Nodes are labeled by numbers, and these labels are assigned automatically by each tool. Labels of different trees are incomparable.  }\label{fig:app:ground_truth_tree}
\end{figure}

We end this section by discussing a few extra properties that distinguished an approach like EXACT from the existing tools.
Because our algorithm's speed allows a complete enumeration of all of the trees, EXACT { has  two unique properties}. First, EXACT  can exactly solve 
\begin{equation}\label{eq:app:gen_likelihood}
\min_{U \in \mathcal{U}}  \mathcal{J}(\mathcal{C}(U)) + \mathcal{Q}(U),
\end{equation}
where $U$ encodes ancestral relations, $\mathcal{C}(U)$ is the fitness cost as defined in our paper,  $\mathcal{J}$ is an arbitrary, fast-to-compute, 1D scaling function, and $\mathcal{Q}(U)$ is an arbitrary, fast-to-compute, tree-topology penalty function. No other tool has this flexibility. Second, EXACT can find the $k$ trees with the smallest objective value in \eqref{eq:app:gen_likelihood}. A few existing tools can output multiple trees, but only when these all have the same ``heuristically-optimal'' objective value. This feature is very important because, given that the input data is noisy, and the number of samples is often small, it allows, e.g., one to give a confidence score for the ancestry relations in the output tree. Furthermore, experiments show that the ground-truth tree can often be found among these $k$ best trees. Hence, using other biological principles, the ground-truth tree can often be identified from this set. Outputting just ``heuristically-optimal'' trees prevents this finding.

\end{document}